\documentclass[acmsmall]{acmart}\settopmatter{printfolios=true,printccs=false,printacmref=false}

\startPage{1}

\acmJournal{PACMPL}
\acmVolume{1}
\acmNumber{XXXX} % CONF = POPL or ICFP or OOPSLA
\acmArticle{1}
\acmYear{2025}
\acmMonth{1}
\acmDOI{} % \acmDOI{10.1145/nnnnnnn.nnnnnnn}

\copyrightyear{2025}
\setcopyright{rightsretained}

\bibliographystyle{ACM-Reference-Format}
\citestyle{acmauthoryear}

\usepackage{mathtools}
\usepackage[capitalise]{cleveref}
\usepackage{tikz}
\usetikzlibrary{calc}
\usepackage{xifthen}
\usepackage{makecell}
\usepackage{stmaryrd}
\usepackage{xspace}
\usepackage{subcaption}
\usepackage{xstring} % for \StrBefore and \StrBehind

% refcount is only used for \getrefnumber. If that macro is unused, this
% package can be removed.
\usepackage{refcount}

% Get figures with borders (using the 'float' package)
\floatstyle{boxed}
\restylefloat{figure}

% symbols

\newcommand\R{\mathbb R}
\newcommand\Z{\mathbb Z}
\newcommand\coloneqqq{\mathbin{\raisebox{0.04em}{::}\hspace{-0.25em}=}}

% functions
\newcommand\code[1]{\texttt{#1}}

% names
\newcommand\hordead{\code{horde-ad}\xspace}
\newcommand\BOT{\textsc{bot}\xspace}
\newcommand\botnamelc{bulk-operation transformation\xspace}
\newcommand\botnametc{Bulk-Operation Transformation\xspace}

\newcommand\implementationLink{\url{https://hackage.haskell.org/package/horde-ad}}

% references
\newcommand\previoussection[2]{\pgfmathparse{\getrefnumber{#1}==\value{section}-1}\ifnum \pgfmathresult>0\relax #2\else \cref{#1}\fi}
\makeatletter
\newcommand\previoussubsection[2]{%
    \StrBefore{\getrefnumber{#1}.}{.}[\@previoussubsectionSection]%
    \StrBehind{\getrefnumber{#1}.0}{.}[\@previoussubsectionSubsection]%
    \pgfmathparse{\@previoussubsectionSection==\value{section}&&\@previoussubsectionSubsection==\value{subsection}-1}%
    \ifnum \pgfmathresult>0\relax #2\else \cref{#1}\fi
}
\makeatother
\newcommand\nextsection[2]{\pgfmathparse{\getrefnumber{#1}==\value{section}+1}\ifnum \pgfmathresult>0\relax #2\else \cref{#1}\fi}

% notation
\newcommand\cc[1]{\textsf{\textcolor{blue!40!black}{#1}}}  % Core Combinator
\newcommand\clam[1]{\lambda #1.\ }  % Core lambda
\newcommand\cletnosp{\textbf{\cc{let}}}
\newcommand\clet{\cletnosp\ }
\newcommand\cinnosp{\textbf{\cc{in}}}
\newcommand\cin{\cinnosp\ }
\newcommand\cletin[1]{\clet #1\ \cin}
\newcommand\ctranspx[2]{{#1}_{#2}}
\newcommand\ctransp[1]{\ctranspx{\cc{tr}}{#1}}

\newcommand\tyin{\tau_{\mathrm{in}}}

\newcommand\tyout{\tau_{\mathrm{out}}}

\newcommand\hslam[1]{\textrm{\textbackslash} #1 \to}
% if we use >>= it should probably also be \texttt
% \newcommand\hsbind{\mathbin{{>}\mkern-6mu{>}\mkern-7mu{=}}}
\newcommand\hsklcomp{\mathbin{\texttt{>=>}}}
\newcommand\name[1]{\text{#1}}
\newcommand\var[1]{\mathit{#1}}
\newcommand\keyw[1]{\mathbf{#1}}
\newcommand\Array{\name{Array}}

\newcommand\DArray{\name{DArray}}
\newcommand\Float{\name{Float}}
\newcommand\Map{\name{Map}}
\newcommand\Int{\name{Int}}
\newcommand\Bool{\name{Bool}}
\newcommand\Ix{\name{Ix}}
\newcommand\IZ{\name{IZ}}
\newcommand\IS{:::}
\newcommand\ASTIx{\name{ASTIx}}
\newcommand\ASTIxFun{\name{ASTIxFun}}
\newcommand\Sh{\name{Sh}}
\newcommand\ShZ{\name{SZ}}
\newcommand\ShS{\mathbin{\text{:\$:}}}
\newcommand\tVarName{\name{VarName}}
\newcommand\tDelta{\textsf{Delta}}

\newcommand\tSymDelta{\textsf{SymDelta}}
\newcommand\tDVarName{\textsf{DVarName}}
\newcommand\tID{\textsf{ID}}
\newcommand\tASTID{\mathrm{ID}_{\mathrm{AST}}}
\newcommand\genid{\underline{\textsc{GenID}}}
\newcommand\mgenid{\mathrm{genID}}
\newcommand\mgenastid{\mathrm{genID}_{\mathrm{AST}}}
\newcommand\tIdGen{\mathrm{IdGen}}
\newcommand\del[1]{\textsf{#1}}
\newcommand\eval{\mathit{eval}}
\newcommand\backprop{\mathit{backprop}}
\newcommand\reversePass{\mathit{reversePass}}
\newcommand\unshare{\mathit{unshare}}
\newcommand\shareToLet{\mathit{shareToLet}}
\newcommand\stackLets{\mathit{stackLets}}
\newcommand\id{\mathsf{id}}
\newcommand\EState{\name{ES}}
\newcommand\shapeDelta{\name{shapeDelta}}
\newcommand\cTensor{\ensuremath{\name{BaseTensor}}\xspace}
\newcommand\tAST{\name{AST}}
\newcommand\tADVal{\name{ADVal}}
\newcommand\tASTVarName{\name{ASTVarName}}
\newcommand\interpret{\mathit{interpret}}
\newcommand\interpretIx{\mathit{interIx}}
\newcommand\interpretIxFun{\mathit{interIxFun}}
\newcommand\tDMap{\name{DMap}}
\newcommand\tDMaptwo{\name{DMap}_2}
\newcommand\dmEmpty{\name{DMap}.\name{empty}}
\newcommand\dmInsert{\name{DMap}.\name{insert}}
\newcommand\dmInsertWith{\name{DMap}.\name{insertWith}}
\newcommand\dmLookup{\name{DMap}.\name{lookup}}
\newcommand\dmIndex{\mathbin{\name{DMap}.!}}
\newcommand\dmMaxViewWithKey{\name{DMap}.\name{maxViewWithKey}}
\newcommand\dmDelete{\name{DMap}.\name{delete}}
\newcommand\dmtwoEmpty{\name{DMap}_2.\name{empty}}
\newcommand\dmtwoInsert{\name{DMap}_2.\name{insert}}
\newcommand\dmtwoLookup{\name{DMap}_2.\name{lookup}}
\newcommand\arrlit[1]{\ensuremath{[ #1 ]}}
\newcommand\ixvec[1]{\ensuremath{[ #1 ]}}
\newcommand\shvec[1]{\ensuremath{[ #1 ]}}
\newcommand\ravel[1]{\ensuremath{[ #1 ]}}
\newcommand\vecadd{\mathbin{\vec+}}
\newcommand\vecmul{\mathbin{\vec\cdot}}

\newcommand\TDN{\name{TDN}}
\newcommand\mergeshare{\mathbin{{\cup}\scalebox{0.7}{\raisebox{0.15em}{$\downarrow$}}}}
\newcommand\ndD{D'}

\newcommand\ccomment[1]{\textit{\textcolor{gray}{--- #1}}}

\newcommand\leadcomma{\mathllap{,\mkern2mu}}
% https://tex.stackexchange.com/questions/83930/what-are-the-different-kinds-of-boxes-in-latex
\newcommand\tightunderbrace[2]{\parbox[t]{\widthof{\(#1\)}}{\(#1\)\vspace{-0.7em}\\\upbracefill\vspace{-0.4em}\\\centering{\scriptsize\(#2\)}}}

% https://tex.stackexchange.com/a/612073
\newenvironment{sizeddisplay}[1]
  {\par\nopagebreak\setlength\lineskip{0pt}#1\noindent\ignorespaces}
  {\nopagebreak\ignorespacesafterend}

% review markers
% \newcommand\TS[1]{\textcolor{blue}{\textit{TS: #1}}}
% \newcommand\MK[1]{\textcolor{brown}{\textit{MK: #1}}}
% \newcommand\simon[1]{\textcolor{blue}{\textbf{\emph{SLPJ: #1}}}}
% \newcommand\TODO[1]{\textcolor{red}{\textbf{\emph{TODO: #1}}}}
% \newcommand\TODOfootnote[1]{\footnote{\textcolor{red}{\textbf{\emph{TODO: #1}}}}}
\newcommand\TS[1]{}\newcommand\MK[1]{}\newcommand\simon[1]{}\newcommand\TODO[1]{}\newcommand\TODOfootnote[1]{}

% BOT rules
% \ietrule{LHS}{RHS}{condition}
% condition can be empty
\newenvironment{ietarray}{\begin{array}{@{}r@{\quad}c@{\quad}l@{}}}{\end{array}}
\newcommand\ietrule[3]{
    #1 &
    \ifthenelse{\isempty{#3}}
        {{\leadsto}}
        {% cannot use \setlength because mathtools is loaded: https://tex.stackexchange.com/a/188221
         \thinmuskip=0mu\relax\medmuskip=0mu\relax\thickmuskip=0mu\relax
         \underset{\mathclap{\textit{(#3)}}}{{\leadsto}}}
    & #2
}
\newcommand\multilineT[1]{\begin{array}[t]{@{}l@{}} #1 \end{array}}
\newcommand\multilineM[1]{\begin{array}{@{}l@{}} #1 \end{array}}

% Inference rules for typing
\newcommand\jisindex[3]{#1 \vdash #2 \text{ is an } #3\text{-dim.\ index}}
\newcommand\jisnumeric[1]{#1 \text{ numeric}}
\newcommand\infrule[2]{\frac{\begin{array}{@{}c@{}} #1 \end{array}}{\begin{array}{@{}c@{}} #2 \end{array}}} % inference rule
\newcommand\precsep{\qquad} % precondition separator
\newcommand\precdots{\;\; \ldots \;\;} % dots in precondition list
\newcommand\rulesep{\qquad} % rule separator

\begin{document}

\title{Dual-Numbers Reverse AD for Functional Array Languages}
\titlenote{%
    This technical report documents our progress in applying the dual-numbers construction to reverse AD for array programs.
    While there is material for more than a single paper here, we are still unsatisfied with the elegance of the algorithm design, and we do not yet know if these algorithms can be improved (without extensive special-casing) to reach performance competitive with popular machine learning frameworks.
    If you would like to collaborate with us in improving the algorithms or the presentation as one or more papers, please reach out!
}
\subtitle{Technical Report, July 2025}

\author{Tom Smeding}
\orcid{0000-0002-4986-6820}
\affiliation{%
    \department{Department of Information and Computing Sciences}
    \institution{Utrecht University}
    \city{Utrecht}
    \country{The Netherlands}
}
\email{t.j.smeding@uu.nl}
\authornote{Equal contribution}

\author{Miko\l aj Konarski}
\orcid{0009-0008-7585-7590}
\affiliation{%
    \institution{Well-Typed}
    \country{UK}
}
\email{mikolaj@well-typed.com}
\authornotemark[2]

\author{Simon Peyton Jones}
\orcid{0000-0002-6085-1435}
\affiliation{
    \institution{Epic Games}
    \country{USA}
}
\email{simon.peytonjones@gmail.com}

\author{Andrew Fitzgibbon}
\orcid{0000-0002-9839-660X}
\affiliation{
    \institution{Graphcore}
    \country{UK}
}
\email{awf@graphcore.ai}

\begin{abstract}
The standard dual-numbers construction works well for forward-mode automatic differentiation (AD) and is attractive due to its simplicity; recently, it also has been adapted to reverse-mode AD, but practical performance, especially on array programs, leaves a lot to be desired.
In this paper we introduce first-class support for multidimensional arrays in dual-numbers reverse-mode AD with little to no performance overhead.
The algorithm consists of three loosely-coupled components: a semantics-preserving vectorisation code transformation (the \emph{bulk-operation transform} or \BOT), a fairly straightforward lifting of the basic dual-numbers reverse AD algorithm to a mostly first-order array language, and symbolic interpretation to achieve an end-to-end compilation pipeline.
Unfortunately, we lose some of the nice generalisable aspects of dual-numbers AD in the process, most importantly support for higher-order code.

We do support some higher-order array combinators, but only a carefully-chosen set: `build' (elementwise array construction), `gather' and `scatter'.
In return, the \BOT can eliminate the \emph{essential} (for AD) higher-orderness of the input program, meaning that AD gets essentially presented with a first-order program.
This allows the naive trick of lifting dual numbers to ``dual arrays'' to work without much modification.
\end{abstract}

%%
%% The code below is generated by the tool at http://dl.acm.org/ccs.cfm.
%% Please copy and paste the code instead of the example below.
%%
\begin{CCSXML}
<ccs2012>
 <concept>
  <concept_id>10002950.10003714.10003715.10003748</concept_id>
  <concept_desc>Mathematics of computing~Automatic differentiation</concept_desc>
  <concept_significance>500</concept_significance>
 </concept>
 <concept>
  <concept_id>10011007.10011006.10011008.10011009.10011012</concept_id>
  <concept_desc>Software and its engineering~Functional languages</concept_desc>
  <concept_significance>300</concept_significance>
 </concept>
</ccs2012>
\end{CCSXML}

\ccsdesc[500]{Mathematics of computing~Automatic differentiation}
\ccsdesc[300]{Software and its engineering~Functional languages}

\keywords{automatic differentiation, functional programming, array programming}

\maketitle

\section{Introduction}

The classical dual-numbers technique for forward-mode automatic differentiation (AD) is remarkably simple, yet generalises to very expressive, higher-order languages without effort.
The technique can be extended to reverse-mode AD as well~\cite{2022-krawiec-dualrev,2023-smeding-dualrev}, introducing a little more complexity but nevertheless still scaling well to expressive languages.
Furthermore, dual-numbers reverse AD is simple enough that it admits a full correctness proof (e.g.~\cite{2020-ad-gluing,2022-krawiec-dualrev,2022-ad-logical-relations}).
However, it also has a big problem in practice: it is unacceptably inefficient for array programs, because
it differentiates each scalar operation individually.
% in practice because it does not differentiate bulk array operations as-is, but instead differentiates each scalar operation individually.
% Using this on a program that computes with large arrays results in excessive overheads, both in memory and in execution time.
% In fact, this situation is quite common:
As far as we know, all published reverse AD algorithms so far \emph{either} have a proof of correctness,
\emph{or} are fast --- but never both at the same time.

In this paper, we improve this situation by fixing the performance problem of dual-numbers reverse AD for array programs.
% : we introduce first-class support for arrays in dual-numbers reverse AD and thereby achieve high-performance output code for array programs that fit our language.
The algorithm retains a simple core that is mostly unchanged from naive dual-numbers reverse AD, a correctness proof for which would be a straightforward extension of the proofs referenced above.\footnote{The algorithm also includes pre- and post-processing stages, which are rewrite systems that fairly directly follow from the equational theory on arrays and are thus semantics-preserving.}
However, in lifting the algorithm to arrays, we unfortunately lose the effortless generality of the naive approach: in particular, we can no longer support higher-order input code, nor full dynamic control flow, and we have to limit the primitive higher-order array operations.
% (more precisely called \emph{second-order array combinators}, or SOACs) to (1) `build', and (2) any combinators that do not compute with floating-point scalars, such as `gather' and `scatter'.
Nevertheless, the language still allows programmers to use element-wise computation (rather than
forcing them to use bulk operations) and is amply expressive for many array applications.

While we present the AD algorithm as source-to-source transformations in this paper, our Haskell implementation (\hordead, see \cref{sec:implementation}) is based on type class instantiations, not unlike the very pretty presentation of dual-numbers forward AD in~\cite{2009-elliott-typeclass-ad}.
We further reuse this type class infrastructure to implement staging and symbolic execution in a single instance.
An in-depth description of these topics can be found in appendices linked from \cref{sec:implementation}.

% In return for these losses, we manage to stay very close to naive dual-numbers AD in one particular way.
% Dual-numbers forward-mode AD traditionally arises from a type class instantiation in functional programming~\cite{2009-elliott-typeclass-ad} (or operator overloading in imperative programming).
% This still holds for dual-numbers reverse-mode AD as presented in~\cite{2022-krawiec-dualrev,2023-smeding-dualrev}, and interestingly, we manage to make it still hold for our dual-\emph{arrays} reverse-mode AD algorithm.

The main contributions of this paper are as follows:
\begin{itemize}
\item A concise, new presentation of dual-numbers reverse AD, incorporating the lessons of \cite{2022-krawiec-dualrev,2023-smeding-dualrev} (\cref{sec:ad-naive}).
\item An aggressive vectorisation transform (the \emph{\botnamelc}), described in \cref{sec:bot}, that transforms the programs we want to \emph{write} (which use element-at-a-time computation) into the programs we want to \emph{differentiate} (which use bulk operations only).
\item A source-to-source transformation that lifts the ``dual-number'' approach to AD
  to ``dual arrays'' (\cref{sec:ad-dual-arrays}).
\item An analysis on the structure of the output of the algorithm, that allows us to make the differentiation algorithm fully symbolic (compile-time), eliminating all runtime overhead of the differentiation algorithm over the actual gradient code (\cref{sec:restaging}).
\item The type class implementation of the algorithms, detailed in \cref{sec:implementation} and thereafter.
\end{itemize}

\section{Background: Dual-Numbers Reverse-Mode AD}\label{sec:ad-naive}

The reverse AD algorithm used in this paper is mostly an extension of the scalar-level algorithm described in the literature~\cite{2022-krawiec-dualrev,2023-smeding-dualrev}, but we make a few small changes in its implementation details.
In this section we describe the scalar-level algorithm that we build on; we lift this algorithm to arrays in \cref{sec:ad-dual-arrays}.

\subsection{Input language}\label{sec:ad-naive-input-language}

Assume a simply-typed lambda calculus with products and ground types $\R$, $\Int$ with their standard primitive operations:\footnote{We use ``$\R$'' to denote the type of floating-point numbers in use by the program, e.g.\ \texttt{double}.}
\[ \begin{array}{@{}r@{\;}r@{\;}l@{}}
    \sigma, \tau &\coloneqqq& \R \mid \Int \mid (\sigma, \tau) \mid \sigma \to \tau \\
    s, t, u, v &\coloneqqq& r \mid k \mid x \mid \mathbf{let}\ x = s\ \mathbf{in}\ t\mid (s, t) \mid \name{fst}\ t \mid \name{snd}\ t \\
    &\mid& \lambda x.\ t \mid s\ t \mid \mathbf{if}\ t > 0\ \mathbf{then}\ u\ \mathbf{else}\ v \mid \mathit{op}(t_1, \ldots, t_n)
\end{array} \]
where $r$ stands for a scalar constant (i.e.\ of type $\R$), $k$ stands for an integral constant, $x$ is a variable reference, and $\mathit{op}$ stands for any arithmetic operation such as $+_\R$, $\times_\R$, $+_{\Int}$, $\name{round} : \R \to \Int$, etc.
The word ``scalar'' will always refer to a \emph{real scalar}, i.e.\ a value of type $\R$, in this paper.
The let-binding construct can be recursive without problems, as long as we are only asked to differentiate input programs that do in fact terminate.
We assume call-by-value evaluation semantics in this paper.
Polymorphism and various further language extensions could be supported by the naive algorithm described in this section, but since the extension to arrays in the main contribution of this paper does not easily support such extensions, we refrain from over-generalising here.

While this language is higher-order (it has full lambda abstraction and application), the top-level program being differentiated (i.e.\ the model to be trained, or the function to be optimised, etc.) must have \emph{first-order} input and output types.
That is to say: the input program can use function values internally as much as it likes, but we do not define what it means to take a derivative \emph{with respect to} a function, or the derivative of a function value with respect to something else.
Hence, the top-level program to differentiate must have a type of the form $\sigma_1 \to \cdots \to \sigma_n \to \tau$, where none of $\sigma_1,\ldots,\sigma_n,\tau$ mention the function arrow `$\to$'.
In fact, we will assume for simplicity that the type is $\tyin \to \R$ for first-order $\tyin$.
The restriction to a single input is without loss of generality because the language supports pairs; we further restrict the output to to a single scalar because (1) this is by far the most common case in applications of reverse AD, (2) it simplifies the wrappers (interfaces) around the core algorithm, and (3) generalisation to more general, yet still first-order, output types $\tyout$ is straightforward (see \cref{app:output-type-generalisation}).

\subsection{Code transformation} \label{sec:transform}

Automatic differentiation is performed by a source-to-source program transformation $D$,
given in \cref{fig:naive-dn-ad-terms}.  A source term $t : \tau$ is transformed to a dual-number
target term $D[t]:D[\tau]$.  Every real number of type $\R$ in $t$ becomes a dual number $(\R,\tDelta)$ in $D[t]$;
the type translation is given in \cref{subfig:naive-dn-ad-types}.  The type $\tDelta$
(defined in  \cref{subfig:naive-dn-ad-delta}) describes the derivative of the number it is
paired with: see \cref{sec:ad-naive-delta}.

Both transformations are quite simple, recursing over the structure of terms
and types respectively.
The only place where the transformation ``does something'' is where the program directly manipulates scalars;
see the upper group of rules in \cref{fig:naive-dn-ad-terms}.
In a sense, the source program is seen as nothing more than some procedure that once in a while performs some computation on scalars, and these computations are all that we are interested in.\footnote{This is what allows the algorithm to accept higher-order code without any work.}
Similarly, the type transformation (\cref{subfig:naive-dn-ad-types}) mostly just recurses over the structure of the type, except for scalars $\R$, which are mapped to a \emph{pair} of a scalar and a value of type $\tDelta$ (\cref{subfig:naive-dn-ad-delta}, explained in \cref{sec:ad-naive-delta}).

\begin{figure}
    \begin{subfigure}{0.48\textwidth}
    \begin{center}
    \( \multilineT{
        D[\R] = (\R, \tDelta) \\
        D[\Int] = \Int \\
        D[(\sigma, \tau)] = (D[\sigma], D[\tau]) \\
        D[\sigma \to \tau] = D[\sigma] \to D[\tau]
    } \)
    \end{center}
    \caption{\label{subfig:naive-dn-ad-types}
        The type transformation.
    }
    \end{subfigure}
    \begin{subfigure}{0.48\textwidth}
    \begin{center}
    \( \mathbf{data}\ \tDelta \begin{array}[t]{@{\ }c@{\ }l}
        = & \del{Zero} \\
        | & \del{Input}\ \tDVarName \\
        | & \del{Add}\ \tDelta\ \tDelta \\
        | & \del{Scale}\ \R\ \tDelta
    \end{array} \)
    \end{center}
    \caption{\label{subfig:naive-dn-ad-delta}
        Defunctionalised forward derivatives.
    }
    \end{subfigure}
    \caption{\label{fig:naive-dn-ad}
        Types for naive, scalar-level dual-numbers reverse AD; presentation after~\cite{2022-krawiec-dualrev}.
    }
\end{figure}

\begin{figure}
    \begin{gather*}
    x_1 : \tau_1, \ldots, x_n : \tau_n \vdash t : \tau \quad \leadsto \quad x_1 : D[\tau_1], \ldots, x_n : D[\tau_n] \vdash D[t] : D[\tau] \\
    \begin{array}{@{}l@{}}
        \ccomment{Primitive operations on real numbers include derivative computations:} \\[0.3em]
        \begin{array}{@{}r@{\;}c@{\;}l@{}}
            D[r]
                &=& (r, \del{Zero}) \\
            D[\sin t]
                &=& \multilineT{
                    \mathbf{let}\ (x, d) = D[t] \\
                    \mathbf{in}\ (\sin x, \del{Scale}\ (\cos x)\ d)
                } \\
            D[t_1 +_{\R} t_2]
                &=& \multilineT{
                    \mathbf{let}\ (x_1, d_1) = D[t_1]; (x_2, d_2) = D[t_2] \\
                    \mathbf{in}\ (x_1 +_{\R} x_2, \del{Add}\ d_1\ d_2)
                } \\
            D[t_1 \times_{\R} t_2]
                &=& \multilineT{
                    \mathbf{let}\ (x_1, d_1) = D[t_1]; (x_2, d_2) = D[t_2] \\
                    \mathbf{in}\ (x_1 \times_{\R} x_2, \del{Add}\ (\del{Scale}\ x_2\ d_1)\ (\del{Scale}\ x_1\ d_2))
                }
            \vspace{0.3em}
        \end{array} \\
        \ccomment{In general for $t_1 : \R, \ldots, t_n : \R$ and $\mathit{op}(t_1, \ldots, t_n) : \R$:} \\[0.3em]
        \begin{array}{@{}r@{\;}c@{\;}l@{}}
            D[\mathit{op}(t_1, \ldots, t_n)]
                &=& \multilineT{
                    \mathbf{let}\ (x_1, d_1) = D[t_1]; \ldots; (x_n, d_n) = D[t_n] \\
                    \mathbf{in}\ (\multilineT{
                        \mathit{op}(x_1, \ldots, x_n) \\
                        \leadcomma \del{Add}\ (\del{Scale}\ \bigl(\frac{\partial \mathit{op}(x_1,\ldots,x_n)}{x_1}\bigr)\ d_1)\ (\del{Add}\ \ldots\ (\del{Scale}\ \bigl(\frac{\partial \mathit{op}(x_1,\ldots,x_n)}{x_n}\bigr)\ d_n)))
                    }
                }
            \vspace{0.3em}
        \end{array} \\
        \ccomment{Everything else is structure-preserving:} \\[0.3em]
        \begin{array}{@{}r@{\;}c@{\;}l@{\qquad}r@{\;}c@{\;}l@{}}
            D[x] &=& x
                & D[k : \Z] &=& k \\
            D[(s, t)] &=& (D[s], D[t])
                & D[s \times_{\Z} t] &=& D[s] \times_{\Z} D[t] \\
            D[\name{fst}\ t] &=& \name{fst}\ D[t]
                & D[\mathbf{let}\ x = s\ \mathbf{in}\ t] &=& \mathbf{let}\ x = D[s]\ \mathbf{in}\ D[t] \\
            D[\lambda x.\ t] &=& \lambda x.\ D[t]
                & D[\mathbf{if}\ t > 0\ \mathbf{then}\ u\ \mathbf{else}\ v] &=& \mathbf{if}\ \name{fst}\ D[t] > 0\ \mathbf{then}\ D[u]\ \mathbf{else}\ v \\
            D[s\ t] &=& D[s]\ D[t] \\
        \end{array}
    \end{array}
    \end{gather*}
    \caption{\label{fig:naive-dn-ad-terms}
        Selected rules from the code transformation for dual-numbers reverse AD, complementing \cref{fig:naive-dn-ad}.
    }
\end{figure}

When looking at the term transformation in \cref{fig:naive-dn-ad-terms}, one should note that the first component of an $(\R, \tDelta)$ pair is always equal to the $\R$ value that would have been computed in the source program; it is
often called the \emph{primal}.
Thus, every intermediate value computed by the source program is also computed by the transformed program, and in the same order.

\paragraph{Larger target language}
The code transformation $D[-]$ from \cref{fig:naive-dn-ad-terms} maps from the small input language from \cref{sec:ad-naive-input-language} into a larger output language, that also includes the $\tDelta$ type and its constructors.
This is typical for automatic differentiation: a particular term language is not necessarily closed under differentiation in the first place (e.g.\ $\times_\R$ begets addition; `$\log$' begets division; `$\arcsin$' begets `$\name{sqrt}$'), especially reverse-mode differentiation (where e.g.\ product constructors beget product projections and vice-versa).
Furthermore, with dual-numbers reverse-mode AD, which looks a lot like tracing AD~\cite{2023-smeding-dualrev}, there must be some way to represent this trace (i.e.\ the $\tDelta$ data type), which is something that did not exist in the source program.

In the rest of this paper, the differentiating code transformation will continue to map into a larger language (its type system and semantics will always be clear from context).
This means that the algorithm in this paper cannot be used directly for higher derivatives via iterated differentiation; we leave computation of higher derivatives by computing a longer Taylor series prefix (see e.g.~\cite{2022-higher-order-derivs}) to future work.

\subsection[Delta terms]{$\tDelta$ terms}\label{sec:ad-naive-delta}

\newcommand{\dir}{\varepsilon}

In this section we focus on the mysterious $\tDelta$ that is the second component of
each dual number.  As we shall see, it represents the derivative of the first (primal) component.
First, we need to establish some notation.  Given a function\footnote{For now we consider only
  functions of type $\R^n \to \R$.  Generalising to richer input and output types is straightforward, but
  adds a lot of notational clutter.}
$f : \R^n \to \R$, the vector
\[
    \nabla_v\, f \;=\; \left( \frac{\partial f(v)}{\partial v_1}, \ldots, \frac{\partial f(v)}{\partial v_n} \right)
\]
is the vector of partial derivatives of $f$ at input $v:\R^n$, with respect to each of its $n$ inputs $v = (v_1 \ldots v_n)$.  That is, the $i$'th component of the vector $\nabla_v\, f$ describes how the output varies as you vary $v_i$.
$\nabla f$ is the derivative of $f$ in the following sense. For any small vector $\dir:\R^n$,
\begin{equation}
    (\nabla_v\ f) \odot \dir \; \approx \; f(v+\dir) - f(v)
    \label{eq:finite-difference}
\end{equation}
That is, the dot-product $\odot$ of $\nabla_v\, f$ with any small vector $\dir$ multiplies each partial
derivative of $f$ by the corresponding component of $\dir$, and adds the results, to give (approximately)
the difference between $f(v+\dir)$ and $f(v)$.
In other words, $\dir \mapsto (\nabla_v\ f) \odot \dir$ is the best local linear approximation of $f$ at $v$, also known as a \emph{directional derivative}.

The goal of AD is to compute $\nabla_v\, f$.  We do so in two steps. First, we use the transformation
in \cref{sec:transform} to transform the function $f : \R^n \to \R$ into $f' : (\R, \tDelta)^n \to (\R, \tDelta)$,\footnote{This $f'$ implements the forward derivative, also known as \emph{total derivative}, of $f$.}
and then we use an \emph{evaluator} to convert the $\tDelta$ to an actual derivative.
The correctness criterion for the transformation is this:
\[ \begin{array}{ll}
    \text{if} & f'((v_1,\del{Input}\ 1), \ldots, (v_n,\del{Input}\ n)) = (r, d) \\
    \text{then} & r = f(v_1, \ldots, v_n)\\
    \text{and}  & \eval_0\ d\ \dir = (\nabla_v\, f) \odot \dir
\end{array} \]
Here $\del{Input}\ 1, \ldots, \del{Input}\ n$ are values in the $\tDelta$ type (\cref{subfig:naive-dn-ad-delta});
each is paired with its corresponding input value, $v = (v_1 \ldots v_n)$.  The primal result of $f'$ is $r$, and will
be equal to $f(v)$.  The second component of the result is $d:\tDelta$.  This data structure
describes, or represents, the derivative of $f$.  More precisely, we can \emph{evaluate} $d$
in a direction described by $\dir$ to get $(\nabla_v\, f) \odot \dir$, see \cref{eq:finite-difference} above.

The evaluation function $\eval_0$ could not be more straightforward: it just interprets $\del{Add}$ as addition,
$\del{Scale}$ as multiplication, and so on:
\[ \begin{array}{@{}l@{\;}l@{\;}c@{\;}l@{}}
\multicolumn{4}{@{}l@{}}{\eval_0 :: \tDelta \to \R^n \to \R} \\
\eval_0\ \del{Zero} & \dir &=& 0 \\
\eval_0\ (\del{Input}\ i) & \dir &=& \text{(the $i$'th component of $\dir$)} \\
\eval_0\ (\del{Add}\ d_1\ d_2) & \dir &=& \eval_0\ d_1\ \dir + \eval_0\ d_2\ \dir \\
\eval_0\ (\del{Scale}\ r\ d) & \dir &=& r \cdot \eval_0\ d\ \dir
\end{array} \]

\subsection{Efficient gradients 1: a single pass}\label{sec:ad-naive-reverse-mode}

% By interpreting the $\tDelta$ terms using $\sem{-}$, the $D$ code transformation from \cref{fig:naive-dn-ad,fig:naive-dn-ad-terms} currently naturally computes (directional) \emph{forward} derivatives.
% However, we are interested in reverse-mode AD.
% Fortunately, a $\tDelta$ term is a (defunctionalised representation of) a derivative, and derivatives are linear.
% Furthermore, as also used in YOLO~\cite{2023-yolo}, we can \emph{transpose} a linear function $f : V \to W$ to obtain a linear function $f^\top : W \to V$; if $A$ is the matrix corresponding to $f$ such that $f(v) = Av$ (after a choice of basis of $V$ and $W$; note that this $A$ exists because $f$ is linear), then $f^\top(w) = A^\top w$.
% Since the transpose of the forward derivative is the reverse derivative, if we have an algorithm to transpose a concrete representation of a forward derivative function, we can do reverse differentiation by building on the implementation of forward differentiation we have so far.
% \TODO{this paragraph is imprecise; it loosely speaks about ``a derivative''}

Remember that our driving goal is to compute $\nabla_v\, f$.  How can we do that, given $f'$?  One obvious way is
to call $\eval_0$ $n$ times, like this:
\begin{equation}
    \begin{array}{@{}l@{\hspace{1mm}}l}  % an array to forget the baseline of the formula and put the equation tag in the middle
        \nabla_v\, f & = (
            \eval_0\ d\ (1, 0, \ldots, 0),
            \ldots,
            \eval_0\ d\ (0, 0, \ldots, 1))
        \\
     & \text{where}\\
     & \hspace{5mm} (r,d) = f' ((v_1,\del{Input}\ 1),\ldots, (v_n,\del{Input}\ n)) \\
    \end{array}
    \label{eq:grad-many-eval}
\end{equation}
But, following \cite{2022-krawiec-dualrev,2023-smeding-dualrev}, a natural optimisation is to
write a new evaluator $\eval_1$, that computes those $n$ results simultaneously, thus:
\begin{equation*}
\begin{array}{l@{\ }c@{\ }l}
\mathrlap{\eval_1 :: \tDelta \to \R^n} \\
\eval_1\ \del{Zero} &=& (0, 0, \ldots, 0) \\
\eval_1\ (\del{Input}\ i) &=& \text{(one-hot vector with 1 at position $i$)} \\
\eval_1\ (\del{Add}\ d_1\ d_2) &=& \eval_1\ d_1 + \eval_1\ d_2 \\
\eval_1\ (\del{Scale}\ r\ d) &=& r \cdot \eval_1\ d
\end{array}
\end{equation*}
where $(+)$ and $(\cdot)$ operate elementwise.
With this formulation we only need to call $\eval_1$ once, but it creates, scales, and adds, many $n$-vectors,
which is not efficient if $n$ is large.
Fortunately, it is not difficult to transform $\eval_1$ into an evaluator that transforms a \emph{single} $n$-vector, and
does so in a completely single-threaded way, amenable to mutable in-place updates.
To do so, we apply Cayley transformation (also known as the ``difference list trick''):
\begin{equation*}
\begin{array}{l@{\ }c@{\ }l}
\mathrlap{\eval_2 :: \tDelta \to \R^n \to \R^n} \\
\eval_2\ \del{Zero} &=& \id \\
\eval_2\ (\del{Input}\ \var{name}) &=& \clam{\mathit{tg}} \text{($\mathit{tg}$ with $1$ added to position $\var{name}$)} \\
\eval_2\ (\del{Add}\ d_1\ d_2) &=& \eval_2\ d_2 \circ \eval_2\ d_1 \\
\eval_2\ (\del{Scale}\ r\ d) &=& (r \cdot) \circ \eval_2\ d
\end{array}
\end{equation*}
We have $\eval_1\ d = \eval_2\ d\ (0, \ldots, 0)$.

While this has improved the efficiency of the first three cases, the case for $\del{Scale}$ still does $O(n)$ work to multiply by $r$ elementwise, where we would like it to be $O(1)$ instead.
To address this, we ``push'' the scale factors $r$ inside so that at \del{Input} nodes, we do not add just `1', but the value it would have been after all the scaling factors have been applied:
\begin{equation*}
\begin{array}{l@{\ }c@{\ }l}
\mathrlap{\eval_3 :: \R \to \tDelta \to \R^n \to \R^n} \\
\eval_3\ c\ \del{Zero} &=& \id \\
\eval_3\ c\ (\del{Input}\ \var{name}) &=& \clam{\mathit{tg}} \text{($\mathit{tg}$ with $c$ added to position $\var{name}$)} \\
\eval_3\ c\ (\del{Add}\ d_1\ d_2) &=& \eval_3\ c\ d_2 \circ \eval_3\ c\ d_1 \\
\eval_3\ c\ (\del{Scale}\ r\ d) &=& \eval_3\ (c \cdot r)\ d
\end{array}
\end{equation*}
We have $\eval_2\ d\ v = \eval_3\ 1\ d\ v$.

Readers familiar with reverse AD algorithms will recognise that in rewriting $\eval_2$ to $\eval_3$, the accumulation order of the derivative values has flipped from forward order to reverse order.
Indeed, for the following example $\tDelta$ term:
\begin{equation*}
\del{Scale}\ r_1\ (\del{Scale}\ r_2\ (\ldots\,(\del{Scale}\ r_n\ (\del{Input}\ n))))
\end{equation*}
$\eval_2$ would compute $r_1 \cdot (r_2 \cdot (\ldots \cdot (r_n \cdot 1)))$, whereas $\eval_3$ would compute $(((r_1 \cdot r_2) \cdot \ldots) \cdot r_n) \cdot 1$.
This reversal is expected in a \emph{reverse-mode AD} algorithm, which we need to be able to calculate gradients efficiently.
The name of the ``accumulating parameter'' in $\eval_3$, ``$c$'', is chosen because it contains the incoming \underline{c}otangent in a reverse AD algorithm.

While $\eval_3$ improves computational complexity significantly over $\eval_1$, we do still have a second problem: lost sharing.

\subsection{Efficient gradients 2: respecting sharing}\label{sec:ad-naive-sharing}

Consider the program $x : \R \vdash P_1 : \R$ given by:
\begin{equation*}
    \multilineT{
        \mathbf{let}\ \multilineT{
            x_1 = x +_\R x \\
            x_2 = x_1 +_\R x_1 \\
            \ \vdots \\
            x_n = x_{n-1} +_\R x_{n-1}
        } \\
        \mathbf{in}\ x_n
    }
\end{equation*}
The code transformation will transform this to $x : (\R, \tDelta) \vdash D[P_1] : (\R, \tDelta)$:
\begin{equation*}
    \multilineT{
        \mathbf{let}\ \multilineT{
            x_1 = \mathbf{let}\ (y_1, d_1) = x; (y_2, d_2) = x\ \mathbf{in}\ (y_1 +_\R y_2, \del{Add}\ d_1\ d_2) \\
            x_2 = \mathbf{let}\ (y_1, d_1) = x_1; (y_2, d_2) = x_1\ \mathbf{in}\ (y_1 +_\R y_2, \del{Add}\ d_1\ d_2) \\
            \ \vdots \\
            x_n = \mathbf{let}\ (y_1, d_1) = x_{n-1}; (y_2, d_2) = x_{n-1}\ \mathbf{in}\ (y_1 +_\R y_2, \del{Add}\ d_1\ d_2)
        } \\
        \mathbf{in}\ x_n
    }
\end{equation*}
The in-memory representation of the $\tDelta$ term in $x_n$ looks like this:
\vspace{0.15em}
\begin{center}
\begin{tikzpicture}[xscale=1.3]
    \node (n) at (0, 0) {$\del{\vphantom{Input}Add}$};
    \node (n-1) at (1, 0) {$\del{\vphantom{Input}Add}$};
    \draw[->] (n) edge[bend left=30] (n-1);
    \draw[->] (n) edge[bend right=30] (n-1);
    \node (dots) at (2, 0) {$\del{\vphantom{Input}Add}$};
    \draw[->] (n-1) edge[bend left=30] (dots);
    \draw[->] (n-1) edge[bend right=30] (dots);
    \node (1) at (3, 0) {$\del{\vphantom{Input}Add}$};
    \draw[->] (dots) edge[bend left=30] (1);
    \draw[->] (dots) edge[bend right=30] (1);
    \node (0) at (4, 0) {$\del{\vphantom{Input}Add}$};
    \draw[->] (1) edge[bend left=30] (0);
    \draw[->] (1) edge[bend right=30] (0);
    \node (x) at (5, 0) {$\del{\rlap{Input}\phantom{Add}}$};
    \draw[->] (0) edge[bend left=30] (x);
    \draw[->] (0) edge[bend right=30] (x);
\end{tikzpicture}
\end{center}
\vspace{-0.15em}
but a plain interpreter of a $\tDelta$ term cannot see this sharing, and thus evaluation ($\eval_0$ for forward AD, and $\eval_3$ for reverse AD) will be \emph{exponential} in $n$.
This is clearly unacceptable.

We follow \citeN{2023-smeding-dualrev} in solving this problem.
The idea is to ``simply'' make the sharing visible: we give every fragment of a $\tDelta$ term that may later be shared a unique name (implemented as a numeric ID).
Concretely, we add an additional constructor `$\del{Share}\ \tID\ \tDelta$' to $\tDelta$:\footnote{In \cite[\S8.2]{2023-smeding-dualrev}, the equivalent of $\tDelta$ is called $\mathrm{Contrib}$, which can be seen as a combination of $\del{Zero}$, $\del{Add}$, $\del{Scale}$ and $\del{Share}$.}
\[ \begin{array}{@{}l@{\;}c@{\;}l@{}}
    \mathbf{data}\ \tDelta
    &=& \textcolor{gray}{
            \del{Zero}
            \mid \del{Input}\ \tDVarName
            \mid \del{Add}\ \tDelta\ \tDelta
            \mid \del{Scale}\ \R\ \tDelta
        } \\
    &\mid& \del{Share}\ \tID\ \tDelta \\[0.3em]
    \multicolumn{3}{@{}l@{}}{\mathbf{type}\ \tID = \Int \quad\ccomment{for semantic clarity}}
\end{array} \]
and we give unique IDs to all potentially shareable $\tDelta$ (sub)terms by wrapping them in a $\del{Share}$ constructor with that unique ID.
We have two invariants on those IDs:

\vspace{0.4em}
\textbf{Invariant 1}.
In a $\tDelta$ term `$\del{Share}\ i\ d$', all $\tID$s appearing inside $d$ are strictly smaller than $i$.

\textbf{Invariant 2}.
Given two $\tDelta$ terms `$\del{Share}\ i\ d_1$' and `$\del{Share}\ j\ d_2$', if $i = j$ then $d_1$ and $d_2$ live at the same address in memory.
\vspace{0.4em}

In practice, the converse of invariant 2 is also true, but for soundness we need only the invariant as stated.

Together, these invariants ensure that the sharing structure is soundly represented and acyclic (i.e.,\ the $\tDelta$ term represents a directed acyclic graph (DAG)), and furthermore that it is helpful for ensuring that $\eval$ can avoid evaluating any part of the $\tDelta$ term more than once.
The trick is that the new $\eval$ will backpropagate through $\tDelta$ subterms in strict decreasing order of ID.
Whenever it encounters a $\del{Share}$ node, $\eval$ will save the $r$ value to be backpropagated into that part of the $\tDelta$ term; if the same node is encountered multiple times, the $r$ values are added.
Backpropagation into a node is resumed only when it is next-in-line: its ID is the highest among the $\tDelta$ subterms still to process.
See the discussion of the sharing-aware evaluator in \cref{sec:ad-naive-evaluator} for how this is implemented.

\paragraph{Lifting to monadic code}
To be able to generate these unique IDs, we lift the right-hand side of $D[-]$ to monadic code in a state monad with a single $\Int$ as state.

\begin{figure}
\begin{gather*}
    x_1 : \tau_1, \ldots, x_n : \tau_n \vdash t : \tau \quad \leadsto \quad x_1 : D[\tau_1], \ldots, x_n : D[\tau_n] \vdash D[t] : \textcolor{red}{\tIdGen}\ D[\tau] \\
    \begin{array}{@{}l@{}}
        \ccomment{Updated type transformation:} \\[0.3em]
        \begin{array}{@{}l@{\quad}l@{}}
            \textcolor{gray}{D[\R] = (\R, \tDelta)}
                & \textcolor{gray}{D[(\sigma, \tau)] = (D[\sigma], D[\tau])}\\
            \textcolor{gray}{D[\Int] = \Int}
                & D[\sigma \to \tau] = D[\sigma] \to \textcolor{red}{\tIdGen}\ D[\tau]
            \vspace{0.3em}
        \end{array} \\
        \ccomment{The ID generation monad:} \\[0.3em]
        \begin{array}{@{}l@{}}
            \mathbf{newtype}\ \tIdGen\ a = \tIdGen\ (\mathrm{State}\ \Int\ a) \\
            \mgenid :: \tIdGen\ \tID \qquad\ccomment{Recall that $\tID = \Int$.}
            \vspace{0.3em}
        \end{array} \\
        \ccomment{Primitive operations on real numbers:} \\[0.3em]
        \begin{array}{@{}r@{\;}c@{\;}l@{}}
            D[r]
                &=& \mathbf{return}\ (r, \del{Zero}) \\
            D[\sin t]
                &=& \mathbf{do}\ \multilineT{
                    (x, d) \leftarrow D[t] \\
                    \textcolor{red}{\var{id} \leftarrow \mgenid} \\
                    \mathbf{return}\ (\sin x, \textcolor{red}{\del{Share}\ \var{id}\ (}\del{Scale}\ (\cos x)\ d\textcolor{red}{)})
                } \\
            D[t_1 +_{\R} t_2]
                &=& \mathbf{do}\ \multilineT{
                    (x_1, d_1) \leftarrow D[t_1]; (x_2, d_2) \leftarrow D[t_2] \\
                    \textcolor{red}{\var{id} \leftarrow \mgenid} \\
                    \mathbf{return}\ (x_1 +_{\R} x_2, \textcolor{red}{\del{Share}\ \var{id}\ (}\del{Add}\ d_1\ d_2\textcolor{red}{)})
                } \\
            D[t_1 \times_{\R} t_2]
                &=& \mathbf{do}\ \multilineT{
                    (x_1, d_1) \leftarrow D[t_1]; (x_2, d_2) \leftarrow D[t_2] \\
                    \textcolor{red}{\var{id} \leftarrow \mgenid} \\
                    \mathbf{return}\ (x_1 \times_{\R} x_2, \textcolor{red}{\del{Share}\ \var{id}\ (}\del{Add}\ (\del{Scale}\ x_2\ d_1)\ (\del{Scale}\ x_1\ d_2)\textcolor{red}{)})
                } \\
            \textit{etc.\rlap{\ the other primitive operations on $\R$}}
            \vspace{0.3em}
        \end{array} \\
        \ccomment{Other operations are monadically lifted:} \\[0.3em]
        \begin{array}{@{}r@{\;}c@{\;}l@{}}
            D[k] &=& \mathbf{return}\ k \\
            D[x] &=& \mathbf{return}\ x \\
            D[\mathbf{let}\ x = s\ \mathbf{in}\ t] &=& \mathbf{do}\ x \leftarrow D[s]; D[t] \\
            D[(s, t)] &=& \mathbf{do}\ x \leftarrow D[s]; y \leftarrow D[t]; \mathbf{return}\ (x, y) \\
            D[\name{fst}\ t] &=& \mathbf{do}\ x \leftarrow D[t]; \mathbf{return}\ (\name{fst}\ x) \\
            D[\name{snd}\ t] &=& \mathbf{do}\ x \leftarrow D[t]; \mathbf{return}\ (\name{snd}\ x) \\
            D[\lambda x.\ t] &=& \mathbf{return}\ (\lambda x.\ D[t]) \\
            D[s\ t] &=& \mathbf{do}\ f \leftarrow D[s]; x \leftarrow D[t]; f\ x \\
            D[\mathbf{if}\ t_1\ \mathbf{then}\ t_2\ \mathbf{else}\ t_3] &=& \mathbf{do}\ x \leftarrow D[t_1]; \mathbf{if}\ x\ \mathbf{then}\ D[t_2]\ \mathbf{else}\ D[t_3] \\
            D[s \times_{\Int} t] &=& \mathbf{do}\ x \leftarrow D[s]; y \leftarrow D[t]; \mathbf{return}\ (x \times_{\Int} y) \\
            \textit{etc.\rlap{\ the other primitive operations on $\Int$ and $\Bool$}}
        \end{array}
    \end{array}
\end{gather*}
\caption{\label{fig:dn-ad-terms}
    Updated rules for the dual-numbers reverse AD code transformation to properly handle sharing.
    Compare \cref{fig:naive-dn-ad-terms}; the added text (apart from the lifting to monadic code) is highlighted in \textcolor{red}{red}.
}
\end{figure}

The updated code transformation is given in \cref{fig:dn-ad-terms}.
Note that for all language constructs except primitive operations, this monadic lifting is done very systematically, as functional programmers (especially in Haskell) are well used to.
The only wrinkle is that because our language so far supports user-written functions, and the bodies of those functions get differentiated too, those differentiated functions also become effectful.
This results in the updated rule $D[\sigma \to \tau] = D[\sigma] \to \tIdGen\ D[\tau]$ in \cref{fig:dn-ad-terms}, as well as the fact that in $D[s\ t]$, there is no `$\mathbf{return}$' around the result of the call $f\ x$.

For primitive operations, we use `$\mgenid$', the (only) monad method, to generate unique, monotonically increasing $\tID$ values --- the monotonicity allows us to preserve invariant 1.
Note that we only give an ID to the full $\tDelta$ value returned by the code for each primitive operation; for example, there is no $\del{Share}$ node wrapping the $\del{Scale}$ constructors in the $\tDelta$ value for $(\times_\R)$.
We can leave these out because these $\del{Scale}$ nodes can never be shared: they are used exactly once, namely in the containing $\del{Add}$, which itself \emph{does} have an ID.
Furthermore, we elide the $\del{Share}$ node around $\del{Zero}$ in the right-hand side of $D[r]$, because while that $\del{Zero}$ may well be used multiple times, $\eval_3$ of $\del{Zero}$ is very cheap and not worth deduplicating.

This scheme does generally result in ``too many'' IDs: contrary to the somewhat contrived $P_1$ from the beginning of this subsection, in practice far from all $\tDelta$ terms that we give a unique ID to will actually be shared.
But we certainly have \emph{enough} $\del{Share}$ nodes: every $\tDelta$ node that could benefit from being evaluated only once, gets a unique ID.

This way of recording sharing inside $\tDelta$ terms using IDs is quite different from the usual way of notating shared subterms in an expression language: let bindings.\footnote{
    \citeN{2022-krawiec-dualrev} instead choose to make $\tDelta$ a proper, traditional term language with let bindings; this requires a more complicated monad.
    In short, the monad becomes additionally a writer monad of $(\tID, \tDelta)$ pairs, and $\tDelta$ is augmented with two constructors: $\del{Var}\ \tID$ and $\del{Let}\ \tID\ \tDelta$.
    In $D[-]$ for primitive operations, instead of simply naming the returned $\tDelta$ term $d$, the pair $(\var{id}, d)$ is emitted in the writer monad and we return only $\del{Var}\ \var{id}$.
    Before evaluation, the writer log is stacked in chronological order (which is also increasing $\tID$ order) and set as a stack of $\del{Let}$ bindings on top of the final $\tDelta$ term from the program.
    We use our global sharing approach instead because we will need it in \cref{sec:restaging}.
}
In contrast with let bindings, where the shared term is available only in a limited scope (the `in' part of the let binding), with our $\del{Share}$-based approach, the shared term is ``available'' everywhere apart from inside the shared term itself.
For this reason, we call this $\del{Share}$-based approach \emph{global sharing}; this idea will be used again in \cref{sec:restaging}.
% \TODO{Is this sufficient justification for going with Share instead of monadic Delta.Let?}

\subsection{Evaluator}\label{sec:ad-naive-evaluator}

\begin{figure}
\begin{subfigure}{0.24\textwidth}
    \centering
    \begin{tikzpicture}[xscale=0.75, yscale=0.75]
        \node (x0) at (1, -1.5) {$x$};
        \node (x1) at (1, -0.25) {$\exp$};
        \node (x3) at (2, 1.25) {$\log$};
        \node (x5) at (3, 2.75) {$\cos$};
        \node (x7) at (1, 2.75) {$\sin$};
        \node (x11) at (2, 4.5) {$/$};
        \node (x13) at (1, 6.25) {$+$};
        \draw[->] (x0) -- (x1);
        \draw[->] (x1) -- (x3);
        \draw[->] (x3) -- (x5);
        \draw[->] (x3) -- (x7);
        \draw[->] (x5) -- (x11);
        \draw[->] (x7) -- (x11);
        \draw[->] (x11) -- (x13);
        \draw[->] (x1) to[bend left=25] (x13);
    \end{tikzpicture}
    \caption{\label{subfig:eval-share-visits-source}
        Source program
    }
\end{subfigure}
\begin{subfigure}{0.24\textwidth}
    \centering
    \newcommand\nSh{$\del{Share}$}
    \newcommand\nSc{$\vphantom{t}\smash{\del{Scale}}$}
    \newcommand\nAd{$\vphantom{t}\smash{\del{Add}}$}
    \begin{tikzpicture}[xscale=0.75, yscale=0.75]
        \tikzset{every node/.style={inner sep=1pt,font=\small}}
        \node (d0) at (1, -1.5) {$\del{Input}$};
        \node (d1p) at (1, -0.5) {$\nSc$};
        \node (d1) at (1, 0) {$\nSh\ 1$};
        \node (d2) at (2, 1) {\nSc};
        \node (d3) at (2, 1.5) {\nSh\ 2};
        \node (d4) at (3, 2.5) {\nSc};
        \node (d5) at (3, 3) {\nSh\ 4};
        \node (d6) at (1, 2.5) {\nSc};
        \node (d7) at (1, 3) {\nSh\ 3};
        \node (d8) at (2.5, 4) {\nSc};
        \node (d9) at (1.5, 4) {\nSc};
        \node (d10) at (2, 4.5) {\nAd};
        \node (d11) at (2, 5) {\nSh\ 5};
        \node (d12) at (1, 6) {\nAd};
        \node (d13) at (1, 6.5) {\nSh\ 6};
        \draw[->] (d1p) -- (d0);
        \draw[->] (d1) -- (d1p);
        \draw[->] (d2) -- (d1);
        \draw[->] (d3) -- (d2);
        \draw[->] (d4) -- (d3);
        \draw[->] (d5) -- (d4);
        \draw[->] (d6) -- (d3);
        \draw[->] (d7) -- (d6);
        \draw[->] (d8) -- (d5);
        \draw[->] (d9) -- (d7);
        \draw[->] ($(d10.south east) - (0.15, 0)$) -- (d8);
        \draw[->] ($(d10.south west) + (0.15, 0)$) -- (d9);
        \draw[->] (d11) -- (d10);
        \draw[->] (d12) to[bend right=25] (d1);
        \draw[->] (d12) -- (d11);
        \draw[->] (d13) -- (d12);
    \end{tikzpicture}
    \caption{\label{subfig:eval-share-visits-delta}
        $\tDelta$ term
    }
\end{subfigure}
\begin{subfigure}{0.24\textwidth}
    \centering
    \begin{tikzpicture}[xscale=0.75, yscale=0.75]
        \node (x0) at (1, -1.5) {3, 8, 12};
        \node (x1) at (1, -0.25) {2, 7, 11};
        \node (x3) at (2, 1.25) {6, 10};
        \node (x5) at (3, 2.75) {9};
        \node (x7) at (1, 2.75) {5};
        \node (x11) at (2, 4.5) {4};
        \node (x13) at (1, 6.25) {1};
        \draw[color=gray!50!white] (x0) -- (x1);
        \draw[color=gray!50!white] (x1) -- (x3);
        \draw[color=gray!50!white] (x3) -- (x5);
        \draw[color=gray!50!white] (x3) -- (x7);
        \draw[color=gray!50!white] (x5) -- (x11);
        \draw[color=gray!50!white] (x7) -- (x11);
        \draw[color=gray!50!white] (x11) -- (x13);
        \draw[color=gray!50!white] (x1) to[bend left=25] (x13);
    \end{tikzpicture}
    \caption{\label{subfig:eval-share-visits-order-naive}
        Naive visit order
    }
\end{subfigure}
\begin{subfigure}{0.24\textwidth}
    \centering
    \begin{tikzpicture}[xscale=0.75, yscale=0.75]
        \node (x0) at (1, -1.5) {7};
        \node (x1) at (1, -0.25) {6};
        \node (x3) at (2, 1.25) {5};
        \node (x5) at (3, 2.75) {3};
        \node (x7) at (1, 2.75) {4};
        \node (x11) at (2, 4.5) {2};
        \node (x13) at (1, 6.25) {1};
        \draw[color=gray!50!white] (x0) -- (x1);
        \draw[color=gray!50!white] (x1) -- (x3);
        \draw[color=gray!50!white] (x3) -- (x5);
        \draw[color=gray!50!white] (x3) -- (x7);
        \draw[color=gray!50!white] (x5) -- (x11);
        \draw[color=gray!50!white] (x7) -- (x11);
        \draw[color=gray!50!white] (x11) -- (x13);
        \draw[color=gray!50!white] (x1) to[bend left=25] (x13);
    \end{tikzpicture}
    \caption{\label{subfig:eval-share-visits-order-aware}
        Sharing-aware order
    }
\end{subfigure}
\caption{\label{fig:eval-share-visits}
    $\eval_3$ of \cref{sec:ad-naive-reverse-mode} visits $\tDelta$ nodes as often as there are paths to them; the sharing-aware evaluator of \cref{sec:ad-naive-evaluator}, $\reversePass_4$, visits $\del{Share}$-wrapped nodes only once.
    From left to right: the source expression $\mathbf{let}\ y = \exp x\ \mathbf{in}\ \mathbf{let}\ z = \log y\ \mathbf{in}\ y + \frac{\sin z}{\cos z}$ with input $x$ (arrows in the direction of execution); the resulting $\tDelta$ term (arrows pointing to subterms; IDs assume the $\sin$ was executed before the $\cos$); the order in which $\eval_3$ visits the nodes of \cref{subfig:eval-share-visits-delta} if $\del{Share}$ nodes were removed; the order in which $\reversePass_4$ visits the nodes of \cref{subfig:eval-share-visits-delta}.
    The gray lines in \cref{subfig:eval-share-visits-order-naive,subfig:eval-share-visits-order-aware} are only to visualise the relation to \cref{subfig:eval-share-visits-source,subfig:eval-share-visits-delta}.
}
\end{figure}
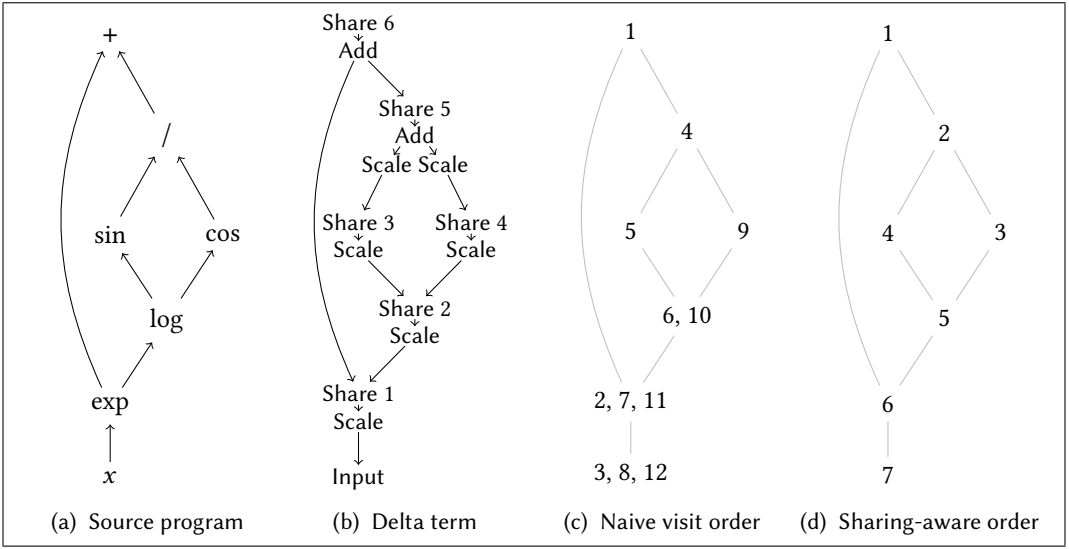

The evaluator presented in this section improves over $\eval_3$ from \cref{sec:ad-naive-reverse-mode} by using the IDs in $\del{Share}$ nodes to visit every $\tDelta$ node only once (excepting $\del{Zero}$ and $\del{Input}$, which take $O(1)$ time to evaluate anyway).
The resulting evaluation order is shown on a simple example program in \cref{fig:eval-share-visits}.
Note that $\eval_3$ would have visited the `$\exp$' and $x$ nodes in \cref{subfig:eval-share-visits-source} three times, whereas the new evaluator visits them only once.

The improved evaluator does not simply traverse the $\tDelta$ term depth-first, but instead in a mix of breadth-first and depth-first traversal.
Accordingly, the evaluator is restructured into multiple functions:
\begin{enumerate}
\item
    $\reversePass_4$, a (non-recursive) wrapping function that initialises the backpropagation process and calls $\backprop_4$.
\item
    $\backprop_4$, which loops over the found $\tID$s from highest to lowest, evaluating the $\tDelta$ fragment for each in turn using $\eval_4$.
\item
    $\eval_4$, which interprets a $\tDelta$ fragment under a $\del{Share}$ constructor, stopping at any contained $\del{Share}$ constructors and deferring their recursive traversal until $\backprop_4$ decides that their time has come.
\end{enumerate}

\begin{figure}
\begin{align*}
&\begin{array}{@{}l@{}}
    \mathbf{data}\ \EState = \EState \quad \ccomment{evaluation state} \\
    \quad \{ \begin{array}[t]{@{}l@{\quad}l@{}}
        \ \name{grad} :: \Map\ \tDVarName\ \R & \ccomment{input cotangents: will collect final gradient} \\
        \leadcomma\ \name{dfrag} :: \Map\ \tID\ \tDelta & \ccomment{delta fragments} \\
        \leadcomma\ \name{accum} :: \Map\ \tID\ \R\ \} & \ccomment{accumulated node cotangents} \\
    \end{array}
\end{array} \\[0.5em]
&\begin{array}{@{}l@{}}
    \reversePass_4 :: \R \to \tDelta \to \Map\ \tDVarName\ \R \\
    \begin{array}{@{}l@{\ }c@{\ }l@{}}
        \reversePass_4\ c\ d &=& \name{grad}\ (\backprop_4\ (\eval_4\ c\ d\ (\EState\ \{\}\ \{\}\ \{\}))) \\
    \end{array}
\end{array} \\[0.5em]
&\begin{array}{@{}l@{}}
    \backprop_4 :: \EState \to \EState \\
    \begin{array}{@{}l@{\ }c@{\ }l@{}}
        \backprop_4\ s &=&
            \multilineT{
                \mathbf{case}\ \name{Map.maxViewWithKey}\ (\name{accum}\ s)\ \mathbf{of} \\
                \quad \multilineT{
                    \name{Just}\ ((i, c), \mathit{acc}') \to \\
                    \quad \multilineT{
                        \mathbf{let}\ \multilineT{
                            d = \name{dfrag}\ s \mathbin{\name{Map.!}} i \\
                            s' = \eval_4\ c\ d\ (s\ \{\ \name{accum} = acc', \name{dfrag} = \name{Map.delete}\ i\ (\name{dfrag}\ s)\ \})
                        } \\
                        \mathbf{in}\ \backprop_4\ s' \\
                    } \\
                    \name{Nothing} \to s \\
                }
            }
    \end{array}
\end{array} \\[0.5em]
&\begin{array}{@{}l@{}}
    \eval_4 :: \R \to \tDelta \to \EState \to \EState \\
    \begin{array}{@{}l@{\ }c@{\ }l@{}}
        \eval_4\ c\ \del{Zero} &=& \id \\
        \eval_4\ c\ (\del{Input}\ v) &=& \clam{s} s\ \{\ \name{grad} = \name{Map.insertWith}\ (+)\ v\ c\ (\name{grad}\ s)\ \} \\
        \eval_4\ c\ (\del{Add}\ d_1\ d_2) &=& \eval_4\ c\ d_2 \circ \eval_4\ c\ d_1 \\
        \eval_4\ c\ (\del{Scale}\ r\ d) &=& \eval_4\ (c \cdot r)\ d \\
        \eval_4\ c\ (\del{Share}\ i\ d) &=& \clam{s} s\ \{ \multilineT{
            \ \name{dfrag} = \name{Map.insert}\ i\ d\ (\name{dfrag}\ s) \\
            \leadcomma\ \name{accum} = \name{Map.insertWith}\ (+)\ i\ c\ (\name{accum}\ s)\ \} \\
        } \\
    \end{array}
\end{array}
\end{align*}
\caption{\label{fig:ad-efficient-eval}
    The $\tDelta$ evaluator that handles internal sharing in $\tDelta$ terms and has the right time complexity, apart from logarithmic factors due to use of $\Map$.
}
\end{figure}

% \simon{Getting lost.} \TODO{better?}
The result is shown in \cref{fig:ad-efficient-eval}.
To be able to delay recursing below $\del{Share}$ nodes, we need more storage in our evaluation state: in addition to the sparse gradient accumulator in `$\name{grad}$' (which was implicitly a mutable $\R^n$ before in $\eval_3$, but which we make explicitly sparse now using $\Map\ \tDVarName\ \R$), we need storage to save the $\tDelta$ trees that we still need to visit, as well as their accumulated cotangents.
These are stored, respectively, in `$\name{dfrag}$' and `$\name{accum}$' in the evaluation state $\EState$.
The main backpropagation loop is $\backprop_4$, which repeatedly chooses the largest encountered-but-yet-unvisited ID, takes it out of the `$\name{dfrag}$' and `$\name{accum}$' maps, runs $\eval_4$ on this $\tDelta$ fragment, and then continues with the next unvisited ID.
Evaluating a $\tDelta$ fragment proceeds exactly as before in $\eval_3$ (\cref{sec:ad-naive-reverse-mode}), except that on reaching a $\del{Share}$ constructor, evaluation does not recurse but instead saves the contained $\tDelta$ tree, as well as the cotangent $c$ to be backpropagated into that tree, in the evaluation state.
The cotangent is added to the value already in the state, if any.
(This is the add operation in the reverse derivative that comes from sharing in the source program, according to the mantra: ``sharing in the primal becomes addition in the dual''.)
When there are no other unvisited subtrees with higher IDs, $\backprop_4$ will take these values out of the state and backpropagate the summed cotangent contributions down into the saved $\tDelta$ tree by invoking $\eval_4$ again.

At the top level, $\reversePass_4$ initialises the process by starting backpropagation on a state produced by evaluating the topmost fragment of the full $\tDelta$ term resulting from the forward pass.
The final gradient is simply the `$\name{grad}$' field of the state after backpropagation is complete.

In the remainder of the paper, plain use of `$\reversePass$', `$\backprop$' or `$\eval$' will refer to their 4th version in \cref{fig:ad-efficient-eval}.

\subsection{Wrapper}\label{sec:ad-naive-wrapper}

As in \cite{2022-krawiec-dualrev,2023-smeding-dualrev}, we add a wrapper around the algorithm to give it a useful API, and to make explicit what that API \emph{is}, precisely.

\begin{figure}
\begin{gather*}
    \begin{array}{@{}l@{}}
        \name{wrapper}\ \multilineT{
            :: \tAST\ \R \quad\ccomment{with one free variable: $x : \tyin$} \\
            \to \tyin \to \R \to \tyin \\
        } \\
        \name{wrapper}\ t\ \var{inp}\ \var{ctg} = {} \\
        \quad \multilineT{
            \mathbf{let}\ \multilineT{
                \var{inp'} = \name{named}_{\tyin}\ \var{inp} \\
                (\_, d) = \name{runIdGen}\ (\mathbf{let}\ x = \var{inp'}\ \mathbf{in}\ D[t])\ 0 \quad\ccomment{Provide $D[t]$'s free variable $x$} \\
                \var{grad} = \reversePass_4\ \var{ctg}\ d
            } \\
            \mathbf{in}\ \name{reconstruct}_{\tyin}\ \var{grad}\ \var{inp'}
        }
    \end{array} \\
    \begin{array}{@{}l@{}}
        \ccomment{Using the following functions:} \\
        \name{runIdGen} :: \tIdGen\ a \to \Int \to a \\
        \name{named}_{\tyin} :: \tyin \to D[\tyin] \quad\ccomment{for first-order $\tyin$} \\
        \qquad\ccomment{e.g.\ $\multilineT{
            \name{named}_{((\R,\R),\R)}\ ((7,3.1),8) = {} \\
            \qquad (((7, \del{Input}\ 0), (3.1, \del{Input}\ 1)), (8, \del{Input}\ 2))
        }$} \\
        \name{reconstruct}_{\tyin} :: \Map\ \tDVarName\ \R \to D[\tyin] \to \tyin \quad\ccomment{for first-order $\tyin$} \\
        \qquad\ccomment{e.g.\ $\multilineT{
            \name{reconstruct}_{((\R,\R),\R)}\ \var{grad}\ (((7.0, \del{Input}\ 0), (3.1, \del{Input}\ 1)), (8.0, \del{Input}\ 2)) = {} \\
            \qquad ((\var{grad} \mathbin{!} 0, \var{grad} \mathbin{!} 1), \var{grad} \mathbin{!} 2)
        }$}
    \end{array}
\end{gather*}
\caption{\label{fig:dn-ad-wrapper}
    The wrapper for the dual-numbers reverse AD transformation of \cref{fig:dn-ad-terms}.
    The implementations of `$\name{named}$' and `$\name{reconstruct}$' are straightforward but somewhat verbose.
    Technically `$\name{reconstruct}$' does not need the $D[\tyin]$ argument for our language, but we add it for generality, because it would have been necessary had we included e.g.\ coproducts (sum types) in our type system.
}
\end{figure}

An implementation sketch is given in \cref{fig:dn-ad-wrapper}.
Our first input is a term $t$ satisfying the typing $x : \tyin \vdash t : \R$ for some first-order $\tyin$.
(Generalisation of the output type $\R$ would require running the reverse pass once for each output scalar; \cref{app:output-type-generalisation} shows how to do this for the final algorithm of \cref{sec:restaging}.)
We transform this term to $x : D[\tyin] \vdash D[t] : (\R, \tDelta)$, which we can run once we have a $D[\tyin]$.
To obtain such a $D[\tyin]$, we take a $\tyin$-typed input ($\var{inp}$, the point to differentiate at) and pair up each scalar in that structure with an $\del{Input}$ node with a unique $\tDVarName$.
This yields $\var{inp'}$ of type $D[\tyin]$ (recall \cref{subfig:naive-dn-ad-types}).
Thus we can now evaluate the transformed term at the processed input point to obtain the function result ($\R$) and its forward derivative ($\tDelta$).
The function result could be returned as well, but is ignored in \cref{fig:dn-ad-wrapper}.

Then we call $\reversePass_4$ (\cref{fig:ad-efficient-eval}) with the initial $\R$ cotangent (which we get from the user, typically `1') to obtain the gradient in the form of a `$\Map\ \tDVarName\ \R$'.
Finally, replacing all scalar-$\del{Input}$ pairs in $\var{inp'}$ by their corresponding scalar from the gradient map ($\var{grad}$), we obtain the actual gradient of type $\tyin$, which we return.

It is worth noting that to differentiate the same program at multiple input points, almost everything needs to be repeated \emph{for each such point}: only the transformed term $D[t]$ can be cached.
There is never a good opportunity to optimise the calculations of the reverse pass, because the output of the $\tDelta$ interpretation process (in $\reversePass_4$) is already a \emph{gradient}, not a gradient-computing \emph{program} that we can still compile and optimise.
We address this problem as well (in \cref{sec:restaging}), after we have improved support for arrays.
% MK: ^ Spot on. Please don't let this forward reference be lost.

\paragraph{Interpretation of derivatives on discrete types}
In the world of mathematics, if we are given a function
\[ f : \Z^{k_1} \times \R^{n_1} \to \Z^{k_2} \times \R^{n_2} \]
and are asked for its Jacobian (the total derivative), we seem to have to define what it means to take a partial derivative of an integer value with respect to another.
This makes little sense: taking the classical limit definition of a derivative at its word (interpreted in the discrete topology of the integers), such a ``partial derivative'' would be identically 0, so we may as well not report it at all.

Because of this, our transformation joins $\tDelta$ terms only to scalar values, not to discrete values in the input and output of the function to differentiate.
Mathematically, this corresponds to not computing $\mathrm J f$, but instead ignoring the discrete outputs of $f$ and reinterpreting its discrete inputs as global constants:
\[ \tilde f : \R^{n_1} \to \R^{n_2} \]
and then computing $\mathrm J \tilde f : \R^{n_1} \to \R^{n_2 \times n_1}$, a function that produces the Jacobian matrix of $\tilde f$ at any given input point.

Our choice of not computing derivatives for integral values means that the wrapper in \cref{fig:dn-ad-wrapper} has nothing sensible to return for integers in $\tyin$; `$\name{reconstruct}$' has to choose something, and we leave unspecified what it chooses.
(Reasonable options include ``zero'' and ``the input''.)
To reflect the undefinedness of a derivative with respect to an integer value, we could give `$\name{wrapper}$' a more precise type:
\[ \name{wrapper} :: \tAST\ \R \to \tyin \to \R \to T[\tyin] \]
where $T$ is a type function that maps a type to its type of tangents:
\[ T[\R] = \R \qquad T[\Int] = () \qquad T[(\sigma, \tau)] = (T[\sigma], T[\tau]) \]
With this typing, `$\name{wrapper}$' does not need to return nonsensical values.
For convenience, however, we let it return the full $\tyin$ here.

In the rest of this paper, we continue to think about derivatives and Jacobians as if this $T$ is implicitly applied, both in the input and in the output (if the part of the transformation in question supports non-trivial output types).
For example, we say that the function $\name{round} : \R \to \Int$ has trivial derivative (it contains no information), because its Jacobian, being an element of $\R^{0 \times 1}$, is empty.

\section{A language with arrays: the core language}\label{sec:array-language}

The reverse AD algorithm set out in \cref{sec:ad-naive} works fine on functions that manipulate scalars, but real programs that AD is used on typically work with \emph{arrays} of scalars.
This includes machine learning applications such as neural networks, but also probabilistic programming on larger statistical models, most optimisation applications, etc.
Thus, to really count as an AD system, we ought to support arrays.
In what way?

\subsection{Higher-order and first-order array languages}

Abstracting over the precise syntax, there are two prevailing ways to design array programming languages today: higher-order and first-order.
In the former (e.g.\ Futhark~\cite{2017-henriksen-futhark}, XLA\footnote{\url{https://github.com/openxla/xla}}, Accelerate~\cite{acc-2011-cuda}, Dex~\cite{2021-paszke-dex}), there are higher-order array operations (often called \emph{second-order array operations}) with type signatures roughly like the following:
\[ \begin{array}{@{}l@{}}
    \name{build}_n :: \shvec{\tightunderbrace{\Int, \ldots, \Int}{n}} \to (\ixvec{\tightunderbrace{\Int, \ldots, \Int}{n}} \to \tau) \to \Array\ n\ \tau \\
    \qquad \ccomment{Allocate a new $n$-dimensional array of the given shape and fill it with values} \\
    \qquad \ccomment{returned by the given function, called once for each index in the array.} \\
    \name{map} :: (\sigma \to \tau) \to \Array\ n\ \sigma \to \Array\ n\ \tau \\
    \qquad \ccomment{Return a new array of the same size as the input, with every element modified by} \\
    \qquad \ccomment{the given function.} \\
    \name{foldInner} :: (\tau \to \tau \to \tau) \to \tau \to \Array\ n\ \tau \to \Array\ (n - 1)\ \tau \\
    \qquad \ccomment{Inner-dimension reduction: along each inner-dimension vector, reduce with the given} \\
    \qquad \ccomment{function and the given initial value. The function must typically be associative to} \\
    \qquad \ccomment{allow parallel execution.}
\end{array} \]
The `$\Array\ n\ \tau$' type denotes an $n$-dimensional array containing values of type $\tau$.
Typically, many more operations are included as well, such as scans, histograms, stencils, etc.\footnote{In fact, given a (first-order) primitive `$\name{iota} :: \Int \to \Array\ 1\ \Int$' that given $n$ returns the array $\arrlit{0, \ldots, n - 1}$, `$\name{build}$' and `$\name{map}$' are interdefinable, so only one of the two need be primitive.}
These languages are usually not \emph{actually} fully higher-order: even if these types look higher-order, one might not be allowed to create arrays of functions, nor arrays of arrays; sometimes (e.g.\ in Accelerate) even using an array combinator inside a function passed to another array combinator is disallowed.

In the second prevailing style (e.g.\ APL, MatLab, NumPy, TensorFlow), there are first-order operations only, for example:
\[ \begin{array}{@{}ll@{}}
    (+) :: \Array\ n\ \R \to \Array\ n\ \R \to \Array\ n\ \R
        & \ccomment{Add elementwise} \\
    (\times) :: \Array\ n\ \R \to \Array\ n\ \R \to \Array\ n\ \R
        & \ccomment{Multiply elementwise} \\
    \name{sumInner} :: \Array\ n\ \R \to \Array\ (n - 1)\ \R
        & \ccomment{Sum along the inner-dimension vectors} \\
    \name{replicate}_n :: \Int \to \Array\ n\ \tau \to \Array\ (n + 1)\ \tau
        & \ccomment{Add one outer dimension that contains} \\
        & \ccomment{\ \ the input array at every index} \\
    \name{transpose} :: n \geq 2 \Rightarrow \Array\ n\ \tau \to \Array\ n\ \tau & \ccomment{Transpose the outermost 2 dimensions}
\end{array} \]
Of course, arithmetic operations might be overloaded beyond just $\R$, and many more operations are typically available.

The higher-order design is strictly more expressive: the first-order combinators can be defined in terms of the higher-order ones, but not the other way round.
Furthermore, providing these higher-order combinators to the programmer allows them to write more understandable code; this has been eloquently argued by~\citeN{2021-paszke-dex} in their design philosophy of the Dex language.
For example, if one knows that the following code is possible for naive matrix multiplication:\footnote{The `$\name{shape}$' function returns the list of sizes of the dimensions of the array, outermost first.}
\[ \begin{array}{@{}l@{}}
\name{matmat} :: \Array\ 2\ \R \to \Array\ 2\ \R \to \Array\ 2\ \R  \\
\name{matmat}\ a\ b = \\
\quad \keyw{let}\ \begin{array}[t]{@{}l@{}}
        \shvec{k, m} = \name{shape}\ a \\
        \shvec{\_, \mathrlap{n}\hphantom{m}} = \name{shape}\ b
      \end{array} \\
\quad \keyw{in}\ \multilineT{
        \name{build}_2\ \shvec{k, n}\ (\lambda \ixvec{i, j}. \\
        \quad \name{sum}\ (\name{build}_1\ m\ (\lambda p.\ a \mathbin{!} \ixvec{i, p} * b \mathbin{!} \ixvec{p, j}))))
      }
\end{array} \]
then one would certainly not be satisfied with having to write something like the following:
\[ \begin{array}{@{}l@{}}
\name{matmatFirstOrder} :: \Array\ 2\ \R \to \Array\ 2\ \R \to \Array\ 2\ \R  \\
\name{matmatFirstOrder}\ a\ b = \\
\quad \keyw{let}\ \begin{array}[t]{@{}l@{}}
        \shvec{k, m} = \name{shape}\ a \\
        \shvec{\_, \mathrlap{n}\hphantom{m}} = \name{shape}\ b
      \end{array} \\
\quad \keyw{in}\ \name{sumInner}\ (\name{transpose}\ (\name{replicate}\ n\ a) * \name{replicate}\ k\ (\name{transpose}\ b))
\end{array} \]
% \[ \begin{array}{@{}l@{}}
% \name{matmatFirstOrder} :: \Array\ 2\ \R \to \Array\ 2\ \R \to \Array\ 2\ \R  \\
% \name{matmatFirstOrder}\ a\ b = \\
% \quad \keyw{let}\ \begin{array}[t]{@{}l@{}}
%         \shvec{k, m} = \name{shape}\ a \\
%         \shvec{\_, \mathrlap{n}\hphantom{m}} = \name{shape}\ b
%       \end{array} \\
% \quad \keyw{in}\ \name{sumInner}\ \multilineT{
%         (\name{gather}\ \shvec{k, n, m}\ a\ (\lambda\ixvec{i, \_, p}.\ \ixvec{i, p}) \\
%         \hphantom{(}\ * \\
%         \hphantom{(}\name{gather}\ \shvec{k, n, m}\ b\ (\lambda\ixvec{\_, j, p}.\ \ixvec{p, j}))
%       }
% \end{array} \]
Both versions of `$\name{matmat}$' do the same thing, but the first is clearly easier to understand and to get correct.

\subsection{Our core language}\label{sec:core-language}

Evidently, we want our array language (the ``core'' language --- used for both input and output of the differentiation algorithm) to be in higher-order style as much as possible.\footnote{On the other hand, it turns out that because first-order bulk operations are much better for efficient reverse AD using the dual-numbers framework, we end up converting `$\name{build}$'-code into bulk operations in \cref{sec:bot}!}
Unfortunately, it turns out that a generic reduction operation, named `$\name{foldInner}$' above, is very difficult to support in a dual-numbers reverse AD framework if one desires fast gradient code, as we do.
Thus, while we do have `$\name{build}$' and all the other operations derivable from it (such as `$\name{map}$'), we have to make do with specific reductions like `$\name{sum}$' and `$\name{maximum}$'.\footnote{
    Our implementation (\cref{sec:implementation}) does experimentally support a general fold operation, restricted to closed combination functions.
}
For conciseness, we include only `$\name{sum}$' explicitly in the language, because adding support for other specific reductions is simple and requires little more than writing down their derivative.

\begin{figure}
\begin{equation*}
\begin{array}{@{}r@{\ }r@{\ }l@{\ \ \ }l@{}}
s, t, u, v & \coloneqqq & c & \text{(constant literal tensors, e.g.\ \arrlit{\arrlit{3,1.2,17},\arrlit{-5,0.4,1}})} \\
& \mid & x \mid \cletin{x = u}{v} & \text{(variables and binding)} \\
& \mid & \cc{cond}\ t\ u\ v & \text{(strict conditionals)} \\
& \mid & \mathit{op}\ u\ v \mid \mathit{op}\ t & \text{(broadcasted (elementwise) binary and unary ops.)} \\
& \mid & \cc{index}\ t\ ix  & \text{(index at a multidimensional position)}\\
& \mid & \cc{sumOuter}\ t & \text{(reduce along the outermost dimension)} \\
& \mid & \cc{gather}\ sh\ t\ (\clam {\var{is}} ix) & \text{(backward permutation / batched array indexing; \cref{sec:core-gather-scatter})} \\
& \mid & \cc{scatter}\ sh\ t\ (\clam {\var{is}} ix) & \text{(forward permutation onto zeros; see \cref{sec:core-gather-scatter})} \\
& \mid & \ravel{t_1,\ldots,t_n} & \text{(combine equal-shaped arrays into one with 1 more dimension)} \\
& \mid & \cc{replicate}\ k\ t & \text{(add an outermost dimension of size $k$ by replicating contents)} \\
& \mid & \ctransp{k_1, \ldots, k_n}\ t & \text{(generalised transposition; $n = \text{rank of $t$}$,} \\
% & & & \quad \text{$\{k_1,\ldots,k_{n-1}\} = \{0,\ldots,n-1\}$)} \\
& & & \quad \text{$k_1,\ldots,k_n$ must be a permutation of $0,\ldots,n-1$)} \\
& \mid & \cc{reshape}\ sh\ t & \text{($\text{product of $\var{sh}$} = \text{product of (shape of $t$)}$)} \\
% fromIntegral is just a unary 'op'
& \mid & \cc{build1}\ k\ (\clam i t) & \text{(construct a new array elementwise)} \\
k & \coloneqqq & 0 \mid 1 \mid 2 \mid \ldots & \text{(a constant (i.e.\ static) natural)} \\
sh & \coloneqqq & \shvec{} \mid k ::: sh & \text{(a constant shape; shorthand: $\shvec{k_1,\ldots,k_n} = k_1 ::: \ldots ::: k_n ::: \shvec{}$)} \\
ix & \coloneqqq & \ixvec{} \mid t ::: ix & \text{(a dynamic index; shorthand: $\ixvec{t_1,\ldots,t_n} = t_1 ::: \ldots ::: t_n ::: \ixvec{}$)} \\
% Note: fromLinearIdx is not in TensorClass but a standalone function: https://github.com/Mikolaj/horde-ad/blob/3da9d7a0ced36c4e96a628418595b6eb5cc3db97/src/HordeAd/Util/SizedIndex.hs#L293
is & \coloneqqq & \ixvec{} \mid x ::: is & \text{(index variables; shorthand: $\ixvec{x_1,\ldots,x_n} = x_1 ::: \ldots ::: x_n ::: \ixvec{}$)}
\end{array}
\end{equation*}
\caption{\label{fig:core-grammar}
    The grammar of the core language.
    % We use $\ixvec{A_1, \ldots, A_n}$ as shorthand for $A_1 ::: \ldots ::: A_n ::: \ixvec{}$ for $ix$ and $is$.
    % Similarly, $\shvec{k_1, \ldots, k_n}$ is shorthand for $k_1 ::: \ldots ::: k_n ::: \shvec{}$ for $sh$.
    We variously use other variable names than $x$ in expressions, especially ``$i$'' for embedded variables of type `$\Array\ \shvec{}\ \Int$'.
}
\end{figure}

\begin{figure}
{\textbf{Types:}\hfill}\vspace{-2pt}
\[ \begin{array}{@{}r@{\;}c@{\;}l@{}}
    \rho &\coloneqqq& \R \mid \Int \mid \Bool \\
    \sigma, \tau &\coloneqqq& \Array\ \var{sh}\ \rho \qquad \rlap{($\var{sh}$ is a list of non-negative integers)}
    \vspace{5pt}
\end{array} \]

{\textbf{Typing rules:}\hfill}\vspace{-4pt}
\begin{gather*}
    \begin{array}{@{}c@{}}
        \fbox{$\jisindex{\Gamma}{ix}{n}$\vphantom{hp}} \\[0.5em]
        \infrule
            {\Gamma \vdash t_1 : \Array\ \shvec{}\ \Int
             \precdots
             \Gamma \vdash t_n : \Array\ \shvec{}\ \Int}
            {\jisindex{\Gamma}{\ixvec{t_1, \ldots, t_n}}{n}}
    \end{array}
    \qquad
    \begin{array}{@{}c@{}}
        \fbox{$\jisnumeric{\rho}$\vphantom{hp}} \\[0.5em]
        \infrule{\strut}{\jisnumeric{\R}} \qquad
        \infrule{\strut}{\jisnumeric{\Int}}
    \end{array}
    \\[0.5em]
    \fbox{$\Gamma \vdash t : \tau$\vphantom{hp}} \\[0.5em]
    \infrule
        {\text{$c$ an array of shape $\var{sh}$ filled with $\rho$s}}
        {\Gamma \vdash c : \Array\ \var{sh}\ \rho}
    \\
    \infrule
        {x : \Array\ \var{sh}\ \rho \in \Gamma}
        {\Gamma \vdash x : \Array\ \var{sh}\ \rho}
    \rulesep
    \infrule
        {\Gamma \vdash v : \Array\ \var{sh}_1\ \rho_1 \precsep
         \Gamma, x : \Array\ \var{sh}_1\ \rho_1 \vdash u : \Array\ \var{sh}_2\ \rho_2}
        {\Gamma \vdash \cletin{x = u}{v} : \Array\ \var{sh}_2\ \rho_2}
\end{gather*}
\[ \begin{array}{@{}l@{\ }l@{}}
    \cc{cond} : \Array\ 0\ \Bool \to \Array\ \var{sh}\ \rho \to \Array\ \var{sh}\ \rho \to \Array\ \var{sh}\ \rho \\
    \mathit{op} : \Array\ \var{sh}\ \rho \to \Array\ \var{sh}\ \rho \to \Array\ \var{sh}\ \rho & \text{(binary arithmetic operations)} \\
    \mathit{op} : \Array\ \var{sh}\ \rho \to \Array\ \var{sh}\ \rho \to \Array\ \var{sh}\ \Bool & \text{(binary comparison operations)} \\
    \mathit{op} : \Array\ \var{sh}\ \rho \to \Array\ \var{sh}\ \rho & \text{(unary arithmetic operations)} \\
    \cc{sumOuter} : \Array\ (k ::: \var{sh})\ \rho \to \Array\ \var{sh}\ \rho & \text{(for numeric $\rho$ (i.e.\ $\R$, $\Int$))}
\end{array} \]
\begin{gather*}
    \infrule
        {\Gamma \vdash t : \Array\ \shvec{k_1, \ldots, k_n}\ \rho \precsep
         \jisindex{\Gamma}{\var{ix}}{m} \precsep
         m \leq n}
        {\Gamma \vdash \cc{index}\ t\ \var{ix} : \Array\ \shvec{k_{m+1}, \ldots, k_n}\ \rho}
    \\
    \infrule
        {\Gamma \vdash t : \Array\ \shvec{k_1, \ldots, k_{m_2}, k_{m_2+1}, \ldots, k_n}\ \rho \\
         \jisindex{\Gamma,i_1,\ldots,i_{m_1} : \Array\ \shvec{}\ \Int}{\var{ix}}{m_2} \precsep
         m_1, m_2 \leq n}
        {\Gamma \vdash \cc{gather}\ \shvec{k'_1, \ldots, k'_{m_1}, k_{m_2+1}, \ldots, k_n}\ t\ (\clam{\ixvec{i_1,\ldots,i_{m_1}}} \var{ix}) : \Array\ \shvec{k'_1, \ldots, k'_{m_1}, k_{m_2+1}, \ldots, k_n}\ \rho}
    \\
    \infrule
        {\Gamma \vdash t : \Array\ \shvec{k_1, \ldots, k_{m_1}, k_{m_1+1}, \ldots, k_n}\ \rho \precsep
         \jisnumeric{\rho} \\
         \jisindex{\Gamma,i_1,\ldots,i_{m_1} : \Array\ \shvec{}\ \Int}{\var{ix}}{m_2} \precsep
         m_1, m_2 \leq n}
        {\Gamma \vdash \cc{scatter}\ \shvec{k'_1, \ldots, k'_{m_2}, k_{m_1+1}, \ldots, k_n}\ t\ (\clam{\ixvec{i_1,\ldots,i_{m_1}}} \var{ix}) : \Array\ \shvec{k'_1, \ldots, k'_{m_2}, k_{m_1+1}, \ldots, k_n}\ \rho}
    \\
    \infrule
        {\Gamma \vdash t_1 : \Array\ \var{sh}\ \rho
         \precdots
         \Gamma \vdash t_n : \Array\ \var{sh}\ \rho}
        {\Gamma \vdash \ravel{t_1, \ldots, t_n} : \Array\ (n ::: \var{sh})\ \rho}
    \\
    \infrule
        {\text{$k$ a constant integer $\geq 0$} \precsep
         \Gamma \vdash t : \Array\ \var{sh}\ \rho}
        {\Gamma \vdash \cc{replicate}\ k\ t : \Array\ (k ::: \var{sh})\ \rho}
    \\
    \infrule
        {\text{$j_1,\ldots,j_m$ is a permutation of $0,\ldots,m-1$} \precsep
         \Gamma \vdash t : \Array\ \shvec{k_1, \ldots, k_n}\ \rho \precsep
         m \leq n}
        {\Gamma \vdash \ctransp{j_1,\ldots,j_m}\ t : \Array\ \shvec{k_{j_1+1}, \ldots, k_{j_m+1}, k_{m+1}, \ldots, k_n}\ \rho}
    \\
    \infrule
        {\Gamma \vdash t : \Array\ \shvec{k_1, \ldots, k_m}\ \rho \precsep
         \prod_{i=1}^m k_i = \prod_{i=1}^n k'_i}
        {\Gamma \vdash \cc{reshape}\ \shvec{k'_1, \ldots, k'_n}\ t : \Array\ \shvec{k'_1, \ldots, k'_n}\ \rho}
    \\
    \infrule
        {\text{$k$ a constant integer $\geq 0$} \precsep
         \Gamma, i : \Array\ \shvec{}\ \Int \vdash t : \Array\ \var{sh}\ \rho}
        {\Gamma \vdash \cc{build1}\ k\ (\clam{i} t) : \Array\ (k ::: \var{sh})\ \rho}
\end{gather*}
\caption{\label{fig:core-typing}
    Typing rules for the core language.
    \TODO{Compact these typing rules somehow?}
}
\end{figure}

The syntax of the core language is given in \cref{fig:core-grammar}, and the type system and typing rules (some abbreviated as pseudo-type-signatures) are given in \cref{fig:core-typing}.
Some notes to clarify parts of \cref{fig:core-grammar,fig:core-typing} that might be unfamiliar or unclear:
\begin{itemize}
\item
    We use the word \emph{rank} to denote the number of dimensions of an array, and by extension, for array-typed terms, the number of dimensions of their output.

\item
    The language is shape-typed: the \emph{shape} (the list of all dimensions' sizes) of an array is reflected on the type-level.
    This results in typing that is stronger than most other array languages.
    For example, a 3-by-2 array of scalars, where ``2'' is the size of the inner dimension, would have type $\Array\ \shvec{3, 2}\ \R$.

\item
    Tuples and nested arrays are unsupported:\footnote{Our implementation fullly supports pairs, and provides some support for nested arrays through \url{https://hackage.haskell.org/package/ox-arrays}. These extensions do not produce interesting algorithmic problems, so this paper excludes them for simplicity.} every expression is of array type, and the only arrays are multidimensional arrays of \emph{element types} (denoted by $\rho$ in \cref{fig:core-typing}).
    Hence, all arrays are \emph{regular}: there are no jagged arrays.
    What one might expect to be scalar subexpressions are really zero-dimensional arrays in our language; see, for example, the ``$\jisindex{\Gamma}{\var{ix}}{n}$'' judgement in \cref{fig:core-typing}, as well as its use in e.g.\ the rule for $\cc{index}$ in the same figure.

\item
    `$\mathit{op}$' stands for an arbitrary unary or binary primitive arithmetic or comparison operator on scalars; we consider these to automatically broadcast to arrays of equal shapes.
    Thus, $s + t$ is valid if $s$ and $t$ are terms producing arrays of the same shapes, and computes their elementwise sum.
    We lump all of these together in a single syntactic element because the differences are immaterial in most of the algorithms in this paper.

\item
    `$\cc{tr}$' is a generalised array transposition: if $a$ is a 4-dimensional array with shape $\shvec{5, 3, 6, 9}$ (with 5 being the outermost dimension and 9 the innermost), then `$\ctransp{3, 0, 1, 2}\ a$' is a 4-dimensional array with shape $\shvec{9, 5, 3, 6}$.
    Also see the typing rule for $\cc{tr}$ in \cref{fig:core-typing}.

\item
    `$\cc{sumOuter}$' reduces elementwise along the \emph{outermost} dimension.
    This is to be dual with $\cc{replicate}$, making its derivative rule in \cref{sec:ad-dual-arrays} more elegant; but note that an e.g.\ inner-dimension sum can be recreated from $\cc{sumOuter}$ by combining it with some transpositions.

\item
    The expressions that make up an index expression $\var{ix}$ cannot have sharing between them in our grammar.
    This is relevant in the index mapping functions passed to $\cc{gather}$ and $\cc{scatter}$ (for their semantics, see below in \cref{sec:core-gather-scatter}).
    This is for simplicity of presentation and not a fundamental limitation.
\end{itemize}

There are a number of peculiarities and restrictions in this core language that result from the ``\botnamelc'' that we will apply to the program in \cref{sec:bot}, before the actual differentiation.
The most important ones are:
\begin{itemize}
\item
    Statically known array shapes only: the size of array dimensions is not allowed to depend on intermediate values computed earlier in the program.
    If one wants to rerun a differentiated program on differently sized arrays, one has to re-differentiate and re-compile the program.
\item
    Conditionals have strict semantics, sometimes called \emph{selections}: `$\cc{cond}\ t\ u\ v$' first evaluates $t$, $u$ and $v$, and subsequently returns either the second or the third argument depending on the value of the first.
\item
    Finally, the language does not support separate top-level functions; all must be a single expression (with possible internal let-bindings, of course).
\end{itemize}
Very roughly, these restrictions exist because we want to be able to eliminate $\cc{build1}$ from the program by ``vectorising'' it into other array operations.
After introducing the \botnamelc, we discuss these restrictions again in \cref{sec:core-language-design-justification}.

\subsection[Semantics of gather and scatter]{Semantics of $\cc{gather}$ and $\cc{scatter}$}\label{sec:core-gather-scatter}

\paragraph{Gather}
As can be inferred from its typing rule in \cref{fig:core-typing}, the `$\cc{build1}$' primitive in the language constructs a $(k + 1)$-dimensional array given a function that maps a single index to a $k$-dimensional array.
Using $\cc{build1}$, we can create a multidimensional $\cc{build}$ operation as a notational shorthand:
\[
    \cc{build}\ \shvec{k_1, \ldots, k_n}\ (\clam{\ixvec{i_1, \ldots, i_n}} t)
    \coloneqq \cc{build1}\ k_1\ (\clam{i_1} \ldots (\cc{build1}\ k_n\ (\clam{i_n} t))\ldots)
\]
Then, the `$\cc{gather}$' primitive is really just a specialisation of this `$\cc{build}$':
\[
    \cc{gather}\ \var{sh}\ a\ (\clam{\var{is}} t)
    = \cc{build}\ \var{sh}\ (\clam{\var{is}} \cc{index}\ a\ t))
\]
We need $\cc{gather}$ explicitly in the language, despite it being expressible using $\cc{build1}$, in order to properly represent the output of the \botnamelc.
This will be discussed in more detail in the following sections.

\paragraph{Scatter}
The `$\cc{scatter}$' operation is the dual of `$\cc{gather}$', and is included in the language not only because it is necessary for histogram-like operations (which cannot be otherwise expressed using the rest of the core language), but also because it forms the reverse derivative of `$\cc{gather}$' (see \cref{sec:ad-dual-arrays-terms}).
In `$\cc{scatter}\ \var{sh}\ t\ (\clam{\ixvec{i_1, \ldots, i_n}} ix)$', the argument $\var{sh}$ gives the shape of the result of the operation, $t$ is the array of input values to be scattered, and the function determines \emph{where} the elements of $t$ are to be written in the output array.
Multiple values sent to the same location are added with $(+)$.
For example, using single-dimensional arrays only, the folowing program (writing flooring integer division as a binary operator $(\mathrm{div})$):
\[
    \cc{scatter}\ \shvec{6}\ \arrlit{1,2,3,4,5,6,7,8,9}\ (\clam{\ixvec{i}} \ixvec{i \mathbin{\mathrm{div}} 2})
\]
returns the array $\arrlit{3,7,11,15,9,0}$.
This result is computed as follows:
\begin{itemize}
\item
    Indices 0 and 1 are both sent to $0 \mathbin{\mathrm{div}} 2 = 1 \mathbin{\mathrm{div}} 2 = 0$, thus the values 1 and 2 are added together to yield 3.
\item
    The last value in the source array (9) is sent to index $8 \mathbin{\mathrm{div}} 2 = 4$, and it is the only element sent to this position; hence the result has 9 at index 4.
\item
    No element is sent to index 5 of the output, hence the result is zero.
\end{itemize}

\section{Naive Extension To Arrays: Unsuccessful}\label{sec:ad-array-fail}

Now that we have an array language to differentiate, let us try to extend the basic dual-numbers reverse AD algorithm from \cref{sec:ad-naive} to our core language from \cref{sec:core-language} in the ``obvious'' way, and see what goes wrong.
The problems that arise will inform the changes and optimisations that we make, eventually resulting in the final algorithm.\footnote{The designs in this section were already suggested in \cite[\S8.3]{2022-krawiec-dualrev}; we discuss them in more detail and improve upon them.}

\subsection[Scalar Dual Numbers: The Delta Explosion Problem]{Scalar Dual Numbers: The $\tDelta$ Explosion Problem}\label{sec:ad-array-fail-scalar}

The promise of dual-numbers AD is that it is extensible to almost any imaginable program construct by just adding more rules to the code transformation $D[-]$ that map over the new constructs in a structure-preserving way.
Let us do this for arrays, seeing an array as little more than a very large product type.
On the type level, we get:
\[ D[\Array\ \var{sh}\ \rho] = \Array\ \var{sh}\ D[\rho] \]
but what of the array operations?
In \cref{fig:dn-ad-terms} we had:
\[ D[s\ t] = \mathbf{do}\ f \leftarrow D[s]; x \leftarrow D[t]; f\ x \]
and our array operations look like functions, so ostensibly we get something like this:\footnote{We are abusing syntax here: e.g.\ `$\cc{build1}\ k\ (\clam i t)$' is a term, but technically `$\cc{build1}$' is not.}
\[ \begin{array}{@{}r@{\;}c@{\;}l@{}}
    D[\cc{build1}\ k\ (\clam i t)] &=& \mathbf{do}\ f \leftarrow D[\cc{build1}]; n \leftarrow D[k]; f\ n\ (\clam i D[t]) \\
    D[\cc{index}\ t\ \ixvec{t_1, \ldots, t_n}] &=& \mathbf{do}\ f \leftarrow D[\cc{index}]; a \leftarrow D[t]; i_1 \leftarrow D[t_1]; \ldots; i_n \leftarrow D[t_n]; f\ a\ \ixvec{i_1, \ldots, i_n} \\
    D[\cc{sumOuter}\ t] &=& \mathbf{do}\ f \leftarrow D[\cc{sumOuter}]; a \leftarrow D[t]; f\ a
\end{array} \]
But then what are $D[\cc{build1}]$, $D[\cc{index}]$, $D[\cc{sumOuter}]$, etc.?

\simon{unconvincing} \TODO{is this better?}
It is worth noting that $\cc{build1}$ and $\cc{index}$ on the one hand, and $\cc{sumOuter}$ on the other hand, are quite different when it comes to differentiation; let us look at $\cc{sumOuter}$ on scalars\footnote{The type system also allows summing arrays of integers, but $D[\Int] = \Int$, so we simply get $D[\cc{sumOuter}_{\Int}] = \mathbf{return}\ \cc{sumOuter}_{\Int}$.} first.
Let us limit ourselves to the 2-dimensional case for notational simplicity (read ``plane'' or ``subarray'' instead of ``row'' for 3 or higher dimensions, respectively).
Then the normal operation of $\cc{sumOuter}$ is to sum the rows of a matrix elementwise, producing a single row.
\[
    \newcommand\smallplus{\text{\small+}}
    \newcommand\pre[1]{\mathllap{\textcolor{lightgray}{#1}}}
    \newcommand\post[1]{\mathrlap{\textcolor{lightgray}{#1}}}
    \begin{array}{@{}l@{}c@{}l@{}}
        \hphantom{[[}\begin{array}{ccc}
            \pre{[[}1\post{,} & 2\post{,} & 3\post{],} \\[-5.5pt]
            \smallplus & \smallplus & \smallplus \\[-4pt]
            \pre{[}4\post{,} & 5\post{,} & 6\post{],} \\[-5.5pt]
            \smallplus & \smallplus & \smallplus \\[-4pt]
            \pre{[}7\post{,} & 8\post{,} & 9\post{]]}
        \end{array}\hphantom{]]}
        & = &
        \hphantom{[}\begin{array}{ccc}
            \pre{[}12\post{,} & 15\post{,} & 18\post{]}
        \end{array}\hphantom{]}
    \end{array}
\]
For the derivative of $\cc{sumOuter}$, instead of an array of scalars we get an array of dual numbers that we need to add:
\begin{equation}
    \newcommand\smallplus{\text{\small+}}
    \newcommand\pre[1]{\mathllap{\textcolor{lightgray}{#1}}}
    \newcommand\post[1]{\mathrlap{\textcolor{lightgray}{#1}}}
    \begin{array}{@{}l@{}c@{}l@{}}
        \hphantom{[[}\begin{array}{ccc}
            \pre{[[}(1,d_1)\post{,} & (2, d_2)\post{,} & (3, d_3)\post{],} \\[-5.5pt]
            \smallplus & \smallplus & \smallplus \\[-4pt]
            \pre{[}(4, d_4)\post{,} & (5, d_5)\post{,} & (6, d_6)\post{],} \\[-5.5pt]
            \smallplus & \smallplus & \smallplus \\[-4pt]
            \pre{[}(7, d_7)\post{,} & (8, d_8)\post{,} & (9, d_9)\post{]]}
        \end{array}\hphantom{]]}
        & = &
        \hphantom{[}\begin{array}{l}
            \pre{[}(12, \del{Share}\ \_\ (\del{Add}\ (\del{Add}\ d_1\ d_2)\ d_3))\post{,} \\
            \quad (15, \del{Share}\ \_\ (\del{Add}\ (\del{Add}\ d_4\ d_5)\ d_6))\post{,} \\
            \quad (18, \del{Share}\ \_\ (\del{Add}\ (\del{Add}\ d_7\ d_8)\ d_9))\post{]}
        \end{array}\hphantom{]}
    \end{array}
    \label{eq:dsumOuter-matrix}
\end{equation}
Note that there must be $\del{Share}$ nodes around the $\tDelta$ terms in the result because they may be used multiple times; the `$\_$'s stand for unique generated IDs.
For the time being, let us assume that there is some function `$\name{DsumOuter}$' in the target language that does precisely this: take an $n$-dimensional array of dual numbers and return an $(n-1)$-dimensional array of dual numbers by summing elementwise along the outer dimension.
With this, we get:
\[
    D[\cc{sumOuter}\ t] = \mathbf{do}\ a \leftarrow D[t]; \name{DsumOuter}\ a
\]
Note that `$\name{DsumOuter}$' is a monadic operation because it needs to generate unique IDs for the $\del{Share}$ nodes in \cref{eq:dsumOuter-matrix}.

Observe that we needed to examine the performed computation, differentiate it, and represent the differentiated result again as a program.
For $\cc{build1}$ and $\cc{index}$, the story is quite different.
This is because these operations are both parametrically polymorphic in the element type of the arrays they produce ($\cc{build1}$) or consume ($\cc{index}$).
They just ``move elements around'', and are sufficiently uncaring when the array element type changes from scalar to non-scalar, or (indeed!)\ to dual numbers.
In this, $\cc{build1}$ and $\cc{index}$ are no different than `$\name{fst}$', lambda-abstraction, etc.\ from \cref{fig:dn-ad-terms}, which we could just differentiate to themselves (modulo monadic lifting).
And indeed, it turns out that doing the same to $\cc{build1}$ and $\cc{index}$ works equally well, as long as we handle the fact that any functions passed \emph{to} $\cc{build1}$ are monadically lifted too.
% Abusing notation (because e.g.\ `$\name{build}$' itself is not an element of our syntax; `$\name{build}\ n\ f$' is), we get:
% \[ \begin{array}{@{}r@{\;}c@{\;}l@{}}
%     D[\name{build}] &=& \mathbf{return}\ \name{build} \\
%     D[(!)] &=& \mathbf{return}\ (!) \\
%     D[\name{foldl}] &=& \mathbf{return}\ \name{foldl} \\
%     D[\name{length}] &=& \mathbf{return}\ \name{length} \\
% \end{array} \]
% More properly, we obtain the following derivatives of our syntactic elements:
We obtain the following derivatives:
\begin{equation}
\begin{array}{@{}r@{\;}c@{\;}l@{}}
    D[\cc{build1}\ k\ (\clam i t)] &=& \mathbf{sequence}\ (\cc{build1}\ k\ (\clam i D[t])) \\
    D[\cc{index}\ t\ \ixvec{t_1, \ldots, t_n}] &=& \mathbf{do}\ \multilineT{
        a \leftarrow D[t]; i_1 \leftarrow D[t_1]; \ldots, i_n \leftarrow D[t_n] \\
        \mathbf{return}\ (\cc{index}\ a\ \ixvec{i_1, \ldots, i_n})
    }
\end{array} \label{eq:naive-array-derivs}
\end{equation}
where `$\mathbf{sequence}$' has type $\Array\ \var{sh}\ (\tIdGen\ \rho) \to \tIdGen\ (\Array\ \var{sh}\ \rho)$ and evaluates all monadic computations in the array, producing an array of results.

Apart from this wrinkle of having to propagate the effects, we indeed maintain the structure-preserving quality of the transformation.
Because these array operations do not act on scalars directly, they are just ``structure'', and are thus preserved by the algorithm.\footnote{%
    Proving correctness of these derivatives is somewhat subtle.
    The $\cc{index}$ operation is algebraically linear, and since a (forward) derivative is the best linear approximation of a function, the forward derivative of a linear function is just itself.
    (This idea was explored further by \citeN{2022-multilinear-derivatives}.)
    For `$\cc{build1}$', one can build confidence by expanding into individual scalar operations; a full proof requires an induction argument (using logical relations) following~\cite{2020-ad-gluing,2022-ad-logical-relations}.
}

% term-example dot
\newcommand\texDot{t_{\text{dot}}}

While these definitions are correct, and the complexity requirements are met, the resulting performance is very unsatisfactory.
Consider a simple dot product operation, expressed by the term $\texDot$ with two free variables, $a$ and $b$:
\[ \begin{array}{@{}l@{}}
    a : \Array\ \shvec{n}\ \R,
    b : \Array\ \shvec{n}\ \R \\
    \vdash
    \texDot = \cc{sumOuter}\ (\cc{build1}\ n\ (\clam{\ixvec{i}} \cc{index}\ a\ \ixvec{i} \times_\R \cc{index}\ b\ \ixvec{i}))
    : \Array\ \shvec{}\ \R
\end{array} \]
Of course, this implementation is suboptimal: a dedicated loop can easily be more than $4\times$ faster than this program, in part by eliminating the materialised intermediate array of products.
Its naive derivative, however, is far worse still, and exemplifies the problem with the array operation derivatives in \cref{eq:naive-array-derivs} just above.
After transformation to dual numbers (and some basic simplifications for readability), the program looks as follows:
\[
    D[\texDot] = \mathbf{do}\ \multilineT{
        c \leftarrow \mathbf{sequence}\ (\cc{build1}\ n\ (\clam i \mathbf{do} \\
        \hspace{2.65cm} \multilineT{
            \mathbf{let}\ (x_1, d_1) = \cc{index}\ a\ \ixvec{i}; (x_2, d_2) = \cc{index}\ b\ \ixvec{i} \\
            \var{id} \leftarrow \mgenid \\
            \mathbf{return}\ (\multilineT{
                x_1 \times_\R x_2 \\
                \leadcomma \del{Share}\ \var{id}\ (\del{Add}\ (\del{Scale}\ x_2\ d_1)\ (\del{Scale}\ x_1\ d_2)))))
            }
        } \\
        \name{DsumOuter}\ c
    }
\]
Consider what $D[\texDot]$ does: it will build \emph{a $\tDelta$ term the size of the input array}, containing (when counting carefully) $5n+1$ $\tDelta$ data constructors when given inputs of length $n$.\footnote{$4n$ from the lambda to $\cc{build1}$ and $n+1$ for the $\del{Add}$s and the outer $\del{Share}$ in $\name{DsumOuter}$.}
Aside from using far too much memory (thus also destroying memory locality), this whole tree will need to be interpreted, node by node, in the reverse pass, which allocates even more memory to hold various administrative data about all the $2n$ inputs plus the $n+1$ $\del{Share}$ nodes.
Furthermore, there is little hope of vectorising the (actually very structured) multiplications and additions in the reverse pass.

Meanwhile, a proper implementation of the reverse derivative of a dot product simply consists of two (very efficiently implementable) multiplications of a scalar with a vector.
Thus, even if this approach of adding arrays to our language is very neat, simple and extensible, it will not fly in practice.

\paragraph{Doing better}
Let us call the problem of allocating (and interpreting) far too many $\tDelta$ nodes the \emph{$\tDelta$ explosion} problem.
To fix this problem, the first thing we notice is that while we do create a tremendous number of $\tDelta$ nodes, many of them look very similar!
Indeed, \emph{all} of the runs of the lambda in `$\cc{build1}$' in $D[\texDot]$ return a $\tDelta$ subgraph with the exact same structure: $\del{Share}\ \_\ (\del{Add}\ (\del{Scale}\ \_\ d_1)\ (\del{Scale}\ \_\ d_2))$, where $d_1$ and $d_2$ are the $\tDelta$ terms of the scalars $\cc{index}\ a\ \ixvec{i}$ and $\cc{index}\ b\ \ixvec{i}$.
% \begin{center}
% \begin{tikzpicture}[xscale=0.5, yscale=0.7]
% \node (sh) at (0, 2) {$\del{Share}$};
% \node (ad) at (0, 1) {$\del{Add}$};
% \node (sc1) at (-1, 0) {$\del{Scale}$};
% \node (sc2) at (1, 0) {$\del{Scale}$};
% \node (d1) at (-1, -0.5) {$\vdots$};
% \node (d2) at (1, -0.5) {$\vdots$};
% \draw (sh) -- (ad);
% \draw (ad) -- (sc1);
% \draw (ad) -- (sc2);
% \end{tikzpicture}
% \end{center}
% where the dots stand for the $\tDelta$ terms of the scalars $a \mathbin{!} i$ and $b \mathbin{!} i$.
The subgraphs differ only in the $\tID$ in the $\del{Share}$ node and the scalars in the $\del{Scale}$ nodes.
It would be good if we can represent the $\tDelta$ term computed by $D[\texDot]$ more compactly: from the viewpoint of the arrays, are there really only a few (bulk) operations being done, and each of those has a simple derivative.

In essence, the $\tDelta$ term produced by a given differentiated program is really a \emph{trace} of the primitive operations executed by the program:\footnote{The ``entries'' in this trace contain only the partial derivatives of the operations executed, not the operations themselves.} with `$\cc{build1}$' and `$\cc{sumOuter}$' as-is, this trace is too fine-grained for efficient differentiation.
By itself, the fact that dual-numbers reverse AD generates a trace is unsurprising and already noted in~\cite[\S8]{2023-smeding-dualrev}.
However, what what we really want is a trace of size $O(\#\text{array operations})$ instead of $O(\#\text{scalar operations})$, so that we still know the array structure of the source program when we start computing its gradient.

\subsection{Dual Arrays: A Step In the Right Direction}\label{sec:ad-array-dual-arrays}

To accomplish this reduction of the trace (i.e.\ $\tDelta$ term) size, we have to teach the algorithm to consider arrays differentiable objects in and of themselves.
To that effect, we replace the naive rule $D[\Array\ \var{sh}\ \rho] = \Array\ \var{sh}\ D[\rho]$ with the following:
\[
    D[\Array\ \var{sh}\ \rho] = (\Array\ \var{sh}\ \rho, \tDelta\ \var{sh})
\]
With this rule, $\tDelta$ must now also be able to represent forward derivatives of \emph{array} computations.
Hence, we give it a type parameter $\var{sh}$: where `$\tDelta$' described the forward derivative of a computation of type $\R$ before, this `$\tDelta\ \var{sh}$' describes the forward derivative of a computation of type $\Array\ \var{sh}\ \R$.
Its base is the same as before ($\del{Zero}$, $\del{Input}$, $\del{Add}$, $\del{Scale}$, $\del{Share}$), but we add more constructors for each of the primitive array operations:
\[ \begin{array}{@{}l@{}}
    \mathbf{data}\ \tDelta\ \var{sh}\ \mathbf{where} \\
    \quad \begin{array}[t]{@{}l@{\ }c@{\ }l@{}}
        \del{Zero} &::& \tDelta\ \var{sh} \\
        \del{Input} &::& \tDVarName \to \tDelta\ \var{sh} \\
        \del{Add} &::& \tDelta\ \var{sh} \to \tDelta\ \var{sh} \to \tDelta\ \var{sh} \\
        \del{Scale} &::& \Array\ \var{sh}\ \R \to \tDelta\ \var{sh} \to \tDelta\ \var{sh} \\
        \del{Share} &::& \tID \to \tDelta\ k \to \tDelta\ k \\
        \mathrlap{\ccomment{Most array operations get a dedicated constructor:}} \\
        \del{Index} &::& \tDelta\ \shvec{k_1, \ldots, k_n} \to \Ix\ m \to \tDelta\ \shvec{k_{m+1}, \ldots, k_n} \\
        \del{SumOuter} &::& \tDelta\ (k ::: \var{sh}) \to \tDelta\ \var{sh} \\
        \del{Gather}\ &\multicolumn{2}{@{}l@{}}{:: \tDelta\ \shvec{k_1, \ldots, k_{m_2}, k_{m_2+1}, \ldots, k_n} \to (\Ix\ m_1 \to \Ix\ m_2)} \\
          &\multicolumn{2}{@{}l@{}}{\to \tDelta\ \shvec{k'_1, \ldots, k'_{m_1}, k_{m_2+1}, \ldots, k_n}} \\
        \del{Replicate} &::& \tDelta\ \var{sh} \to \tDelta\ (k ::: \var{sh}) \\
        \mathrlap{\ccomment{etc., others elided}}
    \end{array}
\end{array} \]
where $\Ix\ m$ is an $m$-dimensional index, i.e.\ simply $m$ integers:
\[ \begin{array}{@{}l@{}}
    \mathbf{data}\ \Ix\ k\ \mathbf{where} \\
    \quad \begin{array}[t]{@{}l@{\ }l@{\ }l@{}}
        \IZ &::& \Ix\ 0 \\
        (\IS) &::& \Int \to \Ix\ k \to \Ix\ (k + 1)
    \end{array}
\end{array} \]

Indeed, because we use $\tDelta$ to represent the forward derivative of programs in our language, $\tDelta$ must certainly be able to (somehow) represent the forward derivatives of all \emph{primitive operations} in our language.
So far, for primitive arithmetic operations with at most one scalar output, we could make do with $\del{Zero}$, $\del{Scale}$ and $\del{Add}$ for this purpose, because (forward) derivatives are linear and all linear functions $\R^n \to \R$ are simply linear combinations.
For example, the forward derivative of $\lambda x\,y.\ x \times_\R y$ at inputs $x, y$ is $\lambda \var{dx}\,\var{dy}.\ y \times_\R \var{dx} + x \times_\R \var{dy}$, which is written $\lambda \var{dx}\,\var{dy}.\ \del{Add}\ (\del{Scale}\ y\ \var{dx})\ (\del{Scale}\ x\ \var{dy})$ in the $\tDelta$ language (compare \cref{fig:dn-ad-terms}).

For more general linear functions, however, this normal form (a linear combination) generalises to a matrix: for each individual scalar in the output of the linear function, we could give an $\del{Add}/\del{Scale}/\del{Zero}$ expression in terms of the operation's inputs.
The resulting array of scale factors (for each output with respect to each input) is precisely the Jacobian matrix of the operation that this linear function is the derivative of.
The size of this matrix is $(\#\text{scalars in output}) \cdot (\#\text{scalars in input})$, and while typically sparse, the sparsity pattern is heavily dependent on the specific operation (e.g.\ $\cc{gather}$, $\cc{replicate}$, $\cc{sumOuter}$).
So a good sparse representation would need to be a sum type over all the primitive operations, storing in each case just the information necessary to reconstruct the full Jacobian.\footnote{It would be grossly inefficient to just materialise all those Jacobians densely.}

It turns out that a very good sparse representation of the Jacobian of a particular array operation is simply a program that computes the forward derivative (how its output changes in response to a particular change to its inputs; the \emph{total derivative}).
The $\tDelta$ constructors that we add for the primitive array operations, such as $\del{Index}$, $\del{SumOuter}$, etc.\ in $\mathbf{data}\ \tDelta\ \var{sh}$ above, are precisely that --- and their semantics is what one expects, just like the semantics of $\del{Add}$, $\del{Scale}$, etc.\ whas precisely what one expects under $\eval$.

\paragraph{Forward derivative of a linear function is itself}
Given the function $f = \lambda x\,y.\ 2x + 5y$, some input $x, y$, and some small change $\Delta x, \Delta y$ to that input, how much does $f\ (x + \Delta x)\ (y + \Delta y)$ differ from $f\ x\ y$?
Well, $2\Delta x + 5\Delta y$, surely, because $f$ is a linear function (a vector space homomorphism).
The forward derivative of $f$ at some input $x,y$ is simply $f$ itself.
This holds for all linear functions, and surprisingly many useful functions are linear: as a special case, as long as a function just rearranges and/or adds values from its input, it is certainly linear, and this is in fact true for all array operations in our core language (\cref{fig:core-grammar}) except for the arithmetic operations $\mathit{op}$ and our higher-order operation $\cc{build1}$ (because its lambda argument may be non-linear).
Thus, for example, the Delta constructor corresponding to $\cc{index}$ (i.e. its forward derivative) has the same semantics as the $\cc{index}$ operation itself, and is hence simply called `$\del{Index}$'.

\subsection{The $\tDelta$ explosion problem again}\label{sec:ad-array-build-explosion}

So the linear array operations are relatively straightforward, and it turns out that because our primitive arithmetic operations on arrays are elementwise, their derivative $\tDelta$ terms are essentially the same as we wrote for the scalar algorithm in \cref{fig:dn-ad-terms}.
So what about the one remaining operation, $\cc{build1}$?
\[
    D[\cc{build1}\ k\ (\clam i t)] = {??}
\]
Say for simplicity that $t : \Array\ \shvec{}\ \R$.
Then the source term is of type $\Array\ \shvec{k}\ \R$, hence the `$??$' should be of type $D[\Array\ \shvec{k}\ \R] = (\Array\ \shvec{k}\ \R, \tDelta\ \shvec{k})$.
Regardless of how exactly we compute it, the second component of `$??$' (call it $d$) should be a $\tDelta$ term that describes the forward derivative of the whole $\cc{build1}$ operation, and furthermore the number of nodes in this $\tDelta$ term $d$ should be much less than $k$ (if we are to fix the $\tDelta$ explosion problem).

However, regardless of how we extend the $\tDelta$ data type, surely $d$ depends on $t$, and furthermore it depends on the execution paths that each \emph{individual} execution of $t$ took for each index $i$.
Indeed, the source term's derivative depends on those execution paths, and $d$ should express precisely that derivative.
The most we can do (apart from non-compositionally handling special cases) is something like this:
\[ \begin{array}{@{}l@{}}
    \mathbf{data}\ \tDelta\ \var{sh}\ \mathbf{where} \\
    \quad \begin{array}[t]{@{}l@{}}
        \ccomment{\ldots{} \del{Zero}, \del{Input}, etc.\ as before} \\
        \del{Build1} :: \Array\ \shvec{k}\ (\tDelta\ \var{sh}) \to \tDelta\ (k ::: \var{sh}) \\
    \end{array}
\end{array} \]
The array of $\tDelta$s records the forward derivatives of the lambda for each index $i$ of the $\name{build}$ operation.
While this extension of $\tDelta$ surely allows us to write down a derivative of $\cc{build1}$, it does not solve anything: we have as many $\tDelta$ nodes as before, only organised differently.\footnote{Where previously we had $D[\Array\ \var{sh}\ \R] = \Array\ \var{sh}\ (\R, \tDelta)$, we now essentially have $D[\Array\ \var{sh}\ \R] = (\Array\ \var{sh}\ \R, \Array\ \var{sh}\ \tDelta)$; that $\Array\ \var{sh}\ \tDelta$ is just wrapped in a $\del{Build1}$ constructor.}

The source of this problem is that there is still element-wise scalar computation in the program (in particular, inside $\cc{build1}$), and this scalar computation must be differentiated faithfully to a $\tDelta$ term.\footnote{Specifically, computation on \emph{scalars}: computation on individual \emph{integers} is perfectly fine, because differentiation does not touch that.}
In contrast, \emph{first-order} array operations, e.g.\ elementwise arithmetic operators such as `$(\vecadd) :: \Array\ \var{sh}\ \R \to \Array\ \var{sh}\ \R \to \Array\ \var{sh}\ \R$' but also other bulk array operations that we already have such as $\cc{sumOuter}$, can be differentiated without any trouble as primitive operations in the language.
The trouble comes from user-written scalar-level code that is executed many times.

Before we fix the problem with $\cc{build1}$, let us investigate a second problem that has silently appeared: the \emph{one-hot problem}.

\subsection{The One-Hot Problem}\label{sec:ad-array-onehot}

As it turns out, the move to dual arrays has not only left the $\tDelta$ explosion problem unsolved (in $\cc{build1}$), it also creates a new problem: the derivative for $\cc{index}$, while easily written down, has very bad performance.
This problem is not yet visible in $\tDelta$; a priori, the $\del{Index}$ constructor of $\tDelta$ that we gave in \cref{sec:ad-array-dual-arrays} seems quite reasonable:
\[ \begin{array}{@{}l@{}}
    \mathbf{data}\ \tDelta\ \var{sh}\ \mathbf{where} \\
    \quad \begin{array}[t]{@{}l@{}}
        \ccomment{\ldots{} etc.} \\
        \del{Index} :: \tDelta\ \shvec{k_1, \ldots, k_n} \to \Ix\ m \to \tDelta\ \shvec{k_{m+1}, \ldots, k_n} \\
        \ccomment{\ldots{} etc.}
    \end{array}
\end{array} \]
% After all, indexing being a projection \smash{$\rho^{k_1\times\cdots\times k_n} \to \Int^m \to \rho^{k_{m+1}\times\cdots\times k_n}$} and thus an algebraically linear operation, its forward derivative is simply indexing again.
%
However, while the forward derivative of indexing looks (and is) innocuous, its reverse derivative is problematic: the transposed interpretation of $\del{Index}$ (as $\eval$ will need to implement) has to take the cotangent of a single array element and produce a cotangent for the array that the element was projected from.
This produced array cotangent will be a \emph{one-hot array}: in case the indexed array is single-dimensional, this looks like $\arrlit{0, \ldots, 0, d, 0, \ldots, 0}$, where $d$ is the incoming cotangent for $\del{Index}$ and its position in the one-hot vector is the original projection index.
Especially if $\cc{index}$ is used many times, as in e.g.\ $\texDot$ from \cref{sec:ad-array-fail-scalar}, the fact that all these one-hot vectors will be added together means that the reverse derivative of $\texDot$ takes $O(n^2)$ time instead of $O(n)$ time!
This is unacceptable.

\paragraph{No sparse arrays}
The reader may wonder if we can solve this one-hot problem by representing cotangent arrays (i.e.\ the backpropagated derivatives in $\eval$) sparsely.
Indeed, by giving array cotangents a sparse runtime representation, creating a one-hot array becomes a constant-time operation, and with a sufficiently clever reduction implementation, an array of such sparse cotangents can even be summed relatively efficiently.
However, in practice, the majority of array cotangents are, or become, dense, and it is well-known that performing array operations on a sparse array that is actually completely full (i.e.\ dense data with a sparse representation) has significant overhead as compared to working with dense arrays directly.
Furthermore, sparse arrays do nothing to alleviate the $\tDelta$ explosion problem, and our solution to the $\tDelta$ explosion problem also mostly addresses the one-hot problem anyway.
Hence, we ignore sparse arrays as a potential solution in this paper.

\paragraph{Mutable accumulators}
Another way the one-hot problem can likely be addressed is to perform a Cayley transform, somewhat similar to the one described in~\cite[\S5]{2023-smeding-dualrev}, and replace one-hot vectors by local modifications of a mutable gradient accumulator.
However, this still does not solve the $\tDelta$ explosion problem, and avoiding mutable updates keeps the algorithm purely functional.

\subsection{Dual Arrays With Bulk Operations: Our Solution}\label{sec:ad-array-fail-solution}

As already remarked at the end of \cref{sec:ad-array-fail-scalar}, the computation paths that the lambda invocations in a $\cc{build1}$ actually take are in practice often extremely similar.
In the approach taken in this paper, we make the most of this observation: we design a code transformation that \emph{eliminates} the higher-order $\cc{build1}$ operation and turns it into first-order bulk array operations that can be differentiated neatly as-is.
This code transformation ``pushes'' $\cc{build1}$ and $\cc{index}$ down into expressions, and thus looks a lot like a certain kind of vectorisation, or ``unfusion''.
Representative rules are the following:
\[ \begin{ietarray}
\ietrule
    {\cc{build1}\ k\ (\clam i \mathit{op}\ t\ u)}
    {\mathit{op}\ (\cc{build1}\ k\ (\clam i t))\ (\cc{build1}\ k\ (\clam i u))}
    {} \\
\ietrule
    {\cc{index}\ (\cletin{x = v}{u})\ ix}
    {\cletin{x = v}{\cc{index}\ u\ ix}}
    {}
\end{ietarray} \]
To avoid ascribing even more meanings to the word ``vectorisation'', we call our transformation the \emph{\botnamelc}, or \BOT.

As an example, consider the following source program (fragment):
\[
    \cc{build1}\ k\ (\clam i \cc{index}\ a\ \ixvec{i} + 1)
\]
The \BOT will turn this into the following:\footnote{Of course, if $a$ has length $k$, then $\cc{gather}\ \shvec{k}\ a\ (\clam i i) = a$; in general, however, a $\cc{gather}$ is required.}
\[
    \cc{gather}\ \shvec{k}\ a\ (\clam i i) + \cc{replicate}\ k\ 1
\]
% These `$\name{gather}$' and `$\name{replicate}$' operations, while common enough on their own elsewhere, arise mostly out of vectorisation of their `$\name{build}$'-equivalents, and their semantics can be defined in terms of their originals:
% \[ \begin{array}{@{}r@{\;}l@{}}
%     \name{gather}\ n\ a\ f &= \name{build}\ n\ (\lambda i.\ a \mathbin{!} f\ i) \\
%     \name{replicate}\ n\ x &= \name{build}\ n\ (\lambda \_.\ x)
% \end{array} \]

The crucial point is that these more specific array combinators do not suffer from the $\tDelta$ explosion problem like $\cc{build1}$ does.
In fact, their derivatives are quite small; examples, including $\cc{gather}$ and $\cc{replicate}$, are given in \cref{fig:ad-rules-dual-arrays} in \cref{sec:ad-dual-arrays}.\footnote{Elementwise-broadcasted primitive operations have an elementwise forward derivative; operations like $\cc{gather}$ have a small derivative because the lambda passed to $\cc{gather}$ need not be differentiated. We will see this again in \cref{sec:ad-dual-arrays-terms}.}

The result of the \BOT is thus that the user can write explicitly indexed code using `$\name{build}$' (as well as derived operations such as `$\name{map}$'), yet the AD algorithm can ignore the existence of `$\name{build}$' and work solely on effectively differentiable bulk array operations.
This solves the $\tDelta$ explosion problem from \cref{sec:ad-array-build-explosion}.

\paragraph{The one-hot problem}
The \BOT also has an effect on array indexing because an indexing operation inside $\cc{build1}$ is turned into a bulk $\cc{gather}$ operation, like in the small example above.
This is a big improvement: where the reverse derivative of $\cc{index}$ was a one-hot array, the reverse derivative of $\cc{gather}$ is a ``multi-hot'' array that contains non-zero values at all the positions that are read by at least one of the indexing operations collected together in that $\cc{gather}$, and zeros elsewhere.
In the simple example above, the entirety of $a$ is used (assuming $a$ has length $k$), so this ``multi-hot'' derivative of $\cc{gather}$ is actually fully dense.

More generally, it still holds that one usually does not ``ignore'' a large fraction of an array --- and if one does, there is typically some other part of the program that conversely uses just the part that was ignored here.
Hence, we expect that in practice, these ``multi-hot'' arrays arising from the reverse derivative of `$\name{gather}$' will be quite dense.
For the programs for which this is true, the \BOT not only solves the $\tDelta$ explosion problem, it also solves the indexing one-hot problem.

\paragraph{No general fold}
The downside of the \BOT is that while $\cc{build1}$ is fully supported, other higher-order array operations like `$\name{foldl}$' would significantly complicate the algorithm.
The reason is that while we could eliminate the higher-orderness inherent in $\cc{build1}$, we cannot eliminate a higher-order fold in the same way, so elementwise code remains in the program to be differentiated, and the $\tDelta$ explosion problem returns.
Thus, such other higher-order array operations are \emph{unsupported} in this paper.
However, a general reduction operation (as opposed to the typical first-order ones, such as `$\name{sum}$' and `$\name{maximum}$', which are supported just fine) is much less common in typical numerical code than a general elementwise computation, so the algorithm remains useful even with this limitation.

We describe the full transformation in \cref{sec:bot}.

% \paragraph{Getting an AST}
% \TODO{This bit about staging should move to \cref{sec:core-language}, where we can use it to argue that the limited structural freedom in the core language is not really a problem in practice.}
% In order to apply a code transformation, we must have the input program as an abstract syntax tree (AST).
% The traditional way to ensure this is to write a bespoke parser and type checker; another way is to create a deeply-embedded language in some host language that is reified to an AST at runtime.
% The latter process is typically called \emph{staging} and is sometimes harder to understand for users (due to the introduction of a distinction between the meta-programming stage and the embedded stage) but yields an inherently powerful metaprogramming language that allows the user to express various static and dynamic control flow patterns.
% A description of how this might work in our context is given in \cref{sec:staging}.

\subsection{Paper structure}

The full description of the algorithm consists of three parts:
\begin{itemize}
\item
    \cref{sec:bot}: The \botnamelc that eliminates $\cc{build1}$ from the input program.

\item
    \cref{sec:ad-dual-arrays}: The adaptation of the dual-numbers reverse AD algorithm from \cref{sec:ad-naive} to work on dual arrays.
    This works out and results in an efficient gradient computation because there is no \emph{active} (roughly: differentiable) non-broadcasted elementwise computation any more.

\item
    \cref{sec:restaging}: Making the reverse pass (in particular, $\eval$) symbolic.
    This allows us to differentiate a program \emph{once} and run it on many inputs, solving a problem identified in \cref{sec:ad-naive}.
\end{itemize}

\section{\botnametc}\label{sec:bot}

As introduced in \cref{sec:ad-array-fail-solution}, the aim of the \botnamelc (\BOT) is to eliminate `$\cc{build1}$', and as much as possible `$\cc{index}$', from the core language.
This allows users to write explicitly indexed code, but lets the AD algorithm of \cref{sec:ad-dual-arrays} work on mostly first-order code.
As a result, (1) the $\tDelta$ trace generated by the AD-transformed code will be small (it does not refer to individual scalar operations any more, but only the bulk operations that contain them) and (2) projections from large structures (i.e.\ $\cc{index}$ and $\cc{gather}$) are batched as much as possible, meaning that we generate very few one-hot/multi-hot cotangent arrays.
This addresses the two problems ($\tDelta$ exposion and one-hot cotangent arrays) that we saw in \cref{sec:ad-array-fail-scalar,sec:ad-array-build-explosion,sec:ad-array-onehot}.

\subsection{The transformation}

\begin{figure}
\[ \begin{ietarray}
%these rules correspond to function build1V
\ietrule
    {\cc{build1}\ k\ (\clam i i)}
    {\ravel{0,\ldots,k - 1}}
    {} \\
\ietrule
    {\cc{build1}\ k\ (\clam i t)}
    {\cc{replicate}\ k\ t}
    {$i \not\in \mathit{FV}(t)$} \\
\ietrule
    {\cc{build1}\ k\ (\clam i \cletin{x = v}{u})}
    {\multilineT{
        \clet x = \cc{build1}\ k\ (\clam i v) \\
        \cin \cc{build1}\ k\ (\clam i u[\cc{index}\ x\ \ixvec{i}/x])
    }}
    {} \\
\ietrule
    {\cc{build1}\ k\ (\clam i \cc{cond}\ b\ u\ v)}
    {\cc{build1}\ k\ (\clam i \cc{index}\ \ravel{u,v}\ \ixvec{\cc{cond}\ b\ 0\ 1})}
    {} \\
\ietrule
    {\cc{build1}\ k\ (\clam i \mathit{op}\ t\ u)}
    {\mathit{op}\ (\cc{build1}\ k\ (\clam i t))\ (\cc{build1}\ k\ (\clam i u))}
    {} \\
\ietrule
    {\cc{build1}\ k\ (\clam i \mathit{op}\ t)}
    {\mathit{op}\ (\cc{build1}\ k\ (\clam i t))}
    {} \\
\ietrule
    {\cc{build1}\ k\ (\clam i \cc{sumOuter}\ t)}
    {\cc{sumOuter}\ (\cc{tr}\ (\cc{build1}\ k\ (\clam i t)))}
    {} \\
\ietrule
    {\cc{build1}\ k\ (\clam i \cc{gather}\ sh\ t\ (\clam {is} ix)))}
    {\cc{gather}\ \multilineT{
        (k\ \cc{:::}\ sh)\ (\cc{build1}\ k\ (\clam i t)) \\
        (\clam {(i\ \cc{:::}\ is)} i\ \cc{:::}\ ix)
    }}
    {} \\
%\ietrule
%    {\cc{build1}\ k\ (\clam i \cc{scatter}\ sh\ t\ f)}
%    {\cc{scatter}\ (k\ \cc{:::}\ sh)\ (\cc{build1}\ k\ (\clam i t))\ (\clam {(i\ \cc{:::}\ is)} i\ \cc{:::}\ f\ is)}
%    {} \\
\ietrule
    {\cc{build1}\ k\ (\clam i \cc{scatter}\ sh\ t\ (\clam {is} ix))}
    {\cc{scatter}\ \multilineT{
        (k\ \cc{:::}\ sh)\ (\cc{build1}\ k\ (\clam i t)) \\
        (\clam {(i\ \cc{:::}\ is)} i\ \cc{:::}\ ix)
    }}
    {} \\
\ietrule
    {\cc{build1}\ k\ (\clam i \ravel{t_1,\ldots,t_n})}
    {\cc{tr}\ \ravel{\cc{build1}\ k\ (\clam i t_1), \ldots, \cc{build1}\ k\ (\clam i t_n)}}
    {} \\
\ietrule
    {\cc{build1}\ k\ (\clam i \cc{replicate}\ n\ t)}
    {\cc{tr}\ (\cc{replicate}\ n\ (\cc{build1}\ k\ (\clam i t)))}
    {} \\
\ietrule
    {\cc{build1}\ k\ (\clam i \ctransp{b_0, \ldots, b_n}\ t)}
    {\ctransp{0, b_0 + 1, b_1 + 1, \ldots, b_n + 1}\ (\cc{build1}\ k\ (\clam i t))}
    {} \\
\ietrule
    {\cc{build1}\ k\ (\clam i \cc{reshape}\ sh\ t)}
    {\cc{reshape}\ (k\ \cc{:::}\ sh)\ (\cc{build1}\ k\ (\clam i t))}
    {}
\end{ietarray} \]
\caption{\label{fig:bot-rules-build}
    The rules for the \BOT for $\cc{build1}$, excluding $\cc{build1}$ of $\cc{index}$.
    `$\mathit{op}$' is an elementwise unary or binary arithmetic operator.
}
\end{figure}

\begin{figure}
\[ \begin{ietarray}
%the below reflects function astIndexROrStepOnly
\ietrule
    {\cc{index}\ t\ \ixvec{}}
    {t}
    {} \\
\ietrule
    {\cc{index}\ (\cc{index}\ t\ \ixvec{u_1, \ldots, u_m})\ \ixvec{t_1, \ldots, t_n}}
    {\cc{index}\ t\ \ixvec{u_1, \ldots, u_m, t_1, \ldots, t_n}}
    {} \\
\ietrule
    {\cc{index}\ (\cletin{x = v}{t})\ ix}
    {\cletin{x = v}{\cc{index}\ t\ ix}}
    {} \\
\ietrule
    {\cc{index}\ (\cc{cond}\ b\ u\ v)\ \ixvec{t_1,\ldots,t_n}}
    {\multilineT{
        \clet{i_1 = t_1}\ \cinnosp \ldots \clet{i_n = t_n} \\
        \cin \cc{cond}\ b\ \multilineT{
            (\cc{index}\ u\ \ixvec{i_1,\ldots,i_n}) \\
            (\cc{index}\ v\ \ixvec{i_1,\ldots,i_n})
        }
    }}
    {} \\
\ietrule
    {\cc{index}\ (\mathit{op}\ t\ u)\ \ixvec{t_1,\ldots,t_n}}
    {\multilineT{
        \clet{i_1 = t_1}\ \cinnosp \ldots \clet{i_n = t_n} \\
        \cin \mathit{op}\ (\cc{index}\ t\ \ixvec{i_1,\ldots,i_n})\ (\cc{index}\ u\ \ixvec{i_1,\ldots,i_n})
    }}
    {} \\
\ietrule
    {\cc{index}\ (\mathit{op}\ t)\ ix}
    {\mathit{op}\ (\cc{index}\ t\ ix)}
    {} \\
\ietrule
    {\cc{index}\ (\cc{sumOuter}\ t)\ \var{ix}}
    {\cc{sumOuter}\ (\cc{index}\ (\ctransp{1, ..., n, 0}\ t)\ \var{ix})}
    {} \\
% some scatter indexing simplifications are implemented, but they are ad hoc so far, so not worth publishing
% so vectorization treats indexing of scatter as a normal form, paying the cost of a +1 dimension tensor
%\simprule
%    {\cc{index}\ (\cc{scatter}\ t)\ \ixvec{i_1, \ldots, i_n}}
%    {TODO} \\
\ietrule
    {\cc{index}\ \ravel{t_1, \ldots, t_k}\ \ixvec{u_1, \ldots, u_n}}
    {\multilineT{
        \clet{i_2 = u_2}\ \cinnosp \ldots \clet{i_n = u_n} \\
        \cin \cc{index}\ \multilineT{
            \multilineT{\ravel{
                \cc{index}\ t_1\ \ixvec{i_2,...,i_n}, \ldots, \\
                \qquad \cc{index}\ t_k\ \ixvec{i_2,...,i_n}
            }} \\
            \ixvec{u_1}
        }
    }}
    {$n > 1$} \\
\ietrule
    {\cc{index}\ (\cc{replicate}\ k\ t)\ \ixvec{u_1, \ldots, u_n}}
    {\cc{index}\ t\ \ixvec{u_2, \ldots, u_n}}
    {$n > 0$} \\
\ietrule
    {\cc{index}\ (\ctransp{b_0, \dots, b_k}\ t)\ ix}
    {\cc{index}\ \multilineT{
        (\multilineT{
            \cc{gather}\ sh\ t \\
            \quad (\clam{\ixvec{i_{b_0}, \ldots, i_{b_k}}} \ixvec{i_0, \ldots, i_k}))
        } \\
        ix
    }}
    {} \\
\ietrule
    {\cc{index}\ (\cc{reshape}\ sh\ t)\ ix}
    {\cc{index}\ \multilineT{
        (\multilineT{
            \cc{gather}\ sh\ t\ (\clam {is} \\
            \quad \cc{fromLinearIdx}\ (\cc{shape}\ t) \\
            \quad \quad (\cc{toLinearIdx}\ sh\ is)))
        } \\
        ix
    }}
    {} \\
\ietrule
    {\multilineT{
        \cc{index}\ (\cc{gather}\ (k\ \cc{:::}\ sh)\ t\ (\clam{(i\ \cc{:::}\ is)} ix)) \\
        \hphantom{\cc{index}\ } (u ::: ix')
    }}
    {\cc{index}\ (\cc{gather}\ sh\ t\ (\clam {is} \cletin{i = u}{ix}))\ ix'}
    {$n > 0$} \\
\ietrule
    {\cc{index}\ (\cc{gather}\ \shvec{}\ t\ (\clam{\ixvec{}} ix))\ ix'}
    {\cc{index}\ (\cc{index}\ t\ ix)\ ix'}
    {}
\end{ietarray} \]
\caption{\label{fig:bot-rules-index}
    The rules of the \BOT for $\cc{index}$.
    % `$\mathit{op}$' is an elementwise unary or binary arithmetic operator.
    `$\cc{shape}\ t$' is a macro that expands to the (statically-known) shape of its argument term; `$\cc{toLinearIdx}$' and `$\cc{fromLinearIdx}$' are macros that, respectively, flatten a multidimensional index into a linear one and re-nest it into a multidimensional one.
}
\end{figure}

\begin{figure}
\[ \begin{ietarray}
% the rules below are implemented in build1VIndex
\ietrule % this one is not really desperate; the other two are
    {\cc{build1}\ k\ (\clam i \cc{index}\ x\ ix)}
    {\cc{gather}\ (k\ \cc{:::}\ sh)\ x\ (\clam{\ixvec{i}} ix)}
    {} \vspace{0.5em} \\
\ietrule % this one is not really desperate; the other two are
    {\cc{build1}\ k\ (\clam i \cc{index}\ c\ ix)}
    {\cc{gather}\ (k\ \cc{:::}\ sh)\ c\ (\clam{\ixvec{i}} ix)}
    {} \vspace{0.5em} \\
% MK sh is \cc{shape}\ (\cc{index}\ x\ ix)
\ietrule
    {\cc{build1}\ k\ (\clam i \cc{index}\ \ravel{t_1, \ldots, t_k}\ \ixvec{t})}
    {\multilineT{
        \cc{gather}\ (k\ \cc{:::}\ sh) \\
        \hphantom{\cc{gather}\ } (\cc{build1}\ k\ (\clam i \ravel{t_1, \ldots, t_k})) \\
        \hphantom{\cc{gather}\ } (\clam{\ixvec{i}} \ixvec{i, t})
    }}
    {} \vspace{0.5em} \\
% MK: sh is dropShape @1 (shape (\cc{fromList}\ \arrlit{t_1, \ldots, t_k}))
\ietrule
    {\multilineT{
        \cc{build1}\ k\ (\clam i \\
        \quad \cc{index}\ (\cc{scatter}\ sh\ t\ (\clam {is_2} ix_2))\ ix)
    }}
    {\multilineT{
        \cc{gather}\ (k\ \cc{:::}\ sh) \\
        \hphantom{\cc{gather}\ } (\cc{build1}\ k\ (\clam i \cc{scatter}\ sh\ t\ (\clam {is_2} ix_2))) \\
        \hphantom{\cc{gather}\ } (\clam{\ixvec{i}} i\ \cc{:::}\ ix)
    }}
    {$|ix| > 0$} \\
% MK: sh is dropShape (shape (\cc{scatter}\ sh\ t\ f)), where the drop depends on the length of ix somehow
% MK: |ix| > 0 simplifies the proof of local confluence
\end{ietarray} \]
\caption{\label{fig:bot-rules-desperate}
    Rules for the \BOT for $\cc{build1}$-of-$\cc{index}$.
    Recall that $x$ and $c$ refer to variables and constants, respectively.
    Note that all $\cc{index}$-headed forms that do not appear on the left-hand side here, are rewritten away in \cref{fig:bot-rules-index}.
}
\end{figure}

The \BOT is a set of rewrite rules $u \leadsto v$ on the core language; the rules can be found in \cref{fig:bot-rules-build,fig:bot-rules-desperate,fig:bot-rules-index}.
The rules are divided into three categories:
\begin{enumerate}
\item
    \cref{fig:bot-rules-build}: Rules that ``push down'' $\cc{build1}$ into the expression, eventually eliminating it when we reach a subexpression that is elementary enough.
    For example, this is the the rule for binary operators $\mathit{op}$:
    \[ \begin{ietarray}
        \ietrule
            {\cc{build1}\ k\ (\clam i \mathit{op}\ t\ u)}
            {\mathit{op}\ (\cc{build1}\ k\ (\clam i t))\ (\cc{build1}\ k\ (\clam i u))}
            {}
    \end{ietarray} \]
    We see that whenever we build an array elementwise by combining two computations ($t$ and $u$) with a binary operator, this is rewritten to building two arrays containing the results of $t$ and $u$, after which we combine those arrays elementwise.

    When rewriting reaches a leaf expression, e.g.\ some term $t$ that does not mention the index variable $i$:
    \[ \begin{ietarray}
        \ietrule
            {\cc{build1}\ k\ (\clam i t)}
            {\cc{replicate}\ k\ t}
            {$i \not\in \mathit{FV}(t)$}
    \end{ietarray} \]
    we eliminate $\cc{build1}$.

    The careful reader may note that this figure contains rules for $\cc{build1}\ k\ (\clam i t)$ for all term formers $t$, \emph{except} $\cc{index}$.
    The combination $\cc{build1}$-of-$\cc{index}$ is handled in \cref{fig:bot-rules-desperate}, and will be discussed after we discuss the rules for $\cc{index}$ itself.

\item
    \cref{fig:bot-rules-index}: Rules that ``push down'' $\cc{index}$ into the expression.
    In many cases, we can eventually cancel the $\cc{index}$ against a suitable, typically elementwise operation.
    In cases where we cannot, \cref{fig:bot-rules-index} has a missing rule; these are: $\cc{index}$ of a variable reference, of a constant array, of combined arrays $\ravel{t_1, \ldots, t_n}$, and of $\cc{scatter}$.
    These four forms are precisely the normal forms for rewriting $\cc{index}$ listed in \cref{th:bot-normal-form} below.
    To ensure that we can still always eliminate $\cc{build1}$, the final figure (\cref{fig:bot-rules-desperate}) contains rules that commute $\cc{build1}$ below $\cc{index}$ for these four normal forms.

\item
    \cref{fig:bot-rules-desperate}: Rules for commuting $\cc{build1}$ under $\cc{index}$ (turning the $\cc{index}$ into a $\cc{gather}$ simultaneously).
    These belong with the list of $\cc{build1}$-rules from \cref{fig:bot-rules-build}, but are set in a separate figure to make it easier to discuss them separately.

    While these rules preserve semantics and time complexity, the 3rd and 4th rule in this figure do not preserve memory usage under sequential execution.
    For example, when sequentially executing the left-hand side of the fourth rule:
    \[ \cc{build1}\ k\ (\clam i \cc{index}\ (\cc{scatter}\ sh\ t\ (\clam {is_2} ix_2))\ ix) \]
    each $\cc{scatter}$ is executed independently, and its output (which is immediately mostly discarded by the $\cc{index}$) can be deallocated before starting on the next $i$.
    The right-hand side of the rule, however, first computes \emph{all} $\cc{scatter}$s before doing a bulk projection from this big array.

    While unfortunate, the situation is not as bad as it may seem, because when executing the left-hand side in \emph{parallel}, especially on massively parallel hardware like a GPU, many or even all of the $\cc{scatter}$s would be executed in parallel.
    The resulting memory usage is less than the rewritten right-hand side only to the extent that the degree of parallelism is less than $k$.
\end{enumerate}

\paragraph{Strongly normalising}
We can make the behaviour of the rewrite system more formal by looking more precisely at its normal forms.
Indeed, the rewrite system is strongly normalising, meaning that rewriting terminates (for all $u$ there is a $v$, the \emph{normal form} of $u$, such that $u \leadsto^* v$ and $\not\exists t.\ v \leadsto t$) and rewriting order is irrelevant (i.e.\ normal forms are unique).
Therefore, one can talk usefully about this set of normal forms (terms in which no more rewrites are possible), and this set tells us something about the capabilities and limitations of the rewrite system.
\begin{theorem}\label{th:bot-normal-form}
When considering only well-typed terms, the set of normal forms of the rewrite system in \cref{fig:bot-rules-build,fig:bot-rules-desperate,fig:bot-rules-index} consists precisely of those terms $t$ that satisfy the following two properties:
\begin{enumerate}
\item\label{item:bot-prop-nobuild} `$\cc{build1}$' does not occur in $t$.
\item\label{item:bot-prop-noindex}
    Every occurrence of `$\cc{index}\ u\ \var{ix}$' in $t$ is of the form `$\cc{index}\ x\ (v ::: \var{ix})$' (for `$x$' a variable reference), `$\cc{index}\ c\ (v ::: \var{ix})$' (for `$c$' a constant), `$\cc{index}\ \ravel{\ldots}\ \ixvec{v}$', or `$\cc{index}\ (\cc{scatter}\ \_\ \_\ \_)\ (v ::: \var{ix})$'.
\end{enumerate}
\end{theorem}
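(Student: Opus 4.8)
The plan is to prove the two inclusions that ``consists precisely of'' abbreviates, reading ``normal form'' as ``irreducible term'' (no rule of \cref{fig:bot-rules-build,fig:bot-rules-desperate,fig:bot-rules-index} rewrites any subterm). Because the system is asserted to be strongly normalising just above, the set of reachable normal forms coincides with the set of irreducible terms, so this purely local characterisation suffices and I need not re-establish termination or confluence here. The whole argument is then a structural enumeration over the grammar of \cref{fig:core-grammar}.

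For the direction from properties \ref{item:bot-prop-nobuild} and \ref{item:bot-prop-noindex} to irreducibility, I would note that every left-hand side across the three figures is headed either by $\cc{build1}$ or by $\cc{index}$. Property \ref{item:bot-prop-nobuild} immediately excludes every $\cc{build1}$-headed rule (\cref{fig:bot-rules-build,fig:bot-rules-desperate}). For the $\cc{index}$-headed rules of \cref{fig:bot-rules-index}, I check each of the four admissible shapes of property \ref{item:bot-prop-noindex}: $\cc{index}$ of a variable or of a constant matches no rule head, and its index is non-empty so the $\cc{index}\ t\ \ixvec{}$ rule does not fire; $\cc{index}\ \ravel{\ldots}\ \ixvec{v}$ has a one-element index and so fails the $n>1$ side condition of the $\ravel$ rule; and $\cc{index}$ of a $\cc{scatter}$ has no rule whatsoever, its non-empty index again excluding the empty-index rule. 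Hence no subterm is rewritable.

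For the converse I would argue the contrapositive: if $t$ violates property \ref{item:bot-prop-nobuild} or \ref{item:bot-prop-noindex}, some rule fires. Choose an innermost ``bad'' subterm $s$, i.e.\ either a $\cc{build1}$ node or an $\cc{index}\ u\ \var{ix}$ not of one of the four shapes. If $s = \cc{index}\ u\ \var{ix}$, minimality forces $u$ to be $\cc{build1}$-free (in particular not $\cc{build1}$-headed) and every index inside $u$ and $\var{ix}$ to be admissible; a case split on the head of $u$, together with the $0/1$ boundaries of $|\var{ix}|$, then exhibits exactly one applicable rule of \cref{fig:bot-rules-index}, the only residual cases being the four admissible shapes, which are excluded by hypothesis. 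If instead $s = \cc{build1}\ k\ (\clam i b)$, minimality gives that $b$ is $\cc{build1}$-free; casing on the head former of $b$, every non-$\cc{index}$ former is discharged by the correspondingly-headed rule of \cref{fig:bot-rules-build} (with $\clam i i$ and the $i \notin \mathit{FV}(b)$ rule covering the two atomic cases $b=i$ and $b$ an $i$-free leaf), while if $b = \cc{index}\ u\ \var{ix}$ then minimality forces this inner index to be admissible, so one of the four rules of \cref{fig:bot-rules-desperate} applies.

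The main obstacle I anticipate is precisely this coupling in \cref{fig:bot-rules-desperate}: its $\cc{build1}$-of-$\cc{index}$ rules fire only once the inner $\cc{index}$ has reached one of the four normal shapes, so the exhaustiveness argument must interleave the index-pushing and build-pushing analyses rather than treat them independently. The innermost choice of $s$ is what makes this clean, since it guarantees that an $\cc{index}$ sitting under an innermost $\cc{build1}$ is already admissible rather than further reducible. The remaining delicacy is bookkeeping the side conditions $n>0$, $n>1$, $|ix|>0$ and the empty-index rule so the boundary cases partition with neither gap nor overlap; here the well-typedness hypothesis does genuine work, fixing index lengths relative to array ranks and thereby ensuring, for example, that an $\cc{index}$ of a $\ravel$ left irreducible really does carry a one-element index. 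Everything else is a mechanical enumeration.
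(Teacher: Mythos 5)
Your proof is correct and follows essentially the same route as the paper's: pick a minimal (innermost) offending subterm and enumerate the grammar of \cref{fig:core-grammar} to exhibit a matching left-hand side, with well-typedness and the side conditions covering the boundary cases. You are in fact slightly more thorough than the paper, whose proof only argues that normal forms satisfy the two properties and leaves the converse inclusion (that absence of `$\cc{build1}$' and the four admissible `$\cc{index}$' shapes match no rule head) implicit; your single innermost-bad-subterm choice also handles the coupling between the $\cc{index}$-pushing rules and the $\cc{build1}$-of-$\cc{index}$ rules of \cref{fig:bot-rules-desperate} a little more carefully than the paper's two-stage argument.
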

\begin{proof}
For (\ref{item:bot-prop-nobuild}): assume there is a normal form $t$ that contains $\cc{build1}$; then $t$ contains a subterm $t' \coloneqq \cc{build1}\ k\ (\clam{i} b)$ for some term $b$ where $b$ does \emph{not} contain $\cc{build1}$.
Thus $b$ is headed by one of the other syntactic forms in \cref{fig:core-grammar}, and for each of those (note that if a variable $x$ is unequal to $i$, we certainly have $i \not\in FV(x)$) there is a left-hand side in \cref{fig:bot-rules-build,fig:bot-rules-desperate} that then matches $t'$.
Therefore $t'$, and thus $t$, can be rewritten, contradicting normality of $t$.
Hence, there is no such $t$ after all.

For (\ref{item:bot-prop-noindex}): similarly, assume there is a normal form $t$ that contains a subterm $t' \coloneqq \cc{index}\ u\ \var{ix}$ that does not match any of the stated forms.
By (\ref{item:bot-prop-nobuild}), $u$ does not contain $\cc{build1}$.
Then it can be verified using \cref{fig:core-grammar,fig:core-typing} that one of the left-hand sides of \cref{fig:bot-rules-index} matches $t'$, leading to the same contradiction as before, meaning that there is no such $t$.
\end{proof}

From property (\ref{item:bot-prop-nobuild}) of \cref{th:bot-normal-form} we know that we have successfully eliminated $\cc{build1}$ from the source program by applying the transform.
Property (\ref{item:bot-prop-noindex}) is unfortunately more nuanced, because we cannot always fully eliminate $\cc{index}$.
The upside is that it certainly cannot occur inside $\cc{build1}$ any more --- because there are no more $\cc{build1}$s to occur inside of in the first place.
Furthermore, the only other places in the grammar (\cref{fig:core-grammar}) where a term is executed multiple times are inside the lambda argument to $\cc{gather}$ and $\cc{scatter}$, and because the output type of those lambdas is discrete (namely, an index), their bodies need not be differentiated (see \cref{sec:ad-dual-arrays-terms}), so no one-hots are generated.

Together, this means that the number of one-hots created in the derivative program is at most the number of lexical `$\cc{index}$' occurrences in the source program, which is not too large.

% This rule does not quite hold if the input program is an unrolled loop: if the user interface of \hordead goes through a staging step (as described in \cref{sec:staging}), one can write a concise loop in the host language that unrolls to a large source program in the core language.
% If this loop contained an indexing operation in every step, the reverse derivative program will be constructing a one-hot cotangent vector every step as well.
% However, such a loop typically also contains non-indexing operations, and then we can often amortise the too-expensive $\cc{index}$ derivatives against the other operations in the loop.
% \MK{Maybe we worry too much. Maybe let's not consider cost in terms of the size of the original program written in the host langauge, but only the core language we defined earlier. I think the cost is fine in terms of the unrolled program, but defining "unrolling" may be out of scope. Staging can pump things and this is totally orthogonal to BOT. That's especially fine now that staging is only in the appendix.}

\subsection{Core language design justification}\label{sec:core-language-design-justification}

The \BOT-induced restrictions on the core language listed in \cref{sec:core-language} can be better justified now that we have the \BOT rules in front of us.

% \paragraph{No tuples or nested arrays}
% Because of how the \BOT works, support for tuples in array elements would imply support for nested arrays: because the \BOT tends to add additional array dimensions on top of the types of most terms, a term of type $(\Array\ \var{sh}_1\ \rho_1, \Array\ \var{sh}_2\ \rho_2)$ would very easily become $\Array\ \var{sh}\ (\Array\ \var{sh}_1\ \rho_1, \Array\ \var{sh}_2\ \rho_2)$, which is indeed a nested array.
% \TODO{Why is this a problem? Is it just hard? Is it that differentiation becomes strange, or creates Delta explosion again?}
% \TODO{Should justify two things: (1) why no tuples generally (i.e.\ as elements), and (2) why not restricted tuples only at the top-level with a SoA-lifting $\cc{build}$ like \hordead has it}

\paragraph{Static shapes}
The type system of the core language (\cref{fig:core-typing}) ensures that all array shapes are statically known.
This requirement is a weakening of the actual requirement: ``the \BOT must not get stuck'', or more precisely: all the intermediate values computed in a $\cc{build1}$-lambda must have shapes that are independent of the index at which the lambda is called.

Let us look at an example to see why this requirement exists.
Suppose that the core language contained a primitive, called `$\cc{filter}$', of which the output shape is unknown statically:
\[
    \infrule{\Gamma, x : \rho \vdash s : \Array\ \shvec{}\ \Bool \precsep \Gamma \vdash t : \Array\ \var{sh}\ \rho}
            {\Gamma \vdash \cc{filter}\ (\clam x s)\ t : \Array\ {??}\ \rho}
\]
The semantics is to filter an array on a predicate: $\cc{filter}\ (\clam x x > 4)\ \arrlit{3, 8, -16, 7, 2} = \arrlit{8, 7}$.
Of course, there is no sensible shape to substitute for `$??$' here --- which is the point --- but suppose that we had a weaker type system that allowed this.

The question now is: what does $\cc{build1}\ 10\ (\clam i \cc{filter}\ (\clam x s)\ t)$ vectorise to?

Regardless of what term this would map to, the array that it ought to produce is \emph{not rectangular}: it is not a regular multi-dimensional array, also called a \emph{jagged array}.
Such arrays pose problems with efficient indexing, bounds checking of indexing, semantics of array transposition, etc.
Hence, we disallow such arrays: all our arrays are regular.
This implies that the computation in a `$\cc{build1}$' lambda, \emph{including all its intermediate values}, must have uniform shapes over all values of the index variable $i$.
This is, strictly speaking, a weaker requirement than static shapes, but it is not very much weaker in practice, and static shapes are much easier to enforce for us and to understand for a user.

\paragraph{Strict conditionals}
Consider the \BOT rule for $\cc{build1}$-of-$\cc{cond}$:
\[ \begin{ietarray}
\ietrule
    {\cc{build1}\ k\ (\clam i \cc{cond}\ b\ u\ v)}
    {\cc{build1}\ k\ (\clam i \cc{index}\ \ravel{u,v}\ \ixvec{\cc{cond}\ b\ 0\ 1})}
    {}
\end{ietarray} \]
The reason the conditionals in our language are strict is that this justifies the right-hand side of this translation: it computes the two arguments first, then picks the correct one using $\cc{index}$.
Having proper conditionals would not compose nearly as well with the \BOT as these strict conditionals.

To nevertheless support some algorithms that would otherwise require proper conditionals, it is important that our built-in operations never crash: this permits the user, at least when reasoning semantically, to think of \cc{cond} as a proper conditional.
For example, consider the following program that concatenates an array $a$ (assumed in scope with length 10) to itself:
\[
    \cc{build1}\ 20\ (\clam{i} \cc{cond}\ (i < 10)\ (\cc{index}\ a\ \ixvec{i})\ (\cc{index}\ a\ \ixvec{i - 10}))
\]
After the \BOT, this code will evaluate both \cc{index} expressions for the \emph{full} domain $\{0,\ldots,19\}$ instead of only their intended domain.
Hence, this code only works because our $\cc{index}$ operation, as well as other normally partial operations (such as $\cc{gather}$ and the division operator), check their arguments and still return a value even if the arguments are invalid.

\paragraph{Single expression}
The core language does not admit separate top-level functions: the program must be a single expression with let-bindings.
The \BOT requires this because the generated code for any particular subterm depends on the context in which it runs, all the way to the top level of the program, so all of this context must be \emph{visible} to the code transformation.
Modularity via e.g.\ top-level functions would make this impossible.

This limitation can be ameliorated somewhat by an inlining pass before the \BOT and the AD algorithm proper runs, that eliminates any user-written top-level functions by simply inlining them at every call site.
This may blow up the program significantly in some cases, but note that the trace that AD will generate is on the order of the size of the \emph{fully inlined} program anyway.
% TODO: and unrolled, if there is a staging step in front of the algorithm proper.

\section{Dual Arrays: Differentiating Bulk Array Programs}\label{sec:ad-dual-arrays}

In the existing scalar-level dual-numbers reverse AD algorithm (described in \cref{sec:ad-naive}), each scalar is considered an independent object during differentiation.
We saw in \cref{sec:ad-array-fail} that while this approach can be easily and naturally extended to arrays, it results in very slow gradient code.
As a solution to this problem, we lifted the granularity of the algorithm to entire arrays of scalars (thus creating \emph{dual arrays} instead of ``dual numbers''); the idea here is that a single array operation translates to very many individual scalar operations, and the fewer operations in the program to be differentiated, the lower the overhead introduced by differentiation.

To allow the user to write code that nevertheless works on individual scalars (using $\cc{build1}$ and derived operations such as `map'), the \BOT from \previoussection{sec:bot}{the previous section} rewrites $\cc{build1}$ into bulk array operations with a bulk derivative.

In this section, we start from the output of the \BOT, and explain the dual arrays AD algorithm that we apply to it.
Afterwards, in \cref{sec:restaging}, we will lift the evaluator (the reverse pass) to symbolic tensors to make it possible to differentiate a term once and then compute many different gradients with it.

\begin{figure}
    \begin{center}
    \( \begin{array}{l}
        D[\Array\ \var{sh}\ \R] = (\Array\ \var{sh}\ \R, \tDelta\ \var{sh}) \\
        D[\Array\ \var{sh}\ \Int] = \Array\ \var{sh}\ \Int \\
        D[\Array\ \var{sh}\ \Bool] = \Array\ \var{sh}\ \Bool
    \end{array} \) \\[0.5em]

    \( \begin{array}{l}
        \mathbf{data}\ \tDelta\ \var{sh}\ \mathbf{where} \\
        \quad \multilineT{
            \ccomment{scalar linear maps and input} \\
            \del{Zero} :: \tDelta\ \var{sh} \\  % has a field in horde-ad for forward AD
            \del{Input} :: \tDVarName \to \tDelta\ \var{sh} \\
            \del{Add} :: \tDelta\ \var{sh} \to \tDelta\ \var{sh} \to \tDelta\ \var{sh} \\
            \del{Scale} :: \Array\ \var{sh}\ \R \to \tDelta\ \var{sh} \to \tDelta\ \var{sh} \\
            \ccomment{encoding sharing} \\
            \del{Share} :: \tID \to \tDelta\ \var{sh} \to \tDelta\ \var{sh} \\
            % TODO cond?
            \ccomment{linear array operations} \\
            \del{Index} :: \tDelta\ \shvec{k_1, \ldots, k_n} \to \Ix\ m \to \tDelta\ \shvec{k_{m+1}, \ldots, k_n} \\
            \del{SumOuter} :: \tDelta\ (k ::: \var{sh}) \to \tDelta\ \var{sh} \\
            \del{Gather}\ \multilineT{
                :: \tDelta\ \shvec{k_1, \ldots, k_{m_2}, k_{m_2+1}, \ldots, k_n} \to (\Ix\ m_1 \to \Ix\ m_2) \\
                \to \tDelta\ \shvec{k'_1, \ldots, k'_{m_1}, k_{m_2+1}, \ldots, k_n}
            } \\
            \del{Scatter}\ \multilineT{
                :: \tDelta\ \shvec{k_1, \ldots, k_{m_1}, k_{m_1+1}, \ldots, k_n} \to (\Ix\ m_1 \to \Ix\ m_2) \\
                \to \tDelta\ \shvec{k'_1, \ldots, k'_{m_2}, k_{m_1+1}, \ldots, k_n}
            } \\
            \del{LitArray} :: \Array\ \shvec{k}\ (\tDelta\ \var{sh}) \to \tDelta\ (k ::: \var{sh}) \\
            \del{Replicate} :: \tDelta\ \var{sh} \to \tDelta\ (k ::: \var{sh}) \\
            \ctranspx{\del{Transpose}}{j_1,\ldots,j_m} :: \tDelta\ \shvec{k_1, \ldots, k_n} \to \tDelta\ \shvec{k_{j_1+1}, \ldots, k_{j_m+1}, k_{m+1}, \ldots, k_n} \\
            \del{Reshape} :: \tDelta\ \var{sh} \to \tDelta\ \var{sh'}
        }
    \end{array} \)
    \end{center}
    \caption{\label{fig:array-dn-ad}
        Types for array-level dual-numbers reverse AD.
        $\tDelta\ \var{sh}$ represents the derivative of a term of type $\Array\ \var{sh}\ \R$.
        Slightly modified in \cref{eq:array-delta-sh} on page \pageref{eq:array-delta-sh}.
    }
\end{figure}

\subsection{Types of the transformation}

Recall from \cref{sec:core-language} that the type system of the core language is very simple:
% This is copied from fig:core-language
\[ \begin{array}{@{}r@{\;}c@{\;}l@{}}
    \rho &\coloneqqq& \R \mid \Int \mid \Bool \\
    \sigma, \tau &\coloneqqq& \Array\ \var{sh}\ \rho
\end{array} \]
Because of this, the type transformation of the AD algorithm is also very simple (\cref{fig:array-dn-ad}, top).

As witnessed by $D[\Array\ \var{sh}\ \R] = (\Array\ \var{sh}\ \R, \tDelta\ \var{sh})$, the output of this code transformation uses tuples where the input did not.
This is because the algorithm is, in a way, still \emph{dual} numbers reverse AD.
We elide the precise grammar and type system extensions to the core language that allow tuples at the top level (i.e.\ not as elements of arrays!); there are no surprises here.
% Fortunately, as we will see in \cref{sec:algebra-ad}, these tuples are actually unnecessary: partial evaluation can eliminate them all (except as part of input and output of the transformed program).

While the type transformation is simple, we do need to add a number of constructors to the $\tDelta$ data type; we already observed this in \cref{sec:ad-array-fail}.
We can still use $\del{Add}$ and $\del{Scale}$ for the binary and unary arithmetic operators in the language --- where the scaling constant in $\del{Scale}$ becomes array-valued, and the scaling is performed element-wise --- which is why they still appear in $\tDelta$ in \cref{fig:array-dn-ad}.
This way, we avoid a proliferation of $\tDelta$ constructors, one for each broadcasted arithmetic operator.
For the other array operations, however, we generally have a bespoke $\tDelta$ constructor whose semantics is precisely its forward derivative.

For indices into multidimensional arrays, we use the $\Ix$ data type from \cref{sec:ad-array-onehot}:
\[
    \begin{array}{@{}l@{}}
        \mathbf{data}\ \Ix\ k\ \mathbf{where} \\
        \quad \begin{array}[t]{@{}l@{\ }l@{}}
            \IZ &:: \Ix\ 0 \\
            (\IS) &:: \Int \to \Ix\ k \to \Ix\ (k + 1)
        \end{array}
    \end{array}
\]

As an example of how these $\tDelta$ terms correspond to forward derivatives, consider $\del{Index}$.
This $\tDelta$ constructor represents the forward derivative of the core primitive `$\cc{index}$', which has the following typing rule:
% Copied from fig:core-typing
\[
    \infrule
        {\Gamma \vdash t : \Array\ \shvec{k_1, \ldots, k_n}\ \rho \precsep
         \jisindex{\Gamma}{\var{ix}}{m} \precsep
         m \leq n}
        {\Gamma \vdash \cc{index}\ t\ \var{ix} : \Array\ \shvec{k_{m+1}, \ldots, k_n}\ \rho}
\]
Because the result of `$\cc{index}\ t\ \var{ix}$' is simply the $\var{ix}$'th element (more accurately, $(n-m)$-dimensional subarray) of $t$, the forward derivative of $\cc{index}\ t\ \var{ix}$ is also simply the $\var{ix}$'th element (subarray) of the forward derivative of $t$.
$\tDelta$ terms are programs that \emph{compute} the forward derivative, so given the $\tDelta$ term computing the forward derivative of $t$, $\del{Index}$ should index that result at this same position $\var{ix}$.
This is precisely the semantics that we give to the $\del{Index}$ constructor.

Note that it is unsurprising that the $\tDelta$ term $\del{Index}$ does precisely the same thing as $\cc{index}$ in the core language; as observed in \cref{sec:ad-array-fail}, this follows from the fact that array indexing is an algebraically linear operation.\footnote{The (forward) derivative is the best linear approximation of a function, so if a function is already linear, then its best linear approximation is itself.}
In fact, all our array operations, except for the broadcasted arithmetic operations, are algebraically linear, thus all $\tDelta$ constructors apart from the basic ones necessary to differentiate primitive arithmetic operations ($\del{Zero}$, $\del{Input}$, $\del{Add}$, $\del{Scale}$ and $\del{Share}$) mirror the semantics of their corresponding core language operation.

\subsection{A non-differentiating transformation}\label{sec:ad-dual-arrays-nondiff}

Before we can move on to the term transformation, we have to introduce yet another code transformation ($\ndD[-]$) that makes a term compatible with surrounding differentiated code, but does not actually differentiate it.
Not differentiating the term in question means that the transformed code need not live inside the monad.
We will use this non-differentiating transformation on the index functions passed to $\cc{gather}$ and $\cc{scatter}$ in the actual differentiating transformation in \cref{sec:ad-dual-arrays-terms}.

\begin{figure}
\[
    x_1 : \tau_1, \ldots, x_n : \tau_n, \Gamma \vdash t : \tau
    \quad \leadsto \quad
    x_1 : D[\tau_1], \ldots, x_n : D[\tau_n], \Gamma \vdash \ndD_\Gamma[t] : \tau
\]
\[ \begin{array}{@{}r@{\ }l@{}}
    \ndD_\Gamma[x] &= \begin{array}[t]{@{}ll@{\ \ }l@{}}
        x & (\text{if } x \in \Gamma) & \ccomment{Internal local variable} \\
        \name{fst}\ x & (\text{else, if } x :: \Array\ \var{sh}\ \R) & \ccomment{Free dual number variable} \\
        x & (\text{otherwise}) & \ccomment{Free discrete variable}
        \vspace{0.1em}
    \end{array} \\
    \ndD_\Gamma[c] &= c \\
    \ndD_\Gamma[\cletin{x = u}{v}] &= \cletin{x = \ndD_\Gamma[u]}{\ndD_{\Gamma,x}[v]} \\
    \ndD_\Gamma[\cc{cond}\ t\ u\ v] &= \cc{cond}\ \ndD_\Gamma[t]\ \ndD_\Gamma[u]\ \ndD_\Gamma[v] \\
    \ndD_\Gamma[\mathit{op}\ u\ v] &= \mathit{op}\ \ndD_\Gamma[u]\ \ndD_\Gamma[v] \\
    \ndD_\Gamma[\mathit{op}\ t] &= \mathit{op}\ \ndD_\Gamma[t] \\
    \ndD_\Gamma[\cc{index}\ t\ \ixvec{t_1, \ldots, t_n}] &= \cc{index}\ \ndD_\Gamma[t]\ \ixvec{\ndD_\Gamma[t_1], \ldots, \ndD_\Gamma[t_n]} \\
    \ndD_\Gamma[\cc{sumOuter}\ t] &= \cc{sumOuter}\ \ndD_\Gamma[t] \\
    \ndD_\Gamma[\cc{gather}\ \var{sh}\ t\ (\clam{\var{is}} \ixvec{t_1, \ldots, t_n})] &= \cc{gather}\ \var{sh}\ \ndD_\Gamma[t]\ (\clam{\var{is}} \ixvec{\ndD_{\Gamma,\var{is}}[t_1], \ldots, \ndD_{\Gamma,\var{is}}[t_n]}) \\
    \ndD_\Gamma[\cc{scatter}\ \var{sh}\ t\ (\clam{\var{is}} \ixvec{t_1, \ldots, t_n})] &= \cc{scatter}\ \var{sh}\ \ndD_\Gamma[t]\ (\clam{\var{is}} \ixvec{\ndD_{\Gamma,\var{is}}[t_1], \ldots, \ndD_{\Gamma,\var{is}}[t_n]}) \\
    \ndD_\Gamma[\ravel{t_1, \ldots, t_n}] &= \ravel{\ndD_\Gamma[t_1], \ldots, \ndD_\Gamma[t_n]} \\
    \ndD_\Gamma[\cc{replicate}\ k\ t] &= \cc{replicate}\ k\ \ndD_\Gamma[t] \\
    \ndD_\Gamma[\ctransp{k_1,\ldots,k_n}\ t] &= \ctransp{k_1,\ldots,k_n}\ \ndD_\Gamma[t] \\
    \ndD_\Gamma[\cc{reshape}\ \var{sh}\ t] &= \cc{reshape}\ \var{sh}\ \ndD_\Gamma[t] \\
    \ndD_\Gamma[\cc{build1}\ k\ (\clam i t)] &= \cc{build1}\ k\ (\clam i \ndD_{\Gamma,i}[t]) \\
\end{array} \]
\caption{\label{fig:non-differentiating-trans}
    The non-differentiating code transformation of \cref{sec:ad-dual-arrays-nondiff}.
}
\end{figure}

The rules can be found in \cref{fig:non-differentiating-trans}.
Note that in contrast to the \emph{differentiating} transformation $D[-]$, the type of the transformed expression is not $\tIdGen\ D[\tau]$ but instead simply $\tau$: only the types of the free variables change.

Because only the \emph{free} variables change type, $\ndD[-]$ needs to take special care to distinguish free variables (bound outside the term initially passed to $\ndD[-]$) from local variables (those bound inside).
In \cref{fig:non-differentiating-trans}, $x_1,\ldots,x_n$ are the free variables and $\Gamma$ contains the locally-bound variables.
The transformation is indexed\footnote{For conciseness, we elide types in the bindings added to $\Gamma$ in the rules for $\cletnosp$, $\cc{gather}$, $\cc{scatter}$ and $\cc{build1}$.} by $\Gamma$ so that the case for variable references $x$, where the actual logic happens, can choose the correct result term depending on whether $x$ is locally bound (and hence not differentiated, so always of the original type) or free (and hence transformed to a dual number if the original $x$ was of type $\Array\ \var{sh}\ \R$).

\subsection{The term transformation}\label{sec:ad-dual-arrays-terms}

We now have all the pieces to extend the (differentiating) code transformation from \cref{sec:ad-naive} to the core array language from \cref{sec:core-language} --- or more precisely, the fragment of the core language that is produced by the \BOT (\cref{sec:bot}): all but $\cc{build1}$.

The term transformation is given in \cref{fig:ad-rules-dual-arrays,fig:ad-rules-dual-arrays-op}.\footnote{
    Note the lack of expressivity of our core language syntax here, as noted in \cref{sec:core-language}: because index lists are not first-class in the language, the expression under the lambda in a $\cc{gather}$ is a list of terms, not a term that produces a list.
    We present the core language this way for simplicity only.
}
The extension generally follows the pattern set out in \cref{sec:ad-naive}, but some aspects benefit from closer examination.

\begin{figure}
\begin{sizeddisplay}{\small}
\[ \begin{array}{@{}r@{\ }l@{\qquad}r@{\ }l@{}}
    D[c]
        &= \mathbf{return}\ (c, \del{Zero})
        &D[x]
        &= \mathbf{return}\ x \\
    D[\cc{cond}\ t\ u\ v]
        &= \mathbf{do}\ \multilineT{
            x \leftarrow D[t] \\
            y \leftarrow D[u] \\
            z \leftarrow D[v] \\
            \mathbf{return}\ (\cc{cond}\ x\ y\ z)
        }
        &D[\cletin{x = u}{v}]
        &= \mathbf{do}\ \multilineT{
            x \leftarrow D[u] \\
            D[v]
        }
\end{array} \]
\begin{align*}
D[\cc{index}\ t\ \ixvec{t_1, \ldots, t_n}]
    &= \multilineT{
        \ccomment{If $t :: \Array\ \var{sh}\ \R$:} \\
        \mathbf{do}\ \multilineT{
            (x, d) \leftarrow D[t] \\
            \mathbf{let}\ i_1 = \ndD_\varepsilon[t_1]; \ldots; i_n = \ndD_\varepsilon[t_n] \\
            \var{id} \leftarrow \mgenid \\
            \mathbf{return}\ (\multilineT{
                \cc{index}\ x\ \ixvec{i_1, \ldots, i_n} \\
                \leadcomma \del{Share}\ \var{id}\ (\del{Index}\ d\ \ixvec{i_1, \ldots, i_n}))
            }
        } \\
        \ccomment{Otherwise (i.e.\ $t :: \Array\ \var{sh}\ \Int$ or $t :: \Array\ \var{sh}\ \Bool$):} \\
        \mathbf{return}\ (\cc{index}\ \ndD_\varepsilon[t]\ \ixvec{\ndD_\varepsilon[t_1], \ldots, \ndD_\varepsilon[t_n]})
    } \\
D[\cc{sumOuter}\ t]
    &= \mathbf{do}\ \multilineT{
        (x, d) \leftarrow D[t] \\
        \var{id} \leftarrow \mgenid \\
        \mathbf{return}\ (\cc{sumOuter}\ x, \del{Share}\ \var{id}\ (\del{SumOuter}\ d)) \\
    } \\
D[\cc{gather}\ \var{sh}\ t\ (\clam{\var{is}} \ixvec{t_1, \ldots, t_n})]
    &= \multilineT{
        \ccomment{If $t :: \Array\ \var{sh}\ \R$:} \\
        \mathbf{do}\ \multilineT{
            (x, d) \leftarrow D[t] \\
            \var{id} \leftarrow \mgenid \\
            \mathbf{return}\ (\multilineT{
                \cc{gather}\ \var{sh}\ x\ (\clam{\var{is}} \ixvec{\ndD_{\var{is}}[t_1], \ldots, \ndD_{\var{is}}[t_n]}) \\
                \leadcomma \del{Share}\ \var{id}\ (\del{Gather}\ d\ (\clam{\var{is}} \ixvec{\ndD_{\var{is}}[t_1], \ldots, \ndD_{\var{is}}[t_n]})))
            }
        } \\
        \ccomment{Otherwise:} \\
        \mathbf{return}\ (\cc{gather}\ \var{sh}\ \ndD_\varepsilon[t]\ (\clam{\var{is}} \ixvec{\ndD_{\var{is}}[t_1], \ldots, \ndD_{\var{is}}[t_n]}))
    } \\
D[\cc{scatter}\ \var{sh}\ t\ (\clam{\var{is}} \ixvec{t_1, \ldots, t_n})]
    &= \ccomment{Elided to save space; analogous to $\cc{gather}$.} \\
D[\ravel{t_1,\ldots,t_n}]
    &= \multilineT{
        \ccomment{If $t_i :: \Array\ \var{sh}\ \R$:} \\
        \mathbf{do}\ \multilineT{
            (x_1, d_1) \leftarrow D[t_1]; \ldots; (x_n, d_n) \leftarrow D[t_n] \\
            \var{id} \leftarrow \mgenid \\
            \mathbf{return}\ (\ravel{x_1, \ldots, x_n}, \del{Share}\ \var{id}\ (\del{LitArray}\ \arrlit{d_1, \ldots, d_n}))
        } \\
        \ccomment{Otherwise:} \\
        \mathbf{return}\ \ravel{\ndD_\varepsilon[t_1], \ldots, \ndD_\varepsilon[t_n]}
    } \\
D[\cc{replicate}\ k\ t]
    &= \multilineT{
        \ccomment{If $t :: \Array\ \var{sh}\ \R$:} \\
        \mathbf{do}\ \multilineT{
            (x, d) \leftarrow D[t] \\
            \var{id} \leftarrow \mgenid \\
            \mathbf{return}\ (\cc{replicate}\ x, \del{Share}\ \var{id}\ (\del{Replicate}\ d))
        } \\
        \ccomment{Otherwise:} \\
        \mathbf{return}\ (\cc{replicate}\ \ndD_\varepsilon[t])
    } \\
D[\ctransp{k_1,\ldots,k_n}\ t]
    &= \multilineT{
        \ccomment{If $t :: \Array\ \var{sh}\ \R$:} \\
        \mathbf{do}\ \multilineT{
            (x, d) \leftarrow D[t] \\
            \var{id} \leftarrow \mgenid \\
            \mathbf{return}\ (\ctransp{k_1,\ldots,k_n}\ x, \del{Share}\ \var{id}\ (\ctranspx{\del{Transpose}}{k_1,\ldots,k_n}\ d))
        } \\
        \ccomment{Otherwise:} \\
        \mathbf{return}\ (\ctransp{k_1,\ldots,k_n}\ \ndD_\varepsilon[t])
    } \\
D[\cc{reshape}\ \var{sh}\ t]
    &= \multilineT{
        \ccomment{If $t :: \Array\ \var{sh}\ \R$:} \\
        \mathbf{do}\ \multilineT{
            (x, d) \leftarrow D[t] \\
            \var{id} \leftarrow \mgenid \\
            \mathbf{return}\ (\cc{reshape}\ \var{sh}\ x, \del{Share}\ \var{id}\ (\del{Reshape}\ \var{sh}\ d))
        } \\
        \ccomment{Otherwise:} \\
        \mathbf{return}\ (\cc{reshape}\ \var{sh}\ \ndD_\varepsilon[t])
    }
\end{align*}
\end{sizeddisplay}
\caption{\label{fig:ad-rules-dual-arrays}
    The AD transformation on the core array language, except for arithmetic operations $\mathit{op}$.
    % Compare \cref{fig:naive-dn-ad-terms} from \cref{sec:ad-naive}.
}
\end{figure}

\begin{figure}
\begin{align*}
D[t_1 \times_{\Array\ \var{sh}\ \R} t_2]
    &= \mathbf{do}\ \multilineT{
        (x_1, d_1) \leftarrow D[t_1]; (x_2, d_2) \leftarrow D[t_2] \\
        \var{id} \leftarrow \mgenid \\
        \mathbf{return}\ (x_1 \times_{\Array\ \var{sh}\ \R} x_2, \del{Share}\ \var{id}\ (\del{Add}\ (\del{Scale}\ x_2\ d_1)\ (\del{Scale}\ x_1\ d_2)))
    } \\
D[t_1 \times_{\Array\ \var{sh}\ \Int} t_2]
    &= \mathbf{return}\ (\ndD_\varepsilon[t_1] \times_{\Array\ \var{sh}\ \Int} \ndD_\varepsilon[t_2]) \\
& \hspace{-1cm} \ccomment{etc.\ other broadcasted arithmetic operations on arrays}
\end{align*}
\caption{\label{fig:ad-rules-dual-arrays-op}
    AD transformation on the core language for arithmetic operations.
    Analogous to the arithmetic operations in \cref{fig:dn-ad-terms}.
    Completes \cref{fig:ad-rules-dual-arrays}.
}
\end{figure}

\begin{itemize}
\item
    In the core language, we have operations that apply both\footnote{They are ``polymorphic'', albeit not by explicit polymorphism in the language, but instead by a custom typing rule.} to the \emph{dualised type} ($\Array\ \var{sh}\ \R$) and to non-dualised types (arrays of non-$\R$ elements), and yet they look at the internal structure of their argument.
    In the simple language in \cref{sec:ad-naive}, we did not have such constructs: a construct either monomorphically worked on scalars (e.g.\ $(\times_\R)$, which got a derivative that works on dual numbers specifically) or kept the values of the scalars as-is (e.g.\ `$\name{fst}$', pairing or lambda-abstraction, each of which got derivatives oblivious of the existence of dual numbers).
    The presence of operations in the core language that mix the two (e.g.\ $\cc{gather}$, $\cc{replicate}$) means that their derivative under $D[-]$ necessarily differs depending on whether they work on an array of scalars or not.

    Notable is that the derivative of $\cc{sumOuter}$ in \cref{fig:ad-rules-dual-arrays} does not need multiple different versions, because it is monomorphic: it works only on arrays of scalars.

\item
    Despite the fact that e.g.\ the terms $t_1, \ldots, t_n$ in `$\cc{index}\ t\ \ixvec{t_1, \ldots, t_n}$' are of type $\Array\ \var{sh}\ \Int$ and that $D[\Array\ \var{sh}\ \Int] = \Array\ \var{sh}\ \Int$, we cannot simply use those $t_1, \ldots, t_n$ in the differentiated program as-is: the types of their free variables are wrong.
    While using $D[t_i]$ instead would do the trick, we use $\ndD[t_i]$ to avoid potentially building $\tDelta$ terms that will only be discarded later.
    In the non-scalar version of $D[\cc{index}]$, even the array being indexed is not dualised, so we can convert that term using $\ndD[-]$ too.
    (`$\varepsilon$' denotes the empty environment.)

    The same holds for the terms $t_1, \ldots, t_n$ in `$\cc{gather}\ \var{sh}\ t\ (\clam{\var{is}} \ixvec{t_1,\ldots,t_n})$', but there the usage of $\ndD[t_i]$ is essential: $D[t_i]$ would run in the $\tIdGen$ monad, and $\cc{gather}$ and $\del{Gather}$ expect a non-monadic function.

\item
    The reader might wonder: if avoiding elementwise scalar computation in $\cc{build1}$ is what we did the whole \BOT for, why is the elementwise computation in $\cc{gather}$ fine (despite the fact that it may indeed contain scalar computation too, if the results are subsequently converted back to integers!)?
    The answer is the same as for why we could use $\ndD[-]$ for those terms: any scalar computation that happens inside the function passed to $\cc{gather}$ cannot \emph{continuously} influence the final program result (because it can only influence said result through the discrete, integral results of that function), so it does not need to be differentiated.
    Hence this computation does not end up as sub-traces in $\tDelta$, and the $\tDelta$ explosion problem of \cref{sec:ad-array-fail-scalar} does not arise.

\item
    Finally, the arithmetic operations in \cref{fig:ad-rules-dual-arrays-op} generalise straight from \cref{fig:dn-ad-terms} in \cref{sec:ad-naive}, with operations on $\Array\ \var{sh}\ \R$ getting differentiated and operations on arrays with discrete elements being fully preserved.
\end{itemize}

\subsection{The Evaluator}\label{sec:ad-dual-arrays-evaluator}

In \cref{sec:ad-naive} and in \cite[Fig.\ 11]{2022-krawiec-dualrev}, the reverse pass (evaluation of $\tDelta$ terms) works on a state represented by $\Map$ values keyed by $\tDVarName$ and $\tID$.
This was admissible because every input, as well as every node ID, corresponded to a single scalar, and hence the maps were homogeneous.
Now that $\tDVarName$ and $\tID$ correspond to entire arrays of scalars (which may differ in shape and hence in type), the evaluation state needs to contain \emph{heterogeneous} maps.
To retain type safety, we use \emph{dependent maps}: if normal maps (associative arrays) can be seen as a collection of pairs, then a dependent map is a collection of \emph{dependent} pairs.
The types of the methods on a dependent map, as far as we use them, are given in \cref{fig:dmap-types}.

\begin{figure}
\[ \begin{array}{@{}l@{}}
    \mathbf{data}\ \tDMap\ f\ g \\
    \dmEmpty :: \tDMap\ f\ g \qquad\ccomment{shorthand: \textup{`$\{\}$'}} \\
    \dmInsert :: \texttt{GCompare}\ f \Rightarrow f\ a \to g\ a \to \tDMap\ f\ g \to \tDMap\ f\ g \\
    \dmLookup :: \texttt{GCompare}\ f \Rightarrow f\ a \to \tDMap\ f\ g \to \text{Maybe}\ (g\ a) \\
    (\dmIndex) :: \texttt{GCompare}\ f \Rightarrow f\ a \to \tDMap\ f\ g \to g\ a \qquad\ccomment{partial version of $\dmLookup$} \\
    \dmDelete :: \texttt{GCompare}\ f \Rightarrow f\ a \to \tDMap\ f\ g \to \tDMap\ f\ g \\
    \dmInsertWith :: \texttt{GCompare}\ f \multilineT{
        {} \Rightarrow (g\ a \to g\ a \to g\ a) \\
        {} \to f\ a \to g\ a \to \tDMap\ f\ g \to \tDMap\ f\ g
    } \\
    \dmMaxViewWithKey :: \tDMap\ f\ g \to \texttt{Maybe}\ (\exists a.\,(f\ a, g\ a), \tDMap\ f\ g)
\end{array} \]
\caption{\label{fig:dmap-types}
    Types of methods on $\tDMap$, as provided by \url{https://hackage.haskell.org/package/dependent-map}.
    (The existential in the type of `$\dmMaxViewWithKey$' is encoded as a newtype in Haskell.)
}
\end{figure}

The point of a dependent map is to be able to map a type-indexed key to a type-indexed value.
Our values (arrays and $\tDelta$ terms) are shape-indexed and will, at least in the reverse pass, always contain scalars, hence shape-indexing is sufficient.
Previously, in the reverse pass for the scalar-level algorithm in \cref{sec:ad-naive} (\cref{fig:ad-efficient-eval}), the evaluation state looked as follows:
% Copied from fig:ad-efficient-eval
\[ \begin{array}{@{}l@{}}
    \mathbf{data}\ \EState = \EState \\
    \quad \{ \begin{array}[t]{@{}l@{\quad}l@{}}
        \ \name{grad} :: \Map\ \tDVarName\ \R & \ccomment{input cotangents: will collect final gradient} \\
        \leadcomma\ \name{dfrag} :: \Map\ \tID\ \tDelta & \ccomment{delta fragments} \\
        \leadcomma\ \name{accum} :: \Map\ \tID\ \R\ \} & \ccomment{accumulated node cotangents} \\
    \end{array}
\end{array} \]
We see that our map keys are $\tDVarName$ (for input values) and $\tID$ (for intermediate nodes in the graph).
These types gain a type parameter indicating the shape of the array they refer to:
\begin{equation}
\begin{array}{@{}l@{}}
    \mathbf{data}\ \tDelta\ \var{sh}\ \mathbf{where} \\
    \quad \del{Input} :: \tDVarName\ \textcolor{red}{\var{sh}} \to \tDelta\ \var{sh} \\
    \quad \del{Share} :: \tID\ \textcolor{red}{\var{sh}} \to \tDelta\ \var{sh} \to \tDelta\ \var{sh} \\
    \quad \ccomment{... other constructors ...}
\end{array}
\label{eq:array-delta-sh}
\end{equation}
And the evaluation state changes accordingly:\footnote{
    For operations on these $\tDMap$s to typecheck, $\tID$ and $\tDVarName$ must implement the $\texttt{GCompare}$ type class.
    They can do so if they, in addition to the integer ID/name itself, also contain a singleton representing the type index (i.e. the rank) on the value level.
    Key equality checking will then also compare the singletons for equality.
    Because a list of \texttt{GHC.TypeLits.SNat}s (themselves simply integers) can suffice for these singletons, this need not be very expensive, but if the overhead is still unacceptable, the only thing that removing the singleton (and using an unsafe coercion to conjure up the type equality evidence) compromises is \emph{confidence} in the algorithm's type-correctness --- if two $\tID$s have equal identifiers but types, that is a bug in our code.
}
\[ \begin{array}{@{}l@{}}
    \mathbf{data}\ \EState = \EState \\
    \quad \{ \begin{array}[t]{@{}l@{}}
        \ \name{grad} :: \tDMap\ \tDVarName\ (\lambda \var{sh}.\ \Array\ \var{sh}\ \R) \\
        \leadcomma\ \name{dfrag} :: \tDMap\ \tID\ \tDelta \\
        \leadcomma\ \name{accum} :: \tDMap\ \tID\ (\lambda \var{sh}.\ \Array\ \var{sh}\ \R)\ \} \\
    \end{array}
\end{array} \]
Note our use of a type-level lambda here; in Haskell, this needs to be encoded using a newtype.

With this updated typing, we can update the reverse pass for arrays.
The $D[-]$ code transformation gave the \emph{forward derivatives} of our language constructs, expressed in the language of $\tDelta$ terms.
The reverse pass now has the task of transposing these forward derivatives to \emph{reverse derivatives}.
% Because we designed our core language such that every operation has a natural reverse derivative in the language, this new reverse pass is a relatively straightforward extension from $\reversePass$ from \cref{sec:ad-naive}.
First, we change the type of $\reversePass$ to work with arrays instead of scalars (parts changed from \cref{fig:ad-efficient-eval} highlighted in \textcolor{red}{red}):
\[ \begin{array}{@{}l@{}}
    \reversePass :: \textcolor{red}{\Array\ \var{sh}}\ \R \to \tDelta\ \textcolor{red}{\var{sh}} \to \textcolor{red}{\tDMap}\ \tDVarName\ \textcolor{red}{(\lambda \var{sh}.\ \Array\ \var{sh}}\ \R\textcolor{red}{)} \\
    \reversePass\ c\ d = \name{grad}\ (\backprop\ (\eval\ c\ d\ (\EState\ \{\}\ \{\}\ \{\})))
\end{array} \]
taking an array-valued incoming cotangent to be immediately passed to $\eval$.

The definition of $\backprop$ from \cref{fig:ad-efficient-eval} remains unchanged except for a simple textual substitution of ``$\Map$'' to ``$\tDMap$''.
The major change is in $\eval$, which now has to (1) deal with array values, and (2) handle more $\tDelta$ constructors than before.
The result is shown in \cref{fig:ad-array-eval}; let us discuss some important aspects.

\begin{figure}
\[ \begin{array}{@{}l@{}}
    \eval :: \Array\ \var{sh}\ \R \to \tDelta\ \var{sh} \to \EState \to \EState \\
    \begin{array}{@{}l@{\ }c@{\ }l@{}}
        \eval\ c\ \del{Zero} &=& \id \\
        \eval\ c\ (\del{Input}\ v) &=& \clam{s} s\ \{\ \name{grad} = \name{DMap.insertWith}\ (\vecadd)\ v\ c\ (\name{grad}\ s)\ \} \\
        \eval\ c\ (\del{Add}\ d_1\ d_2) &=& \eval\ c\ d_2 \circ \eval\ c\ d_1 \\
        \eval\ c\ (\del{Scale}\ \mathit{arr}\ d) &=& \eval\ (c \vecmul \mathit{arr})\ d \\
        \eval\ c\ (\del{Share}\ i\ d) &=& \clam{s} s\ \{ \multilineT{
            \ \name{dfrag} = \name{DMap.insert}\ i\ d\ (\name{dfrag}\ s) \\
            \leadcomma\ \name{accum} = \name{DMap.insertWith}\ (\vecadd)\ i\ c\ (\name{accum}\ s)\ \} \\
        } \\
        \eval\ c\ (\del{Index}\ d\ i) &=& \eval\ (\cc{oneHot}\ (\name{shapeDelta}\ d)\ i\ c) \\
        \eval\ c\ (\del{SumOuter}\ d) &=& \eval\ (\cc{replicate}\ (\name{head}\ (\shapeDelta\ d))\ c)\ d \\
        \eval\ c\ (\del{Gather}\ d\ f) &=& \eval\ (\cc{scatter}\ (\shapeDelta\ d)\ c\ f)\ d \\
        \eval\ c\ (\del{Scatter}\ d\ f) &=& \eval\ (\cc{gather}\ (\shapeDelta\ d)\ c\ f)\ d \\
        \eval\ c\ (\del{LitArray}\ \var{ds}) &=& \multilineT{
            \clet \shvec{n} = \cc{shape}\ \var{ds} \\
            \cin \multilineT{
                \eval\ (\cc{index}\ c\ \ixvec{n-1})\ (\cc{index}\ \var{ds}\ \ixvec{n-1}) \\
                \quad \circ \cdots \circ \eval\ (\cc{index}\ c\ \ixvec{0})\ (\cc{index}\ \var{ds}\ \ixvec{0})
            }
        } \\
        \eval\ c\ (\del{Replicate}\ d) &=& \eval\ (\cc{sumOuter}\ c)\ d \\
        \eval\ c\ (\ctranspx{\del{Transpose}}{j_1,\ldots,j_m}\ d) &=& \eval\ (\ctransp{\text{inversePermutation}(j_1,\ldots,j_m)}\ c)\ d \\
        \eval\ c\ (\del{Reshape}\ d) &=& \eval\ (\cc{reshape}\ (\name{shapeDelta}\ d)\ c)\ d
    \end{array}
\end{array} \]
\caption{\label{fig:ad-array-eval}
    The evaluator in the reverse pass for the array AD algorithm.
}
\end{figure}

When comparing \cref{fig:ad-array-eval} with \cref{fig:ad-efficient-eval}, we see that the incoming cotangent $c$ is now array-valued, just like the cotangent accumulators in the evaluation state and the scaling constant in $\del{Scale}$.
Hence, the addition and scaling operators from $\eval_3$ in \cref{fig:ad-efficient-eval} have to be broadcasted (i.e.\ evaluated elementwise) on arrays now; this we indicate with $\vecadd$ and $\vecmul$.

\newcommand\matrowvec[1]{\bigl(\,#1\,\bigr)}
In the cases for the array operations, we see that $\eval$ computes the linear transpose of the forward derivatives described by the $\tDelta$ term, and applies them to the $c$ argument.
The transpose of $\del{SumOuter}$ (summation, matrix $\matrowvec{1 \ \  \cdots \ \  1}$) is replication (matrix $\matrowvec{1 \ \  \cdots \ \  1}^\top$); the transpose of a gather is a scatter.
This was in fact already true of the existing $\tDelta$ constructors, but it was somewhat less obvious: the transpose of addition (matrix $\matrowvec{1 \ \  1}$) is duplication (matrix $\matrowvec{1 \ \  1}^\top$), and the transpose of scaling (matrix $\matrowvec{r}$) is scaling (matrix $\matrowvec{r}^\top = \matrowvec{r}$).

In order to compute the transposes of many of the array operations, we need the shape of the original argument arrays to the operation (conveniently equal to the shape of their forward derivative, as computed by the $\tDelta$ term arguments).
For example, in $\eval\ c\ (\del{SumOuter}\ d)$, we need the shape of the array computed by $d$ in order to $\cc{replicate}$ up the incoming cotangent to a cotangent appropriate for reverse-evaluating $d$.
This shape can be computed by inspection of the $\tDelta$ term; we implement this in a function `$\shapeDelta$', the implementation of which we elide.
This `$\shapeDelta$' function can be made constant-time by caching well-chosen explicit shape vectors in $\tDelta$ constructors.

Finally, we use a few array operations that are not strictly speaking in the core language.
The $\cc{shape}$ function was already used in \cref{fig:bot-rules-index}, and returns the shape of its argument array.
The $\cc{oneHot}$ operation produces an array with one specified entry having a particular value, and zeros elsewhere; it could be implemented as follows:
\[
    \cc{oneHot}\ \var{sh}\ \var{idx}\ x =
        \cc{gather}\ \var{sh}\ \ravel{0, x}\ (\clam{\var{idx'}} \cc{cond}\ (\var{idx} = \var{idx'})\ \ixvec{1}\ \ixvec{0})
\]
but likely benefits from a specialised implementation.

\subsection{Wrapper}\label{sec:ad-dual-arrays-wrapper}

As before in \cref{sec:ad-naive}, this reverse AD algorithm for array programs needs a wrapper to be useful.
Compared to the wrapper described in \cref{sec:ad-naive-wrapper}, the one for the array AD algorithm is mostly identical; the only wrinkle is that we cannot simplify by assuming the program to differentiate has only one argument (because the core language does not support tuples).
The type of the wrapper thus becomes the following:
\[
    \name{wrapper}\ \multilineT{
        :: \tAST\ (\Array\ \shvec{}\ \R) \quad\ccomment{free variables: $x_1 : \Array\ \var{sh}_1\ \rho_1, \ldots, x_n : \Array\ \var{sh}_n\ \rho_n$} \\
        \to (\Array\ \var{sh}_1\ \rho_1, \ldots, \Array\ \var{sh}_n\ \rho_n) \\
        \to \Array\ \shvec{}\ \R \\
        \to (\Array\ \var{sh}_1\ \rho_1, \ldots, \Array\ \var{sh}_n\ \rho_n)
    }
\]
The implementation is completely analogous to the one in \cref{sec:ad-naive-wrapper}.

\section{Compile-Time Differentiation}\label{sec:restaging}

\TODO{This section needs an example that exhibits that real fusion opportunities arise when optimising the differentiated program.}

This wrapper obtained in \cref{sec:ad-dual-arrays-wrapper} takes a term, but returns a \emph{numeric} gradient.
In particular, it does this by passing the input to the transformed version of the source program, and \emph{interpreting} (with $\reversePass$) the resulting $\tDelta$ term.
Typically, one requires a gradient at many different input points, and with the current setup, this results in interpretation overhead from $\reversePass$ for every such input point.
In this section, we improve upon this by extracting the $\tDelta$ term from the \emph{unevaluated} transformed program, and evaluating (transposing) this term symbolically.

We first show an example (\cref{sec:restaging-example}) that illustrates the pipeline so far, and introduces the observation that allows us to symbolically evaluate the $\tDelta$ term in a forward-differentiated program without having to provide an input first.
Then, in \cref{sec:restaging-symbolic-eval-fail,sec:restaging-symbolic-eval,sec:restaging-unglobal}, we show how to symbolically evaluate this extracted $\tDelta$ term, yielding in \cref{sec:restaging-wrapper} a full program that computes the gradient in one go without relying on any $\tDelta$ interpreter.
Finally, we show in \cref{sec:restaging-induction} that the $\tDelta$ extraction observed in the example works in general, by an induction argument over the syntax of the core language.

\subsection{Example}\label{sec:restaging-example}

% term example self-convolution
\newcommand\texSc{t_{\text{sc}}}

At the top of \cref{fig:restaging-example}, we give a simple term $\texSc$ that multiplies an array $a :: \Array\ \shvec{n}\ \R$ elementwise with its reverse and sums the result, thus computing one element of $a$'s self-convolution.
Note that a zero-dimensional array contains exactly one value, hence $\texSc$ is suitable for reverse differentiation.

\begin{figure}
{\textbf{Source term:}\hfill}\vspace{-2pt}
\[ \begin{array}{@{}l@{}}
    a : \Array\ \shvec{n}\ \R \\
    \vdash \texSc = \cc{sumOuter}\ (\cc{build1}\ n\ (\clam{i} \cc{index}\ a\ i \times_R \cc{index}\ a\ (n - 1 - i))) \\
    \quad : \Array\ \shvec{}\ \R
\end{array} \]
{\textbf{After \BOT:}\hfill}\vspace{-2pt}
\[
    \texSc' = \cc{sumOuter}\ \bigl(\cc{gather}\ \shvec{n}\ a\ (\clam{\ixvec{i}} \ixvec{i}) \times_\R \cc{gather}\ \shvec{n}\ a\ (\clam{\ixvec{i}} \ixvec{n - 1 - i})\bigr)
\]
{\textbf{After forward differentiation:}\hfill}\vspace{-2pt}
\[
    D[\texSc'] = \mathbf{do}\ \multilineT{
        (x, d) \leftarrow \multilineT{
            \mathbf{do}\ \multilineT{
                (x_1, d_1) \leftarrow \mathbf{do}\ \multilineT{
                    (y_1, dy_1) \leftarrow \mathbf{return}\ a \\
                    \var{id}_1 \leftarrow \mgenid \\
                    \mathbf{return}\ (\multilineT{
                        \cc{gather}\ \shvec{n}\ y_1\ (\clam{\ixvec{i}} \ixvec{i}) \\
                        \leadcomma \del{Share}\ \var{id}_1\ (\del{Gather}\ dy_1\ (\clam{\ixvec{i}} \ixvec{i})))
                    }
                } \\
                (x_2, d_2) \leftarrow \mathbf{do}\ \multilineT{
                    (y_2, dy_2) \leftarrow \mathbf{return}\ a \\
                    \var{id}_2 \leftarrow \mgenid \\
                    \mathbf{return}\ (\multilineT{
                        \cc{gather}\ \shvec{n}\ y_2\ (\clam{\ixvec{i}} \ixvec{n - 1 - i}) \\
                        \leadcomma \del{Share}\ \var{id}_2\ (\del{Gather}\ dy_2\ (\clam{\ixvec{i}} \ixvec{n - 1 - i})))
                    }
                } \\
                \var{id}_3 \leftarrow \mgenid \\
                \mathbf{return}\ (x_1 \times_{\Array\ \var{sh}\ \R} x_2, \del{Share}\ \var{id}_3\ (\del{Add}\ (\del{Scale}\ x_2\ d_1)\ (\del{Scale}\ x_1\ d_2)))
            }
        } \\
        \var{id}_4 \leftarrow \mgenid \\
        \mathbf{return}\ (\cc{sumOuter}\ x, \del{Share}\ \var{id}_4\ (\del{SumOuter}\ d))
    }
\]
{\textbf{After simplification:}\hfill}\vspace{-2pt}
\[
    D[\texSc'] = \mathbf{do}\ \multilineT{
        \mathbf{let}\ (a_p, a_d) = a \\
        \mathbf{let}\ x_1 = \cc{gather}\ \shvec{n}\ a_p\ (\clam{\ixvec{i}} \ixvec{i}) \\
        \mathbf{let}\ x_2 = \cc{gather}\ \shvec{n}\ a_p\ (\clam{\ixvec{i}} \ixvec{n - 1 - i}) \\
        \var{id}_1 \leftarrow \mgenid;
        \var{id}_2 \leftarrow \mgenid;
        \var{id}_3 \leftarrow \mgenid;
        \var{id}_4 \leftarrow \mgenid \\
        \mathbf{return}\ (\multilineT{
            \cc{sumOuter}\ (x_1 \times_{\Array\ \var{sh}\ \R} x_2) \\
            \leadcomma \del{Share}\ \var{id}_4\ (\del{SumOuter}\ (\del{Share}\ \var{id}_3 \\
            \quad (\del{Add}\ \multilineT{
                (\del{Scale}\ x_2\ (\del{Share}\ \var{id}_1\ (\del{Gather}\ a_d\ (\clam{\ixvec{i}} \ixvec{i})))) \\
                (\del{Scale}\ x_1\ (\del{Share}\ \var{id}_2\ (\del{Gather}\ a_d\ (\clam{\ixvec{i}} \ixvec{n - 1 - i})))))
            }
        }
    }
\]
\caption{\label{fig:restaging-example}
    Example of the full pipeline so far.
}
\end{figure}

First, we vectorise the term by passing it through the \BOT.
The $\cc{sumOuter}$ in $\texSc$ remains as-is, because it is not enclosed in $\cc{build1}$ or $\cc{index}$.
The $\cc{build1}$ part of $\texSc$ gets vectorised: indexing turns into $\cc{gather}$ and the scalar primitive operation $(\times_\R)$ now operates on arrays, notated $(\times_{\Array\ \var{sh}\ \R})$.
Of course, `$\cc{gather}\ \shvec{n}\ a\ (\clam{\ixvec{i}} \ixvec{i})$' can be simplified to `$a$', and an implementation should perform this optimisation, but we keep the term as-is throughout this section to illustrate the general case.

For (forward) differentiation, we take $\texSc$ through the code transformation in \cref{fig:ad-rules-dual-arrays,fig:ad-rules-dual-arrays-op}.
The result is monadic code in the $\tIdGen$ monad, and is a single term $D[\texSc']$ of the following type:
\[
    a : (\Array\ \shvec{n}\ \R, \tDelta\ \shvec{n})
    \vdash D[\texSc'] : \tIdGen\ (\Array\ \shvec{}\ \R, \tDelta\ \shvec{})
\]
Internally, the term uses array operations from the core language ($\cc{gather}$, $\cc{sumOuter}$, $\times_{\Array\ \var{sh}\ \R}$) as well as pairs, let-binding, and monadic operations.

\paragraph{Simplification}
This term could of course benefit from some simplification, most particularly using the monad laws.
Formulated in $\mathbf{do}$-notation, the properties about monads that we use are the following:
\[ \begin{array}{@{}l@{\ =\ }ll@{}}
    (\mathbf{do}\ x \leftarrow \mathbf{return}\ E_1; E_2)
        & (\mathbf{do}\ \mathbf{let}\ x = E_1; E_2)
        & \textit{\small(left identity)} \\
    (\mathbf{do}\ y \leftarrow (\mathbf{do}\ x \leftarrow E_1; E_2); E_3)
        & (\mathbf{do}\ x \leftarrow E_1; y \leftarrow (\mathbf{do}\ E_2); E_3)
        & \textit{\small(associativity)} \\
    (\mathbf{do}\ y \leftarrow (\mathbf{do}\ \mathbf{let}\ x = E_1; E_2); E_3)
        & (\mathbf{do}\ \mathbf{let}\ x = E_1; y \leftarrow (\mathbf{do}\ E_2); E_3)
        & \textit{\small(by def.\ of do-notation)}
\end{array} \]
with $E_i$ standing for arbitrary code.
Additionally, we reorganise some let-bindings, and contract let-bindings with the same right-hand side (specifically: ``$a$''):
\[
    (\mathbf{let}\ x = E\ \mathbf{in}\ \mathbf{let}\ y = E\ \mathbf{in}\ \ldots x \ldots y \ldots)
    = (\mathbf{let}\ x = E\ \mathbf{in}\ \ldots x \ldots x \ldots)
\]
The result is the simplified term at the bottom of \cref{fig:restaging-example}.

Note that because the original term ($\texSc$) did not have any let-bindings, its trace (the $\tDelta$ term) does not have any shared subterms, and hence the $\del{Share}$ nodes in $D[\texSc']$ actually turn out to be unnecessary.
In general, however, they could be needed if the source program used the result of certain subterms multiple times.

\paragraph{Evaluation}
To compute a gradient with this differentiated term at a certain input $x : \Array\ \shvec{n}\ \R$, run $D[\texSc']$ in the environment $\{a \coloneqq (x, \del{Input}\ \var{var})\}$ for some $\var{var} : \tDVarName\ \shvec{n}$, and pass the resulting $\tDelta$ term (together with the initial cotangent $\arrlit{1.0} : \Array\ \shvec{}\ \R$) to $\reversePass$ from \cref{sec:ad-dual-arrays}.

The $\reversePass$ function calls $\backprop$, which evaluates the $\tDelta$ term in the second component of the result of $D[\texSc']$ in reverse, starting with its topmost $\del{Share}\ \var{id}_4$ node.
The initial cotangent $\arrlit{1.0}$ arrives in $\eval$ (\cref{fig:ad-array-eval}) via the `$\name{accum}$' map of the evaluation state (at key $\var{id}_4$).
Evaluation then reverse-evaluates $\del{SumOuter}$, replicating up the initial cotangent to an $n$-element array which, again via the `$\name{accum}$' map, gets contributed to the $\del{Add}$ node, which propagates it on to both $\del{Scale}$ nodes.
There the array gets elementwise-multiplied with the primal arrays $x_2$ and $x_1$ before $\eval$ of $\del{Gather}$ uses a `$\name{scatter}$' operation to invert the index mappings and construct the two contributions to the gradient with respect to $a$ in `$\name{grad}$'.
These are added together using $\vecadd$ in the $\dmInsertWith$ call in $\eval$ of $\del{Input}$.

Finally, the final gradient of type $\Array\ \shvec{n}\ \R$ is at the $\var{var}$ key of the $\tDMap$ returned from $\reversePass$.
Because the array operations in $\texSc'$ operate on scalar arrays in bulk, the $\tDelta$ trace contains only a few nodes, and evaluation overhead is limited.

\paragraph{Separated $\tDelta$ term}
A curious thing has happened when simplifying $D[\texSc']$ to the form at the bottom of \cref{fig:restaging-example}: the $\tDelta$ term is fully extracted from the primal computation.
The simplified $D[\texSc']$ has the following structure:
\newcommand\eqRestagingForm{
    \begin{array}{@{}l@{}} % pointless array to put baseline in the middle
    \mathbf{do}\ \multilineT{
        \mathbf{let}\ \textit{most of the primal computation} \\
        (\var{id}_1, \ldots, \var{id}_n) \leftarrow \textit{generate $\tID$s} \\
        \mathbf{return}\ (\multilineT{
            \textit{result of the primal computation} \\
            \leadcomma \textit{symbolic $\tDelta$ term})
        }
    }
    \end{array}
}
\begin{equation}
    \eqRestagingForm
    \label{eq:restaging-form}
\end{equation}
In fact, it turns out that $D[t]$ has this structure (after simplification) for \emph{any} term $t$ in our core language!
This is good news, because with a separated-out $\tDelta$ term, we can symbolically evaluate that $\tDelta$, put the resulting symbolic gradient back in replacing the dual component of the result, and obtain a single term that computes a gradient without the need of any interpreter.
In \cref{sec:restaging-induction} we (informally) prove that we can always simplify $D[t]$ to the form in \cref{eq:restaging-form}, but for now let us simply assume that the example in \cref{fig:restaging-example} is representative.

Let us look more closely at the structure of the dual component of the result: ``\textit{symbolic $\tDelta$ term}'' in \cref{eq:restaging-form}.
It is of course not actually a value of type $\tDelta\ \var{sh}$ for some $\var{sh}$, but instead a term (call it $t_d$) in an extension of our core language that \emph{produces} a $\tDelta$ term.
However, as we can see in \cref{fig:restaging-example}, $t_d$ in fact contains almost only $\tDelta$ constructors; the exceptions are as follows:
\begin{enumerate}
\item
    The $\tID$ field in a $\del{Share}$ constructor is a variable reference to one of the $\mgenid$ results.
\item
    The first field of a $\del{Scale}$ constructor (the scaling constant) is a variable reference into the first $\mathbf{let}$-part of the form.
    This is valid for the rules in \cref{fig:ad-rules-dual-arrays-op}, but if there were a $D[-]$ rule for some primitive operation that puts a non-trivial term in the first field of $\del{Scale}$, this can be easily reduced to a variable reference by let-binding that term first.
\item
    Index values (in $\del{Index}$) and index functions (in $\del{Gather}$ and $\del{Scatter}$) are terms instead of concrete values.
\item
    The $\tDelta$ terms for the inputs, in \cref{fig:restaging-example} just $a_d$, may appear instead of a $\tDelta$ constructor term.
\end{enumerate}
Exceptions 1--3 concern positions in the $\tDelta$ data type where non-$\tDelta$ values are embedded, so it is to be expected that they ``escape'' from the strict form of a term with only $\tDelta$ constructors.
Exception 4 is for inputs, and if we think back to the wrapper (\cref{sec:ad-naive-wrapper}), we already know that these will become $\del{Input}$ constructors.
Hence we can fill in the exception-4 variable references with the appropriate $\del{Input}$ terms, and leave just exceptions 1--3.

\subsection[Symbolic evaluation of Delta: without proper sharing]{Symbolic evaluation of $\tDelta$: without proper sharing}\label{sec:restaging-symbolic-eval-fail}

A term that consists just of $\tDelta$ constructors apart from exceptions 1--3 listed in \previoussubsection{sec:restaging-example}{the previous subsection} is similar enough to an actual $\tDelta$ term that we can symbolically evaluate it: all the information is already present for $\eval$ to ``know what to do''.
To capture this particular not-quite-$\tDelta$, define a data type $\tSymDelta$ (short for ``symbolic delta'') just like $\tDelta$ from \cref{fig:array-dn-ad}, but with the exceptional constructors modified as follows:
\[ \begin{array}{@{}l@{}}
    \mathbf{data}\ \tSymDelta\ \var{sh}\ \mathbf{where} \\
    \quad \multilineT{
        \del{Scale} :: \textcolor{red}{\tASTVarName\ (}\Array\ \var{sh}\ \R\textcolor{red}{)} \to \tSymDelta\ \var{sh} \to \tSymDelta\ \var{sh} \\
        \del{Share} :: \textcolor{red}{\tASTVarName}\ \tID \to \tSymDelta\ \var{sh} \to \tSymDelta\ \var{sh} \\
        \del{Index} :: \tDelta\ \shvec{k_1, \ldots, k_n} \to \textcolor{red}{\ASTIx}\ m \to \tDelta\ \shvec{k_{m+1}, \ldots, k_n} \\
        \del{Gather}\ \multilineT{
            :: \tDelta\ \shvec{k_1, \ldots, k_{m_2}, k_{m_2+1}, \ldots, k_n} \to \textcolor{red}{\ASTIxFun\ m_1\ m_2} \\
            \to \tDelta\ \shvec{k'_1, \ldots, k'_{m_1}, k_{m_2+1}, \ldots, k_n}
        } \\
        \del{Scatter}\ \multilineT{
            :: \tDelta\ \shvec{k_1, \ldots, k_{m_1}, k_{m_1+1}, \ldots, k_n} \to \textcolor{red}{\ASTIxFun\ m_1\ m_2} \\
            \to \tDelta\ \shvec{k'_1, \ldots, k'_{m_2}, k_{m_1+1}, \ldots, k_n}
        } \\
        \ccomment{Other $\tDelta$ constructors with ``$\tDelta$'' replaced by ``$\tSymDelta$''}
    }
\end{array} \]
Here we assume that the output of the $D[-]$ code transformation from \cref{fig:ad-rules-dual-arrays,fig:ad-rules-dual-arrays-op} is represented in a data type called `$\tAST$', and that variable references in `$\tAST$' are represented with `$\tASTVarName$'s.
The `$\ASTIx$' data type is a symbolic `$\Ix$':
\[ \begin{array}{@{}l@{}}
    \mathbf{data}\ \ASTIx\ k\ \mathbf{where} \\
    \quad \begin{array}{@{}l@{\ }l@{}}
        \IZ &:: \ASTIx\ 0 \\
        (\IS) &:: \tAST\ (\Array\ \shvec{}\ \Int) \to \ASTIx\ k \to \ASTIx\ (k + 1)
    \end{array}
\end{array} \]
and `$\ASTIxFun\ k_1\ k_2$' similarly captures a symbolic function $\Ix\ k_1 \to \Ix\ k_2$.
By the observation at the end of \cref{sec:restaging-example}, we can express the dual part of the return value of $D[t]$ (the ``\textit{symbolic $\tDelta$ term}'' part of \cref{eq:restaging-form}) as a value of type $\tSymDelta$.

The $\eval$ function in the non-symbolic reverse pass of \cref{sec:ad-dual-arrays} (\cref{fig:ad-array-eval}) has the following type:
\[ \eval :: \Array\ \var{sh}\ \R \to \tDelta\ \var{sh} \to \EState \to \EState \]
taking an incoming cotangent and a $\tDelta$ term to evaluate, and working on an evaluation state of type $\EState$.
For evaluation of a $\tSymDelta$, its type would become symbolic:
\[ \eval :: \tAST\ (\Array\ \var{sh}\ \R) \to \tSymDelta\ \var{sh} \to \EState \to \EState \]
in addition, of course, to changing $\EState$ as well as the type of $\reversePass$.
However, making $\eval$ evaluate symbolic $\tDelta$ terms is not quite as easy as just changing its use of the core language array operations to constructors of `$\tAST$'.
Specifically, $\eval$ sometimes duplicates the incoming cotangent $c$.
In \cref{fig:ad-array-eval}, we had:
\[ \eval\ c\ (\del{Add}\ d_1\ d_2) = \eval\ c\ d_2 \circ \eval\ c\ d_1 \]
Because this $c$ is now not simply an array but instead a term that \emph{computes} an array, passing the same $c$ to $\eval$ twice will result in inserting that term twice in the gradient program.

This is work duplication, and this duplication would occur for all $\del{Add}$ and $\del{LitArray}$ nodes; because $\del{Add}$ is emitted for every primitive operation with more than 1 argument, this work duplication is indeed quite egregious.
For example, suppose we had the following $\tSymDelta$ term, with IDs of $\del{Share}$ nodes indicated using subscripts:\footnote{This $\tDelta$ term would arise from $D[\mathbf{let}\ x = \var{inp} + \var{inp}\ \mathbf{in}\ \mathbf{let}\ y = x + x\ \mathbf{in}\ \mathbf{let}\ z = y + y\ \mathbf{in}\ z]$ with input $\var{inp}$.}

\vspace{3pt}
\begin{center}
\begin{tikzpicture}[xscale=1.5]
    \node (1) at (0, 0) {$\vphantom{\del{Input}}\del{Share}_3$};
    \node (2) at (1, 0) {$\vphantom{\del{Input}}\del{Add}$};
    \node (3) at (2, 0) {$\vphantom{\del{Input}}\del{Share}_2$};
    \node (4) at (3, 0) {$\vphantom{\del{Input}}\del{Add}$};
    \node (5) at (4, 0) {$\vphantom{\del{Input}}\del{Share}_1$};
    \node (6) at (5, 0) {$\vphantom{\del{Input}}\del{Add}$};
    \node (7) at (6, 0) {$\del{Input}$};
    \draw[->] (1) -- (2);
    \draw[->] (2) edge[bend left=30] (3);
    \draw[->] (2) edge[bend right=30] (3);
    \draw[->] (3) -- (4);
    \draw[->] (4) edge[bend left=30] (5);
    \draw[->] (4) edge[bend right=30] (5);
    \draw[->] (5) -- (6);
    \draw[->] (6) edge[bend left=30] (7);
    \draw[->] (6) edge[bend right=30] (7);
\end{tikzpicture}
\end{center}
Symbolically evaluating this term with an initial cotangent term $c$ would pass ``$c \vecadd c$'' to node 2,\footnote{The broadcasted additions here (``$\vecadd$'') come from the use of $(\vecadd)$ in $\eval$ of $\del{Share}$ in \cref{fig:ad-array-eval}.} ``$(c \vecadd c) \vecadd (c \vecadd c)$'' to node 1, and ``$((c \vecadd c) \vecadd (c \vecadd c)) \vecadd ((c \vecadd c) \vecadd (c \vecadd c))$'' to the $\del{Input}$ node!

This is precisely the same problem as we had in \cref{sec:ad-naive-sharing}: we are building a term symbolically (there a $\tDelta$ term, here a program that computes a gradient), but we cannot properly express the sharing that we need.
And there is no good place in $\eval$ to create a let-binding: $\eval$ puts $c$ in the evaluation state, and passing the same $c$ to multiple calls to $\eval$ just means that it ends up in multiple different places in the evaluation state.

\subsection[Symbolic evaluation of Delta with sharing]{Symbolic evaluation of $\tDelta$ with sharing}\label{sec:restaging-symbolic-eval}

Faced with the same problem, we apply the same solution: global sharing.
Like we have $\del{Share}$ in $\tDelta$, we add to (our extension of) the core language a new operation called `$\cc{share}$':
\[ \cc{share} : \tASTID\ (\Array\ \var{sh}\ \rho) \to \Array\ \var{sh}\ \rho \to \Array\ \var{sh}\ \rho \]
where $\tASTID$ is a type like $\tID$, except that (1) it is indexed by the full type of the array instead of just the shape, and (2) it refers to a primal program fragment of `$\tAST$' type, not a $\tDelta$ term fragment.

In its semantics on normal, concrete arrays, $\cc{share}$ does nothing: it simply returns its second argument.
However, in an AST describing a program, it indicates sharing: two terms wrapped by $\cc{share}$ nodes with the same $\tASTID$ are equal and must be computed only once.

To solve the problem from \previoussubsection{sec:restaging-symbolic-eval-fail}{the previous subsection}, we must ensure that any cotangent terms that $\eval$ duplicates are ``protected'' by $\cc{share}$.
This can be done in two different ways:
\begin{enumerate}
\item\label{item:restaging-share-pess}
    Pessimistically, i.e.\ the same way we placed $\del{Share}$ constructors in $D[-]$: whenever we construct a non-trivial cotangent, protect it against possible later duplication.
    This means that in $\eval$ in \cref{fig:ad-array-eval}, all recursive calls that do not simply pass on ``$c$'' as the first argument would be changed to first generate an $\tASTID$ with $\mgenastid$, and then instead of calling $\eval\ (\ldots c \ldots)\ d$, call $\eval\ (\cc{share}\ \var{id}\ (\ldots c \ldots))\ d)$, where $\var{id}$ is the generated $\tASTID$.
    To support the case where the initial cotangent passed to $\reversePass$ is non-trivial, $\reversePass$ would also wrap the initial cotangent in $\cc{share}$ before passing it to $\eval$.
\item\label{item:restaging-share-opt}
    By optimistically assuming that cotangents may not be duplicated at all, and to add $\cc{share}$ wrappers when duplicating cotangents instead.
    This means generating an $\tASTID$ in the right-hand side of $\eval$ for $\del{Add}$ and $\del{LitArray}$, and to pass $\cc{share}$-wrapped cotangents (with the same $\tASTID$!) to each recursive call to $\eval$ there.
\end{enumerate}
With both approaches, the reverse pass will need to run in the $\tIdGen$ monad: as we will see in \cref{sec:restaging-unglobal}, just like with $\del{Share}$ in \cref{sec:ad-naive-sharing}, we need the generated IDs to be monotonically increasing, so that subterms always have lower IDs.
This invariant is upheld by both (\ref{item:restaging-share-pess}) and (\ref{item:restaging-share-opt}).

While both approaches are valid and allow us to preserve all sharing, each has advantages and disadvantages.
Placing $\cc{share}$ pessimistically as in (\ref{item:restaging-share-pess}), we may end up placing many unnecessary $\cc{share}$ nodes if the program never actually duplicates values.
When placing them optimistically as in (\ref{item:restaging-share-opt}), a large tree of $\del{Add}$ nodes produces many internal $\cc{share}$ nodes on the cotangents, even if no actual computation happens on the cotangents in between and hence there is nothing of worth to deduplicate.

When placing $\del{Share}$ nodes to encode sharing of $\tDelta$ terms in \cref{sec:ad-naive-sharing}, we were obliged to choose approach (\ref{item:restaging-share-pess}): since the user program can pass around and compute with dual numbers (and hence, indirectly, pass around $\tDelta$ terms) without us knowing precisely where duplication happens, we could not share at duplication sites only.
With dual arrays in \cref{sec:ad-dual-arrays}, because we have a restricted input language, we \emph{can} see all duplication (namely, when a let-bound variable is used multiple times), but for consistency we kept using approach (\ref{item:restaging-share-pess}).
Here, we instead choose the optimistic approach (\ref{item:restaging-share-opt}), not because we cannot use (\ref{item:restaging-share-pess}), but because it results in a much smaller change to $\eval$: only in the equations for $\del{Add}$ and $\del{LitArray}$.

\begin{figure}
\begin{align*}
    &\begin{array}{@{}l@{}}
        \mathbf{data}\ \EState = \EState \\
        \quad \{ \multilineT{
            \ \name{grad} :: \tDMap\ \tDVarName\ (\lambda \var{sh}.\ \tAST\ (\Array\ \var{sh}\ \R)) \\
            \leadcomma\ \name{dfrag} :: \tDMap\ \tID\ \tSymDelta \\
            \leadcomma\ \name{accum} :: \tDMap\ \tID\ (\lambda \var{sh}.\ \tAST\ (\Array\ \var{sh}\ \R))\ \}
        }
    \end{array} \\
    &\begin{array}{@{}l@{}}
        \reversePass :: \tAST\ (\Array\ \var{sh}\ \R) \to \tSymDelta\ \var{sh} \to \tDMap\ \tDVarName\ (\lambda \var{sh}.\ \tAST\ (\Array\ \var{sh}\ \R)) \\
        \textcolor{gray}{\reversePass\ c\ d = \name{grad}\ (\backprop\ (\eval\ c\ d\ (\EState\ \{\}\ \{\}\ \{\})))}
    \end{array} \\
    &\begin{array}{@{}l@{}}
        \textcolor{gray}{\backprop :: \EState \to \EState} \\
        \textcolor{gray}{\backprop\ s = \multilineT{
            \mathbf{case}\ \dmMaxViewWithKey\ (\name{accum}\ s)\ \mathbf{of} \\
            \quad \multilineT{
                \name{Just}\ ((i, c), \mathit{acc}') \to \\
                \quad \multilineT{
                    \mathbf{let}\ \multilineT{
                        d = \name{dfrag}\ s \dmIndex i \\
                        s' = \eval\ c\ d\ (s\ \{\ \name{accum} = acc', \name{dfrag} = \dmDelete\ i\ (\name{dfrag}\ s)\ \})
                    } \\
                    \mathbf{in}\ \backprop\ s' \\
                } \\
                \name{Nothing} \to s \\
            }
        }}
    \end{array} \\
    &\begin{array}{@{}l@{}}
        \eval :: \tAST\ (\Array\ \var{sh}\ \R) \to \tSymDelta\ \var{sh} \to \EState \to \tIdGen\ \EState \\
        \begin{array}{@{}l@{\ }l@{}}
        \eval\ c\ \del{Zero} &= \mathbf{return} \\
        \eval\ c\ (\del{Input}\ v) &= \textcolor{gray}{\lambda s.}\ \mathbf{return}\ (\textcolor{gray}{s\ \{\ \name{grad} = \dmInsertWith\ (\vecadd)\ v\ c\ (\name{grad}\ s)\ \}}) \\
        \eval\ c\ (\del{Add}\ d_1\ d_2) &= \mathbf{do}\ \multilineT{
            \var{id} \leftarrow \mgenastid \\
            \mathbf{let}\ \var{cShared} = \cc{share}\ \var{id}\ c \\
            \eval\ \var{cShared}\ d_1 \hsklcomp \eval\ \var{cShared}\ d_2
        } \\
        \textcolor{gray}{\eval\ c\ (\del{Scale}\ \var{arr}\ d)} &\textcolor{gray}{= \eval\ (c \vecmul \var{arr})\ d} \\
        \eval\ c\ (\del{Share}\ i\ d) &= \textcolor{gray}{\lambda s.}\ \mathbf{return}\ (\textcolor{gray}{s\ \{\multilineT{
            \ \name{dfrag} = \dmInsert\ i\ d\ (\name{dfrag}\ s) \\
            \leadcomma\ \name{accum} = \dmInsertWith\ (\vecadd)\ i\ c\ (\name{accum}\ s)\ \}\textcolor{black}{)}}
        } \\
        \textcolor{gray}{\eval\ c\ (\del{Index}\ d\ i)} &\textcolor{gray}{= \eval\ (\cc{\textcolor{gray}{oneHot}}\ (\shapeDelta\ d)\ c)\ d} \\
        \textcolor{gray}{\eval\ c\ (\del{SumOuter}\ d)} &\textcolor{gray}{= \eval\ (\cc{\textcolor{gray}{replicate}}\ (\name{head}\ (\shapeDelta\ d))\ c)\ d} \\
        \textcolor{gray}{\eval\ c\ (\del{Gather}\ d\ f)} &\textcolor{gray}{= \eval\ (\cc{\textcolor{gray}{scatter}}\ (\shapeDelta\ d)\ c\ f)\ d} \\
        \textcolor{gray}{\eval\ c\ (\del{Scatter}\ d\ f)} &\textcolor{gray}{= \eval\ (\cc{\textcolor{gray}{gather}}\ (\shapeDelta\ d)\ c\ f)\ d} \\
        \eval\ c\ (\del{LitArray}\ \var{ds}) &= \mathbf{do}\ \multilineT{
            \var{id} \leftarrow \mgenastid \\
            \mathbf{let}\ \var{cShared} = \cc{share}\ \var{id}\ c \\
            \eval\ (\cc{index}\ \var{cShared}\ \ixvec{n - 1})\ (\cc{index}\ \var{ds}\ \ixvec{n - 1}) \\
            \quad \hsklcomp \cdots \hsklcomp \eval\ (\cc{index}\ \var{cShared}\ \ixvec{0})\ (\cc{index}\ \var{ds}\ \ixvec{0})
        } \\
        \textcolor{gray}{\eval\ c\ (\del{Replicate}\ d)} &\textcolor{gray}{= \eval\ (\cc{\textcolor{gray}{sumOuter}}\ c)\ d} \\
        \textcolor{gray}{\eval\ c\ (\ctranspx{\del{Transpose}}{j_1,\ldots,j_m}\ d)} &\textcolor{gray}{= \eval\ (\ctranspx{\cc{\textcolor{gray}{tr}}}{\text{inversePermutation}(j_1,\ldots,j_m)}\ c)\ d} \\
        \textcolor{gray}{\eval\ c\ (\del{Reshape}\ d)} &\textcolor{gray}{= \eval\ (\cc{\textcolor{gray}{reshape}}\ (\name{shapeDelta}\ d)\ c)\ d}
        \end{array}
    \end{array}
\end{align*}
\caption{\label{fig:restaging-eval}
    The symbolically-executing reverse pass.
    Notational punning: the core language operations ($\cc{gather}$, $\cc{index}$, etc.) denote $\tAST$ constructors here.
}
\end{figure}

The resulting reverse pass is shown in \cref{fig:restaging-eval}.
The $\EState$ data type and $\reversePass$ are adapted for $\tAST$-typed cotangents from \cref{sec:ad-dual-arrays-evaluator}; $\backprop$ is still unchanged from \cref{fig:ad-efficient-eval}, except for substituting ``$\tDMap$'' for ``$\Map$''.
The $\eval$ function is modified from \cref{fig:ad-array-eval} to share the incoming cotangent whenever it would be duplicated in multiple sub-evaluations.

It should be noted that in \cref{fig:restaging-eval} (the symbolic reverse pass), we use some convenient notational punning to emphasise the similarity to \cref{fig:ad-array-eval} (the non-symbolic array reverse pass): in \cref{fig:ad-array-eval}, the core language operations ($\cc{gather}$, $\cc{index}$, etc.) referred to \emph{array operations} that were performed on concrete arrays.
In \cref{fig:restaging-eval}, they instead refer to $\tAST$ constructors of the gradient term that is being built up.

Notable is that we do \emph{not} need to wrap $\cc{share}$ around cotangents that are added to the cotangent accumulation map (the `$\name{accum}$' field of $\EState$) or the gradient accumulation map (the `$\name{grad}$' field): they are never duplicated, because they are in fact used exactly once (either to retrieve their final value or to be added to yet another contribution).

\paragraph{Example}
Let $s$ be the the symbolic $\tDelta$ term extracted from the simplified version of $D[\texSc']$ in \cref{fig:restaging-example}, with $a_d$ replaced with $\del{Input}\ \texttt{"inp"}$.
If we run $\reversePass$ from \cref{fig:restaging-eval} on it, we get the following:
\[ \begin{array}{@{}l@{}}
    \reversePass\ \arrlit{1.0}\ s = \\
    \quad \{\ \texttt{"inp"} \mapsto \multilineT{
        \cc{scatter}\ \shvec{n}\ (x_2 \vecmul \cc{share}\ 1\ (\cc{replicate}\ n\ \arrlit{1.0}))\ (\clam{\ixvec{i}} \ixvec{i}) \\
        \quad {} \vecadd \cc{scatter}\ \shvec{n}\ (x_1 \vecmul \cc{share}\ 1\ (\cc{replicate}\ n\ \arrlit{1.0}))\ (\clam{\ixvec{i}} \ixvec{n - 1 - i})\ \}
    }
\end{array} \]
The first argument to $\reversePass$ is the incoming cotangent, which in this case is a zero-dimensional array of scalars (i.e.\ a single scalar).

Note that the `1' in the $\cc{share}$ nodes was generated by the $\mgenastid$ call in $\eval$ of the $\del{Add}$ node.
Further, because an interpreter of this term is supposed to memoise the results of evaluating $\cc{share}$-wrapped subterms, a proper interpreter would \emph{not} execute the $\cc{replicate}$ (the reverse derivative of the $\cc{sumOuter}$ in $\texSc$) twice.
To avoid excessive memory use in representing this AST in a compiler, one should ensure that $\cc{share}$-wrapped terms are actually shared in-memory in the compiler too.

\subsection{Converting global sharing to let-bindings}\label{sec:restaging-unglobal}

In principle, we can just generate the final gradient program by looking back at the structure of $D[t]$ after simplification (\cref{eq:restaging-form} on page \pageref{eq:restaging-form}), and replacing ``\textit{symbolic $\tDelta$ term}'' with `$\name{reconstruct}$' (\cref{fig:dn-ad-wrapper}) applied to the output of $\reversePass$ (\cref{fig:restaging-eval}) on the original ``\textit{symbolic $\tDelta$ term}''.
However, if we do this, the result is a term that has both let-bindings \emph{and} global sharing using $\cc{share}$: the let-bindings occur in the primal computation, and the global sharing is in the dual computation (produced by the symbolic reverse pass).
Because global sharing is rather seldomly used in compilers, for compiling this gradient program to native code it is likely necessary to ensure that the whole term uses let-bindings for sharing, and nothing else.

Thus, we need to eliminate $\cc{share}$ from the term produced by the (wrapper around the) symbolic reverse pass.
Fortunately, this term only contains a very limited set of constructs:
\begin{itemize}
\item
    It contains variable references: these come from the first field of $\del{Scale}$, which we restricted to be a variable reference in \cref{sec:restaging-example}.
\item
    It contains the initial cotangent passed to $\reversePass$, which we will assume to be either a contant literal or a variable reference.
    (The wrapper in \cref{sec:restaging-wrapper} will put a variable reference here.)
\item
    Otherwise, it contains only what $\eval$ from \cref{fig:restaging-eval} produces explicitly.
    For our core language, one can verify (by closely reading \cref{fig:restaging-eval}) that this is limited to:
    $\cc{share}$, $(\vecadd)$, $(\vecmul)$, $\cc{oneHot}$, $\cc{replicate}$, $\cc{scatter}$, $\cc{gather}$, $\cc{index}$, $\cc{sumOuter}$, $\cc{tr}$, and $\cc{reshape}$.
    Some subtleties:
    \begin{itemize}
    \item
        The $\cc{index}$ construct contains a list of terms for the position to index at; these are arbitrary terms from the primal program.
    \item
        Similarly, the $\cc{gather}$ and $\cc{scatter}$ constructs contain index functions, which also come straight from the primal program.
    \end{itemize}
    These components may contain arbitrary terms, but because $\eval$ just preserves these terms as-is, we are sure that they do not contain $\cc{share}$, meaning that we can also keep them as-is here.
\end{itemize}
In particular, \emph{it does not contain let-bindings outside of untouched, copied subterms}.
(There may of course be internal let-bindings inside of the index and index-function arguments to $\cc{index}$ and $\cc{gather}$/$\cc{scatter}$, but we do not need to care about those.)
Because the semantics of $\cc{share}$ is somewhat murky in the presence of let-bindings, their absence makes our task of conversion to standard let-bindings much more straightforward.

\begin{figure}
\[ \begin{array}{@{}l@{}}
    \unshare\ \multilineT{
        :: \tDMap\ \tASTID\ (\lambda \tau.\ (\tASTVarName\ \tau, \tAST\ \tau)) \to \tAST\ \tau \\
        \to (\tDMap\ \tASTID\ (\lambda \tau.\ (\tASTVarName\ \tau, \tAST\ \tau)), \tAST\ \tau)
    } \\
    \begin{array}{@{}l@{\ }l@{}}
    \unshare\ m\ x &= (m, x) \qquad\ccomment{variable references} \\
    \unshare\ m\ c &= (m, c) \qquad\ccomment{constant literals} \\
    \unshare\ m\ (\cc{share}\ i\ t) &= \multilineT{
        \mathbf{case}\ \dmLookup\ i\ m\ \mathbf{of} \\
        \quad \name{Just}\ (\var{var}, \_) \to (m, \var{var}) \quad\ccomment{return a variable reference term} \\
        \quad \name{Nothing} \to \\
        \quad\quad \multilineT{
            \ccomment{These are meta-`let's, not $\tAST$ constructors} \\
            \mathbf{let}\ \multilineT{
                \var{var} = \langle\text{\underline{\smash{generate a fresh variable name}}}\rangle \\
                t' = \unshare\ m\ t
            } \\
            \mathbf{in}\ (\dmInsert\ i\ (\var{var}, t')\ m, \var{var})
        }
    } \\
    \unshare\ m\ (s \vecadd t) &= \multilineT{
        \mathbf{let}\ \multilineT{
            (m_1, s') = \unshare\ m\ s \qquad\ccomment{idem} \\
            (m_2, t') = \unshare\ m_1\ t \\
        } \\
        \mathbf{in}\ (m_2, s' \vecadd t')
    } \\
    \unshare\ m\ (s \vecmul t) &= \multilineT{
        \mathbf{let}\ \multilineT{
            (m_1, s') = \unshare\ m\ s \qquad\ccomment{etc.} \\
            (m_2, t') = \unshare\ m_1\ t \\
        } \\
        \mathbf{in}\ (m_2, s' \vecmul t')
    } \\
    \unshare\ m\ (\cc{oneHot}\ \var{sh}\ t) &= \mathbf{let}\ (m', t') = \unshare\ m\ t\ \mathbf{in}\ (m', \cc{oneHot}\ \var{sh}\ t') \\
    \unshare\ m\ (\cc{replicate}\ k\ t) &= \mathbf{let}\ (m', t') = \unshare\ m\ t\ \mathbf{in}\ (m', \cc{replicate}\ k\ t') \\
    \unshare\ m\ (\cc{scatter}\ \var{sh}\ t\ f) &= \mathbf{let}\ (m', t') = \unshare\ m\ t\ \mathbf{in}\ (m', \cc{scatter}\ \var{sh}\ t'\ f) \quad\ccomment{$f$ untouched} \\
    \unshare\ m\ (\cc{gather}\ \var{sh}\ t\ f) &= \mathbf{let}\ (m', t') = \unshare\ m\ t\ \mathbf{in}\ (m', \cc{gather}\ \var{sh}\ t'\ f) \\
    \unshare\ m\ (\cc{index}\ t\ \var{ix}) &= \mathbf{let}\ (m', t') = \unshare\ m\ t\ \mathbf{in}\ (m', \cc{index}\ t'\ \var{ix}) \quad\ccomment{$\var{ix}$ untouched} \\
    \unshare\ m\ (\cc{sumOuter}\ t) &= \mathbf{let}\ (m', t') = \unshare\ m\ t\ \mathbf{in}\ (m', \cc{sumOuter}\ t') \\
    \unshare\ m\ (\ctransp{k_1,\ldots,k_n}\ t) &= \mathbf{let}\ (m', t') = \unshare\ m\ t\ \mathbf{in}\ (m', \ctransp{k_1,\ldots,k_n}\ t') \\
    \unshare\ m\ (\cc{reshape}\ \var{sh}\ t) &= \mathbf{let}\ (m', t') = \unshare\ m\ t\ \mathbf{in}\ (m', \cc{reshape}\ \var{sh}\ t')
    \end{array} \vspace{0.4em} \\
    \begin{array}{@{}l@{}}
    \stackLets :: \tDMap\ \tASTID\ (\lambda \tau.\ (\tASTVarName\ \tau, \tAST\ \tau)) \to \tAST\ \tau \to \tAST\ \tau \\
    \stackLets\ m\ t = \multilineT{
        \mathbf{case}\ \dmMaxViewWithKey\ m\ \mathbf{of} \\
        \quad \name{Just}\ ((\_, (\var{var}, t')), m') \to \\
        \quad\quad \stackLets\ m'\ (\cletin{\var{var} = t'}{t}) \quad\ccomment{An $\tAST$ let-binding!} \\
        \quad \name{Nothing} \to t
    }
    \end{array} \vspace{0.4em} \\
    \begin{array}{@{}l@{}}
    \shareToLet :: \tAST\ \tau \to \tAST\ \tau \\
    \shareToLet\ t = \mathbf{let}\ (m, t') = \unshare\ \{\}\ t\ \mathbf{in}\ \stackLets\ m\ t'
    \end{array}
\end{array} \]
\caption{\label{fig:restaging-unshare}
    Converting global sharing to let-bindings.
}
\end{figure}

The conversion function, for this peculiar language (the output of symbolic $\eval$) that we need to support, is given in \cref{fig:restaging-unshare}.
In this figure, we indicate meta-`let' (i.e.\ a let-binding in the language that $\unshare$ etc.\ themselves are written in) by ``$\mathbf{let}$'', and a term-`let' in the $\tAST$ data type by ``$\cletnosp$''.
All `let's in $\unshare$ are meta; the one `let' in $\stackLets$ is a term.

The conversion function from global sharing to let-bindings is $\shareToLet$, which first collects all shared term fragments in a $\tDMap$ using $\unshare$, and then stacks $\cletnosp$ terms on top of the root term fragment using $\stackLets$.
The $\unshare$ function takes a term $t$ with $\cc{share}$ nodes inside, and returns a dictionary of all the fragments inside $t$, together with a single root term fragment $t'$ that consists of the constructors near the root of $t$ above the first $\cc{share}$ nodes.
In both halves of the return value of $\unshare$, $\cc{share}$-wrapped subterms have been replaced with fresh variable names.

Thus, inside $\unshare$, whenever we encounter a $\cc{share}$ node, we have to decide on a variable name that will represent this fragment in the unshared term.
However, all we have is an $\tASTID$.
If we have external knowledge that $\tASTID$ and $\tASTVarName$ refer to equal (or convertible) types, and that the $\tASTVarName$ derived from an $\tASTID$ in this fashion will never be used by the user or generated by the differentiation machinery before this section, the side-effect ``$\langle\text{\underline{\smash{generate a fresh variable name}}}\rangle$'' in $\unshare$ could simply convert $i$ to a variable name instead.
If not, then $\unshare$ should additionally run in a simple state monad for generating fresh names.

Having a $\tDMap$ of all term fragments of $t$, we simply build a long stack of let-bindings on top of the root term fragment.
For this to make sense, we need to ensure that the fragments without dependencies are bound at the top of the stack, the fragments that depend just on those come right after, etc.
Fortunately, because the IDs are generated monotonically in the symbolic $\eval$ function (\cref{fig:restaging-eval}), it suffices to simply bind the IDs from lowest to highest.
Because $\stackLets$ builds the stack from the bottom up instead of from the top down, it starts with the term with the \emph{highest} ID at the bottom of the stack, and then proceeds upwards with lower and lower IDs until the whole $\tDMap$ is exhausted.

\paragraph{Example}
Running $\unshare$ on the gradient term (with $\cc{share}$) that we derived for the example at the end of \cref{sec:restaging-symbolic-eval}, we get:
\[ \begin{array}{@{}l@{}}
    \unshare\ \{\}\ (\reversePass\ \arrlit{1.0}\ s) = \\
    \quad \bigl(\multilineT{
        \{\ 1 \mapsto (\texttt{"shared1"}, \cc{replicate}\ n\ \arrlit{1.0})\ \} \\
        \leadcomma \cc{scatter}\ \shvec{n}\ (x_2 \vecmul \var{shared1})\ (\clam{\ixvec{i}} \ixvec{i}) \vecadd \cc{scatter}\ \shvec{n}\ (x_1 \vecmul \var{shared1})\ (\clam{\ixvec{i}} \ixvec{n - 1 - i})\bigr)
    }
\end{array} \]
Here we generated the name ``shared1'' for the ID 1.
Completing with $\stackLets$ (which has a rather easy task in this case with just one shared fragment), we get a term without $\cc{share}$:
\[ \begin{array}{@{}l@{}}
    \shareToLet\ (\reversePass\ \arrlit{1.0}\ s) = \\
    \quad \clet{\var{shared1} = \cc{replicate}\ n\ \arrlit{1.0}} \\
    \quad \cin \cc{scatter}\ \shvec{n}\ (x_2 \vecmul \var{shared1})\ (\clam{\ixvec{i}} \ixvec{i}) \vecadd \cc{scatter}\ \shvec{n}\ (x_1 \vecmul \var{shared1})\ (\clam{\ixvec{i}} \ixvec{n - 1 - i})
\end{array} \]

\subsection{Wrapper}\label{sec:restaging-wrapper}

Now we have all the components to assemble the final algorithm in a wrapper that the user can make sense of.
A pseudocode rendering of this wrapper is shown in \cref{fig:restaging-wrapper}.

\begin{figure}
\[ \begin{array}{@{}l@{}}
    \name{wrapper}\ \multilineT{
        :: \tAST\ (\Array\ \shvec{}\ \R) \quad\ccomment{free variables: $x_1 : \Array\ \var{sh}_1\ \rho_1, \ldots, x_n : \Array\ \var{sh}_n\ \rho_n$} \\
        \to \tAST\ (\Array\ \shvec{}\ \R, (\Array\ \var{sh}_1\ \rho_1, \ldots, \Array\ \var{sh}_n\ \rho_n)) \\
        \hphantom{\to}\quad \ccomment{free variables: $x_1 : \Array\ \var{sh}_1\ \rho_1, \ldots, x_n : \Array\ \var{sh}_n\ \rho_n$ and $c : \Array\ \shvec{}\ \R$}
    } \\
    \name{wrapper}\ t = \\
    \quad \mathbf{let}\ \multilineT{
        t_{\text{bulk}} = \BOT[t] \\
        t_{\text{diff}} = \cletin{(x_1, \ldots, x_n) = ((x_1, \del{Input}\ 1), \ldots, (x_n, \del{Input}\ n))}{D[t_{\text{bulk}}]} \\
        \left(\begin{array}{@{}l@{}}\mathbf{do}\ \multilineT{
            \mathbf{let}\ \var{primalBinds} \\
            \var{id}_1 \leftarrow \mgenid; \ldots; \var{id}_m \leftarrow \mgenid \\
            \mathbf{return}\ (t_1, t_2)
        }\end{array}\right) = \name{extractBySimplification}[t_{\text{diff}}] \\
        % don't use \coloneqq here because we want a little more space after the colon
        t_2' = \shareToLet\ (\reversePass\ c\ (t_2[\var{id}_1 \vcentcolon= 1, \ldots, \var{id}_m \vcentcolon= m]))
    } \vspace{0.5em} \\
    \quad \mathbf{in}\ \raisebox{-0.62em}{\(\left(\begin{array}{@{}l@{}}
        \clet \var{primalBinds} \\
        \cin (t_1, t_2')
    \end{array}\right)\)}
\end{array} \]
\caption{\label{fig:restaging-wrapper}
    The wrapper for the full algorithm.
    See \cref{sec:restaging-wrapper} for details on the notation.
}
\end{figure}

All stages of the pipeline come to the fore here:
\begin{itemize}
\item
    We start with a program (`$t$') that returns a single scalar (i.e.\ $t :: \tAST\ (\Array\ \shvec{}\ \R)$) with a number of free variables that consistute the input parameters that we differentiate with respect to.

\item
    We first vectorise $t$ to use bulk operations using the \BOT (\cref{sec:bot}); this results in $t_{\text{bulk}}$.

\item
    We (forward-)differentiate $t_{\text{bulk}}$ using $D[-]$ (\cref{sec:ad-dual-arrays}), and rebind its free variables: the $\cletnosp$ in the assignment to $t_{\text{diff}}$ is an embedded let-binding.
    Where $t_{\text{bulk}}$ (still) had free variables $x_i : \Array\ \var{sh}_i\ \rho_i$, the differentiated term $D[t_{\text{bulk}}]$ has free variables $x_i : (\Array\ \var{sh}_i\ \rho_i, \tDelta\ \var{sh}_i)$.
    The $\cletnosp$, which should be read as a \emph{non-recursive} let-binding, provides the second components of those and ensures that $t_{\text{diff}}$'s free variables have type $\Array\ \var{sh}_i\ \rho_i$ again.

\item
    Then we simplify as in the example (\cref{fig:restaging-example} in \cref{sec:restaging-example}), and as more rigorously justified in \cref{sec:restaging-induction}.
    This produces a term of a specific form (\cref{eq:restaging-form}); we pattern-match out the components: $\var{primalBinds}$, $m$, $\var{id}_i$, $t_1$ and $t_2$.
    (The notation between the large parentheses is a \emph{term}.)

\item
    Finally, in the assignment to $t_2'$, we first substitute $1,\ldots,m$ for $\var{id}_1, \ldots, \var{id}_m$ in $t_2$ (essentially ``running the $\tIdGen$ monad'' in poor man's fashion), and then put the result through the machinery of \cref{sec:restaging-symbolic-eval,sec:restaging-unglobal}.
    The ``$c$'' here is a \emph{term}: a variable reference to the variable `$c$', the initial cotangent.
    This means that $t_2'$ has as free variables:
    \begin{itemize}
    \item The primal inputs $x_i$;
    \item Any names bound in $\var{primalBinds}$;
    \item The initial cotangent: $c : \Array\ \shvec{}\ \R$.
    \end{itemize}

\item
    The result of the wrapper is a term (again written between large parentheses) that first runs the part of the primal that was shared between $t_1$ and $t_2$, and then returns a pair of the primal result ($t_1$) and the gradient that is, by now, a standard term ($t_2'$).
    The result of the wrapper computes not only a gradient, but also the primal result, as can be seen in its type; the free variables of this term are the inputs $x_i$ as well as the initial cotangent $c$.
\end{itemize}

\subsection[Delta extraction works in general]{$\tDelta$ extraction works in general}\label{sec:restaging-induction}

In \cref{sec:restaging-example} we saw that the derivative of the example term, $D[\texSc']$, simplified to a particularly useful form:
\begin{equation*}
    \eqRestagingForm
    \tag{\ref{eq:restaging-form} again}
\end{equation*}
and we claimed that this in fact holds for \emph{all} source terms.
Furthermore, we specified that the ``\textit{symbolic $\tDelta$ term}'' had to conform to some requirements:
\begin{enumerate}
\item The $\tID$ field of a $\del{Share}$ constructor is a variable reference;
\item Similarly for the scaling field of a $\del{Scale}$ constructor;
\item The index (function) fields of $\del{Index}$, $\del{Gather}$ and $\del{Scatter}$ can be arbitrary terms;
\item Otherwise, it consists of only $\tDelta$ constructors, except for a variable reference to a $\tDelta$ component of an input.
\end{enumerate}

We can prove that this form holds for the output of $D[-]$ on our core language by induction.
To do so, we look at every equation in \cref{fig:ad-rules-dual-arrays,fig:ad-rules-dual-arrays-op}.
We will consider only source terms of type $\Array\ \var{sh}\ \R$ for some shape $\var{sh}$; of course, such a program may also contain subterms of type $\Array\ \var{sh}\ \rho$ with $\rho$ unequal to $\R$, but due to the structure of the equations in the transformation, we end up being able to ignore those subterms.

Observe that for constructs of type $\Array\ \var{sh}\ \R$ in \cref{fig:ad-rules-dual-arrays,fig:ad-rules-dual-arrays-op}, the right-hand side of the equation fits the following form:\footnote{For the polymorphic constructs $x$, $\cc{cond}$ and $\cletnosp$, one can artificially distinguish the $\R$ and non-$\R$ cases and subsequently look only at the $\R$ case.}
\begin{equation}
    \begin{array}{@{}l@{}} % pointless array to put baseline in the middle
    \mathbf{do}\ \multilineT{
        (x_1, d_1) \leftarrow D[t_1]; \ldots; (x_n, d_n) \leftarrow D[t_n] \\
        \mathbf{let}\ \textit{most of the primal computation} \\
        (\var{id}_1, \ldots, \var{id}_m) \leftarrow \textit{generate $\tID$s} \\
        \mathbf{return}\ (\multilineT{
            \textit{result of the primal computation} \\
            \leadcomma \textit{symbolic $\tDelta$ term})
        }
    }
    \end{array}
    \label{eq:restaging-induction-form}
\end{equation}
The $t_i$ are direct subterms of the term on the left-hand side of the equation that are also of scalar-array type, and the ``\textit{symbolic $\tDelta$ term}'' adheres to the same constraints as we set for \cref{eq:restaging-form} above, \emph{except} that it may also refer to the $d_i$ with variable references \emph{at most once}.
The requirement that each $d_i$ is referred to at most once prevents us from needing to introduce global sharing here just yet.

Now assuming (by the induction hypothesis) that $D[t_i]$ is already in form \cref{eq:restaging-form}, our only task is to rewrite \cref{eq:restaging-induction-form} to \cref{eq:restaging-form}.
If we do so, then by induction, form \cref{eq:restaging-form} can be derived from the derivative of every source program term by repeated simplification, and we are done.

But indeed, this is not very difficult: subscripting the components of \cref{eq:restaging-form} with $i$ according to which $D[t_i]$ it corresponds to, the nested structure looks as follows:
\begin{equation*}
    \mathbf{do}\ \multilineT{
        (x_1, d_1) \leftarrow \mathbf{do}\ \multilineT{
            \mathbf{let}\ (\textit{most of the primal computation})_1 \\
            (\var{id}_1, \ldots, \var{id}_{m_1}) \leftarrow \textit{generate $\tID$s} \\
            \mathbf{return}\ (\multilineT{
                (\textit{result of the primal computation})_1 \\
                \leadcomma (\textit{symbolic $\tDelta$ term})_1)
            }
        } \\
        \ldots \\
        (x_n, d_n) \leftarrow \mathbf{do}\ \multilineT{
            \mathbf{let}\ (\textit{most of the primal computation})_n \\
            (\var{id}_1, \ldots, \var{id}_{m_n}) \leftarrow \textit{generate $\tID$s} \\
            \mathbf{return}\ (\multilineT{
                (\textit{result of the primal computation})_n \\
                \leadcomma (\textit{symbolic $\tDelta$ term})_n)
            }
        } \\
        \mathbf{let}\ \textit{most of the primal computation} \\
        (\var{id}_1, \ldots, \var{id}_m) \leftarrow \textit{generate $\tID$s} \\
        \mathbf{return}\ (\multilineT{
            \textit{result of the primal computation} \\
            \leadcomma \textit{symbolic $\tDelta$ term})
        }
    }
\end{equation*}
which is easily rewritten to:
\begin{equation*}
    \mathbf{do}\ \multilineT{
        \mathbf{let}\ \multilineT{
            (\textit{most of the primal computation})_1 \\
            \ldots \\
            (\textit{most of the primal computation})_n \\
            x_1 = (\textit{result of the primal computation})_1 \\
            \ldots \\
            x_n = (\textit{result of the primal computation})_n \\
            \textit{most of the primal computation}
        } \\
        (\var{id}^1_1, \ldots, \var{id}^1_{m_1}, \ldots, \var{id}^n_1, \ldots, \var{id}^n_{m_n}, \var{id}_1, \ldots, \var{id}_m) \leftarrow \textit{generate $\tID$s} \\
        \mathbf{return}\ (\multilineT{
            \textit{result of the primal computation} \\
            \leadcomma (\textit{symbolic $\tDelta$ term})[\multilineT{
                d_1 \vcentcolon= (\textit{symbolic $\tDelta$ term})_1, \ldots, \\
                \qquad d_n \vcentcolon= (\textit{symbolic $\tDelta$ term})_n])
            }
        }
    }
\end{equation*}
with some alpha-renaming for the $\var{id}$ variables, and potentially other variables to avoid clashing names.
Note that we are allowed to simply substitute $d_i$ into the symbolic $\tDelta$ term because each occurs at most once, as noted above.

This completes the induction, and confirms that the example in \cref{fig:restaging-example} at the beginning of this section was indeed representative.

\section{Implementation}\label{sec:implementation}

We have an implementation of the algorithms described in this paper in Haskell.\footnote{Available as a package here: \implementationLink}
The implementation contains various features not described in detail in the paper:
\begin{itemize}
\item
    The source language of the library is a shallowly embedded array language in Haskell; the resulting staging features are described in \cref{sec:staging}.

\item
    Dual-numbers AD admits a very elegant implementation in a functional language via type classes~\cite{2009-elliott-typeclass-ad}; dual-numbers reverse AD as presented in~\cite{2022-krawiec-dualrev,2023-smeding-dualrev} preserves this property to an extent, although it is not emphasised in the referenced articles.
    The possibility of a modular implementation based on type classes remains true for the core AD algorithm in this paper (\cref{sec:ad-dual-arrays}); the \BOT naturally cannot, but staging (\cref{sec:staging}) can save us here.

    Our implementation explicitly revives the type class approach and contains one implementation of AD (from which the dual arrays algorithm of \cref{sec:ad-dual-arrays} and the naive algorithm of \cref{sec:ad-naive} are special cases) and one implementation of staging (underlying both the staging of the source program and the symbolic evaluation in \cref{sec:restaging-symbolic-eval}).
    These are all instances of a type class modelling our core array language.
    Details about this system and the sharing-related subtleties that need to be solved are given in \cref{sec:algebra-interp,sec:algebra-ad}.

\item
    As briefly mentioned before, the implementation supports a higher-order `$\name{fold}$' operation with the restriction that the combination function must be closed --- this is not a requirement for `$\cc{build1}$'.
    Furthermore, the implementation has full support for binary tuples (i.e.\ pairs) in the source language, and limited support for \emph{regular} nested arrays: after struct-of-arrays transformation, all component arrays must be fully rectangular.
    That is to say: nested array support cannot be use to get around the prohibition of jagged arrays.
\end{itemize}

The library benchmarks favourably against \texttt{ad}\footnote{\url{https://hackage.haskell.org/package/ad}}, but performance competitive with state-of-the-art machine learning toolkits is future work.

\section{Discussion and Future Work}

Some of the simplicity and easy generalisation of dual-numbers AD is retained in our algorithm: we describe the actual AD component of \cref{sec:ad-dual-arrays} to work only on the output of the \BOT, i.e.\ a quite restricted language of bulk array operations, but the AD code transformation itself ($D[-]$) would not care if we added e.g.\ product types, sum types or function types to the language.
The role of the scalar type $\R$ is now fulfilled by \emph{arrays} of scalars ($\Array\ \var{sh}\ \R$), but the structure of the algorithm is the same.

The other parts of the algorithm (the \BOT (\cref{sec:bot}) and symbolic evaluation (\cref{sec:restaging})) are not so kind; especially proper dynamic control flow (i.e.\ a lazy conditional, loops, perhaps recursion) makes it unclear how to vectorise, and throws a wrench in the rather straightforward $\tDelta$ extraction process of \cref{sec:restaging-induction}.
\citeN{2021-paszke-dex} are able to ``unzip'' the primal computation from the dual computation even in the presence of dynamic control flow, and they do so by evaluating the conditional \emph{twice}: once in the primal, where they store and ``export'' intermediate values from the computation in the branch taken to outside the conditional, and once in the dual (analogous to our $\tDelta$ evaluation), where they make use of the stored conditional boolean as well as the stored intermediate values to run the reverse pass of the correct branch.
Perhaps an extension to $\tDelta$ could allow such tricks, but it is unclear how precisely.

From a performance perspective, the \BOT is rather uncomfortable: \emph{array fusion} (reducing the number of ``loops'' / passes over the data) is typically seen as an optimisation by reducing loop overhead and memory traffic, but the \BOT does the exact opposite thing.\footnote{
    When compiled using a good compiler that uses e.g.\ CPU vector instructions whenever possible, the expression `$\cc{build1}\ (\name{length}\ a)\ (\clam i (\cc{index}\ a\ i)^2 + \cc{index}\ b\ i)$' will be faster than `$a^2 + b$', because the former does only two memory reads and one write per element, whereas the latter does 1 read and write for the squaring and two reads and yet another write for the $(+)$.
    (They perform the same number of arithmetic operations.)
    Hence, conversion from the latter to the former is seen as an optimisation in array languages: this is called \emph{(loop) fusion}; this example in particular is a special case called \emph{map--map fusion}, or more generally \emph{vertical fusion}.
    Our \BOT produces code very much in the latter form, and array AD preserves this property.
}
As we saw in \cref{sec:restaging}, the primal program is mostly preserved in the simplified differentiated version, albeit with many additional intermediate values stored in variables; but in fact, with sufficiently clever fusion algorithms, such additional stores need not be a big impediment to re-fusion of the generated primal code~\cite{2024-diagonal-fusion}.
However, the reverse pass is generally more expensive than the primal pass in reverse AD (simply because it involves more computation~\cite{adbook-2008-griewank-walther}), and there it is less clear to what extent fusion opportunities are preserved through differentiation.
We would like to get a better understanding of the interaction between vectorisation, differentiation, and fusion (and other array-program performance optimisations).

% Finally, our implementation (\cref{sec:implementation}) is quite a bit more general than the algorithm described in this paper (it is more modular by pulling through the classic approach of dual-numbers-by-overloading to the dual arrays case, and attempts to support product types, nested arrays and reductions (folds) throughout the algorithm), but we have not had the opportunity to extract the important lessons sufficiently to produce an interesting article without dependency on implementation details.

In conclusion, the AD algorithm presented in this paper has various desirable properties, but more research is needed to extract all of its value and make it competitive with the state-of-the-art.

\section{Related Work}

\paragraph{Automatic differentiation}
A logical perspective on dual-numbers reverse-mode AD was presented in \cite{2020-brunel-dualrev,2021-mazza-dualrev-pcf}, focusing on correctness.
\citeN{2022-krawiec-dualrev} gave a correctness proof of a lambda calculus version of the algorithm that has the correct complexity; independently, inspired by the work of \citeN{2020-brunel-dualrev}, \citeN{2023-smeding-dualrev} analysed the same algorithm but focused on a complexity proof instead of a correctness proof.
They later extended the (scalar-level) algorithm to fork-join task parallelism~\cite{2024-smeding-dualrev-parallel}, an extension that is mostly orthogonal to the extensions presented in the present paper.
Semantical perspectives are given by \citeN{2020-ad-gluing} (\S 6) and \citeN{2022-ad-logical-relations}.

In \cite{2022-krawiec-dualrev,2023-smeding-dualrev}, as well as in this paper, the actual AD algorithm (excluding preprocessing in the \BOT) takes the form ``forward-differentiate (\cref{sec:ad-dual-arrays} ($D[-]$)), unzip primal from dual (\cref{sec:restaging-induction}), transpose forward derivative to reverse derivative (\cref{sec:restaging-symbolic-eval} ($\eval$))''.
Our transposition step is symbolic (\cref{sec:restaging}), whereas the ones in \cite{2022-krawiec-dualrev,2023-smeding-dualrev} are not.
This three-part structure is also used in Dex~\cite{2021-paszke-dex} and further explained, albeit on a language without arrays or dynamic control flow, by \citeN{2023-yolo}.
In Dex, the \texttt{for} syntax (their equivalent of $\cc{build}$) is differentiated by introducing mutable accumulators in the derivative program; this allows array indexing (our $\cc{index}$) to have an efficient derivative while still computing dense cotangents.
In this paper, we present a purely functional approach that avoids such pervasive mutability.

In a parallel line of work, \citeN{2018-elliott-ad} presented a categorical perspective on functional AD.
The CHAD algorithm of \citeN{2022-chad} translates this from categorical combinators to the lambda calculus and extends it to higher-order programs; a similar extension was independently presented by \citeN{2019-vytiniotis-diff-curry}.
The time complexity of CHAD was analysed and made optimal by \citeN{2024-efficient-chad}, and its theory was extended to (co)inductive data types by \citeN{2023-chad-expressive}.
CHAD (and its related approaches) does not follow the three-part structure of the algorithm in the present paper; it constructs a reverse derivative immediately.
A discussion on the relation between CHAD and dual-numbers reverse AD from the perspective of the transformation on types is given in~\cite[\S 4]{2024-smeding-dualrev-parallel}

\citeN{2022-futhark-ad} also presented a reverse AD algorithm on a (parallel) array language that accepts some recomputation in the reverse pass in return for a code transformation that produces much more structured code.

\paragraph{Vectorisation}
Aggressive vectorisation (unfusion) like our \BOT does, is uncommon in prior work.
However, a similar kind of ``vectorising map'' can be found in JAX~\cite{2018-bradbury-jax} as \texttt{jax.vmap} and as a prototype feature in PyTorch~\cite{2017-paszke-pytorch} as \texttt{torch.vmap}.
The PyTorch implementation shares our restriction that array shapes must be statically known.
These implementations are primarily for expressivity or ease of writing models.
Futhark~\cite[\S 5.1]{2017-henriksen-futhark} employs a vectorisation-like transform for making code more suitable for GPU compilation, by making more parallelism statically visible; we go further in that we do not only vectorise what is easy to vectorise, but we aggressively vectorise the entire expression.

Vectorisation in the presence of unequal array shapes traditionally requires some form of flattening, as in e.g.\ NESL~\cite{1992-nesl} or Data-Parallel Haskell~\cite{2007-dph}.
This flattening tends to give non-negligible runtime overheads; we elected to avoid these overheads by simply not supporting unequal array shapes in vectorised code.

\begin{acks}
We would like to thank Tom Ellis for good discussion and reflection on algorithms and presentation.
TS would like to thank Matthijs Vákár for guidance and advice on presenting the material.
\end{acks}

\newpage

\appendix
\crefalias{section}{appendix}  % needed to make cleveref refer to sections here using "Appendix" instead of "Section"

\section*{Appendix}

\section{Generalisation of the Output Type}\label{app:output-type-generalisation}

The top-level interface to our AD algorithm as finalised in \cref{sec:restaging}, shown in \cref{fig:restaging-wrapper}, has the following type when seen as a code transformation:
\[ \begin{array}{@{}l@{}}
    x_1 : \Array\ \var{sh}_1\ \rho_1, \ldots, x_n : \Array\ \var{sh}_n\ \rho_n \\
    \qquad \vdash t : \Array\ \shvec{}\ \R \\
    \hspace{1.5cm}\leadsto \\
    x_1 : \Array\ \var{sh}_1\ \rho_1, \ldots, x_n : \Array\ \var{sh}_n\ \rho_n, c : \Array\ \shvec{}\ \R \\
    \qquad \vdash \name{wrapper}[t] : (\Array\ \shvec{}\ \R, (\Array\ \var{sh}_1\ \rho_1, \ldots, \Array\ \var{sh}_n\ \rho_n))
\end{array} \]
While this is typically sufficiently general in the input type, some applications may require more complex output types than a single real scalar.

Computing the Jacobian for such a more general function requires multiple passes with reverse AD.
(Forward AD cares little about the size of the output, and instead requires multiple passes if the \emph{input} consists of multiple scalars.)
If nested arrays are supported, one can generalise to an array of output scalars straightforwardly:
\[ \begin{array}{@{}l@{}}
    x_1 : \Array\ \var{sh}_1\ \rho_1, \ldots, x_n : \Array\ \var{sh}_n\ \rho_n, c : \Array\ \var{sh}\ \R \\
    \qquad \vdash \multilineT{
        \clet r = \cc{build}\ (\cc{shape}\ c)\ (\clam{\var{ix}} \cletin{c = \cc{index}\ c\ \var{ix}} \name{wrapper}[\cc{index}\ t\ \var{ix}]) \\
        \cin (\cc{map}\ (\clam x \name{scalar}\ (\name{fst}\ x))\ r, \cc{map}\ (\clam x \name{snd}\ x)\ r)
    } \\
    \qquad : (\Array\ \var{sh}\ \R, \Array\ \var{sh}\ (\Array\ \var{sh}_1\ \rho_1, \ldots, \Array\ \var{sh}_n\ \rho_n))
\end{array} \]
where:
\[ \begin{array}{@{}l@{}}
    \cc{map}\ f\ x = \cc{build}\ (\cc{shape}\ x)\ (\clam{\var{ix}} f\ (\cc{index}\ x\ \var{ix})) \\
    \name{scalar} :: \Array\ \shvec{}\ \tau \to \tau
\end{array} \]
In other words: run the normal algorithm for every scalar in the output, and collect the results.
This approach also generalises conceptually to more complicated output types involving e.g.\ tuples, but the resulting transformation becomes rather notation-heavy and is omitted here.

\section{Staging (Embedding in Haskell)}\label{sec:staging}

\MK{I forgot how good the old staging/algebra sections are. It would be stupid to waste them. While not impressive technically, especially to an AD person, they ace the hand-wavy style of top-down presentation, contrasted with the rigorous bottom-up approach of the new paper text. I think this is the part that has a whiff of a functional pearl, if only it was easier to formulate the problems it solves and if the earlier 10 sections were not needed as a context. E.g., the really neat part is "This is good, because it means that we no longer have the entangling of the primal and dual halves of the AD output that we started out with at the beginning of \cref{sec:restaging}." and also getting symbolic AD for free (and via dual numbers!). I wonder if somebody could get a publishable functional pearl from this by glossing over the 10 sections (it's published elsewhere (on arXiv), so just trust me here).}

One of the goals of our library is to be a \emph{library}: not only is it a hassle for a user to introduce additional code preprocessors or compiler plugins into their workflow in order to use a nice AD algorithm for array programs, it would also be a higher maintenance burden for us: a code preprocessor must diligently stay up to date with the latest changes and additions to the language syntax, and a compiler plugin must stay up to date with ever-changing compiler internals.
A library exposing an embedded language does not have these problems.
An additional advantage of implementing an embedded language, as compared to a separate language with a distinct compilation toolchain, is that it is easier to expose smaller steps of the compilation process to the user, allowing them to essentially customise and assemble their compiler.\TODOfootnote{%
    Compare to \url{https://arxiv.org/pdf/1710.08332}?
}
Despite this flexibility, type-safety of the compiler as well as with the user's other code is maintained by simply using the type checker of the host language --- in our case, Haskell.

Because we want to do a non-local code transformation (the \BOT, which we explain in detail in \cref{sec:bot}) on the program written by the user, we need a syntax tree of the embedded program --- in other words, we need a \emph{deep embedding}.
This automatically means that we get a level of staging in the interface to the library: when the user-written Haskell code runs, it generates code that gets interpreted by our library (\hordead).
In particular, instead of doing computation, every library method (that is part of the array interface) constructs a small bit of an \emph{abstract syntax tree (AST)}.
The staging that we get this way also allows the user to perform various kinds of meta-programming without us having to do anything for it; the downside is that when the user is writing their program, they have to be aware of this staging step, and that they have to make an explicit decision about what code is meta-programming and what code is embedded.
Staging in \hordead, implemented with type classes and understood in terms of universal algebra, is described in depth in \cref{sec:algebra-interp}.
% leaving this here because I'm not sure it's handled:
% \MK{BTW, the sugar we provide (e.g., matmul2 in terms of sum and multiplication, map and zipWith in terms of build1) shows how powerful a hand-crafted initial pass can be, one that just changes the sugar and leaves everything else as is (that pass would be expressed as an instance of a newtype of AST; we already have two, for testing: AstNoVectorize and AstNoSimplify). This can be as well done with a composition of a new plugin with the standard one, so this is not an advantage of a library but of staging.}

\paragraph{Static control flow}
An important example of this meta-programming is \emph{static control flow}: control flow that does not end up in the program to be differentiated, but can only depend on statically-known parameters.
Such not-quite-dynamic control flow is common in probabilistic programming and machine learning.
By partially evaluating it away before interpreting the code as a program to be differentiated, we can express e.g.\ loop unrolling, or assembling model components from various bits and pieces depending on external information.
As an example of loop unrolling, consider the following Haskell code:\footnote{Here we stick to the simplified $\Array\ n\ \tau$ notation from the rest of the paper, but in a real Haskell code using the \hordead library we would write the same as $\name{Concrete}\ (\name{TKR}\ n\ \tau)$}
\begin{equation*}
\begin{array}{@{}l@{}}
\var{sillyAlt}\ ::\ \Int \to \Array\ 1\ \Float \to \Array\ 1\ \Float \\
\var{sillyAlt}\ 0\ \var v = \var v \\
\var{sillyAlt}\ \var n\ \var v\ \begin{array}[t]{@{}l@{\ }l@{}}
        {\mid}\ \name{even}\ \var n & = \name{rmap}\ (\hslam x 2 \cdot x)\ (\var{sillyAlt}\ (\var n - 1)\ \var v) \\
        {\mid}\ \name{otherwise} & = \name{rmap}\ (\hslam x x + 1)\ (\var{sillyAlt}\ (\var n - 1)\ \var v)
    \end{array}
\end{array}
\end{equation*}
Recall that the library-provided functions build up a small bit of AST instead of doing actual computation.
Thus, the function $\var{sillyAlt}$ builds up a computation consisting of $\var n$ layers, each here an elementwise `$\name{rmap}$', \hordead's function to map a function element-wise over an array.\footnote{The `r' is for \emph{ranked}, meaning that array ranks are reflected on the type level; the library also has a version of the array language for shape-typed arrays (with full shapes on the type-level) as well as a mixed variant.}
In contrast to $\Array$ (by which we mean the type of embedded arrays in \hordead), $\Int$ is not an embedded type, thus all that \hordead sees is various invocations of $\name{rmap}$ nested inside each other.

\TODO{Andrew says: $\bullet$ JAX unrolls too much: a 96-deep transformer ends up with a 96x larger gradient program than the 1-deep version, $\bullet$ but given the tape memory usage you have anyway, this is not too bad. $\bullet$ But we should show what \hordead does on a transformer ``+FFN''(?).}
% The "+FFN" is from Andrew's scribbled notes, sent via email on 2024-07-16 at 14:20 CEST.

Because Haskell-native operations (i.e.\ not from the library, such as plain \code{if}-expressions) only type-check on meta-values, not embedded values, and vice-versa embedded operations only type-check on embedded values, whether an operation is staged is fully apparent from the types of the values being operated on.
An `$\Int$' is not staged, but an `$\Array\ 0\ \Int$' --- i.e.\ a zero-dimensional array of integers and hence also representing a single integer --- is staged because `$\Array$' is an embedded type.

\paragraph{Dynamic control flow}
In addition to static control flow, which is evaluated away in staging, \hordead supports a limited form of dynamic control flow: conditionals.
(Loops of statically unknown length are currently unsupported due to the difficulty in handling them in the \BOT (\cref{sec:bot}) and in AD (\cref{sec:ad-dual-arrays}).)
These dynamic conditionals are exposed via an \emph{embedded} if-expression, which takes an embedded boolean expression and embedded alternatives.
\TODO{example?}

\paragraph{Sharing}
A downside of implementing a deeply-embedded language via staging is that it is easy to lose sharing introduced by the user in the form of let-bindings and similar constructs.
For example, if the user writes:
\[
    \textbf{let}\ x = \textit{expensive}\ \textbf{in}\ f\ x + g\ x
\]
then tracing and staging this program as described above will reference the AST produced by the expression `$\textit{expensive}$' at least twice.\footnote{Assuming that $f$ and $g$ actually use their argument; more than twice if they use their argument multiple times.}
This is not what the user intended by writing the let-binding.
Approaches exist, in Haskell, to automatically detect and recover sharing of values between multiple positions in a data structure~\cite{2009-gill-sharing,2013-mcdonell-sharing}, but these are non-trivial to implement.\footnote{Personal experience of one of the paper authors with the Accelerate compiler is that it also becomes fragile when processing many source files in parallel.}
(Furthermore, while a standard common-subexpression elimination (CSE) pass in a compiler might recover the sharing as well, such as pass would be very slow due to having to analyse the full exponentially-sized unfolded AST.)
In \hordead, at least for the time being, we instead choose the simpler alternative of mirroring the solution for conditionals described above.
We offer a combinator `$\name{tlet}$' using which the example can be expressed as follows:
\[
    \name{tlet}\ \textit{expensive}\ \$\,\backslash x \to f\ x + g\ x
\]
Because this combinator is implemented by \hordead, its (explicit) sharing can be retained throughout the compilation pipeline.
See \cref{sec:algebra-interp-sharing} for a more detailed discussion of sharing in the context of the type-class system of \hordead.

\TODO{Do we need to say something about how the tracing works here?}

\paragraph{Limitations of the representation}
By tracing through the user-program at runtime, including lambda abstractions passed as arguments to built-in operations like `$\name{rmap}$', \hordead collects an AST of the embedded program; it is this program that will be transformed, differentiated and executed.
\Cref{fig:core-grammar} gives the grammar for the language that this embedded program is expressed in.

With the grammar as given in \cref{fig:core-grammar}, it is impossible to represent shared computation between the components of the index returned by the lambda in a \cc{gather} or \cc{scatter} operation.
This is because the `$ix$' production in the grammar does not admit let-bindings; it is simply a list of terms.

This lack of generality makes implementation easier, but is not fundamental: index-typed values could be first-class in the type system of the core language without causing significant trouble in later stages of \hordead.\footnote{This feature is tracked for the implementation at \url{https://github.com/Mikolaj/horde-ad/issues/119}.}

%MK: if, after re-reading, we see there is a lack of motivation for the algebra appendixes, maybe copy here some of \section{Old restaging text}, which is probably not totally subsumed by anything else in the paper

\section{Algebra Interpretation}\label{sec:algebra-interp}

The grammar of our core language in \cref{fig:core-grammar} not only informs the structure of an AST to \emph{represent} terms of this language at runtime, it also specifies the language itself, the one that the library user writes programs in: independent from any representation, our ``core language'' consists of a number of syntactic constructs (the ones in \cref{fig:core-grammar}) together with a semantics for those constructs (namely, their standard interpretation as array operations).
However, having just one semantics is sometimes quite limiting: one might want to evaluate programs written in the same syntax using a \emph{different} semantics, for example to compute certain program analyses or to perform partial evaluation.
Furthermore, as we will see in \cref{sec:algebra-ad}, the AD that we did in \cref{sec:ad-naive,sec:ad-dual-arrays} can also be seen as an alternative semantics for our syntax, as can indeed ASTs themselves: the latter is how we will fix the repeated re-differentiation problem identified at the beginning of \cref{sec:restaging}.

Mathematically, the language of array operations set out in \cref{fig:core-grammar} induces a family of \emph{algebras}:\footnote{Correctly: a category of $F$-algebras, where the functor $F$ is induced by the syntax in \cref{fig:core-grammar}.} each such algebra is a semantics of the array language on some \emph{carrier} data type.
We can encode this family of algebras in Haskell using a type class:\footnote{The careful reader may note that it is unclear what the expected sharing behaviour of the methods of this type class is. We will clarify the (somewhat subtle) situation after having introduced the basic instances.}
\[ \begin{array}{@{}ll@{}}
    \begin{array}{@{}l@{}}
        \mathbf{class}\ \cTensor\ t\ \mathbf{where} \quad\ccomment{not the final version! See below.} \\
        \quad \begin{array}[t]{@{}l@{\;}c@{\;}l@{}}
            \name{tconcrete} &::& \Array\ k\ \tau \to t\ k\ \tau \\
            \name{tlet} &::& t\ k\ \sigma \to (t\ k\ \sigma \to t\ k\ \tau) \to t\ k\ \tau \\
            \name{tindex} &::& t\ k\ \tau \to \Ix\ k \to t\ k\ \tau \\
            \name{tgather} &::& \multilineT{
                \Sh\ (m + k) \to t\ (n + k)\ \tau \to \\
                \qquad (\Ix\ m \to \Ix\ n) \to t\ (m + k)\ \tau
            } \\
            \mathrlap{\ccomment{... other methods ...}}
        \end{array}
    \end{array}
    &
    \begin{array}{@{}l@{}}
        \mathbf{data}\ \Sh\ k\ \mathbf{where} \\
        \quad \begin{array}[t]{@{}l@{\;}l@{\;}l@{}}
            \ShZ &::& \Sh\ 0 \\
            (\ShS) &::& \Int \to \Sh\ k \to \Sh\ (1 + k)
        \end{array}
    \end{array}
\end{array} \]
The data type $\Sh$, encoding shapes, is analogous to the $\Ix$ data type for indices defined in \cref{sec:ad-dual-arrays}.
Regarding the type of `$\name{tlet}$': in order to be independent of the particular representation of variables in the various semantics for our language, we use higher-order abstract syntax (HOAS) style to encode let-bindings: `$\cletin{x = s}{t}$' corresponds to `$\name{tlet}\ s\ (\lambda x.\ t)$'.
Consequently, there is no ``$\name{tvar}$'' method in $\cTensor$.

The idea is that each algebra in the family is an \emph{instance} of this $\cTensor$ type class for the appropriate carrier data type.
The implementation of the methods for the instance shows in what way the carrier indeed forms an algebra for our language.
In other words: the type class is the interface that every proposed carrier must implement in order to be a semantics for our language.
For example, there would be an instance of $\cTensor$ for $\Array$, yielding standard evaluation semantics (a little functional array language); for more details, see below.

The type class may look fine for this purpose at first glance, but to make the plan actually work out, we have to change two things:
\begin{enumerate}
\item
    The data type $\Ix$ was originally defined as follows in \cref{sec:ad-dual-arrays}:
    \[ \begin{array}{@{}l@{}}
        \mathbf{data}\ \Ix\ k\ \mathbf{where} \\
        \quad \begin{array}[t]{@{}l@{\ }l@{}}
            \IZ &:: \Ix\ 0 \\
            (\IS) &:: \Int \to \Ix\ k \to \Ix\ (k + 1)
        \end{array}
    \end{array} \]
    but for use in the type class, we must generalise this.
    The reason is that different semantics (such as \emph{symbolic} array computations, i.e.\ ASTs, as we will see below) have different ideas about what the `$\Int$' inside $\Ix$ should be.
    (Indeed, in a symbolic array computation, indices are also symbolic.)
    As a solution, we let the index components be rank-zero tensors:
    \[ \begin{array}{@{}l@{}}
        \mathbf{data}\ \Ix\ \textcolor{red}{t}\ k\ \mathbf{where} \\
        \quad \begin{array}[t]{@{}l@{\ }l@{}}
            \IZ &:: \Ix\ \textcolor{red}{t}\ 0 \\
            (\IS) &:: \textcolor{red}{t\ 0}\ \Int \to \Ix\ \textcolor{red}{t}\ k \to \Ix\ \textcolor{red}{t}\ (1 + k)
        \end{array}
    \end{array} \]
    Thus if $t = \Array$, this definition is morally the same as the original, as an $\Array\ 0\ \Int$ is equivalent to a single $\Int$.

\item
    It turns out that sharing using `$\name{tlet}$' is insufficient if we want to write dual-numbers reverse AD as an instance of $\cTensor$, i.e.\ as a semantics of our array language.
    (And we do, because this is what will enable us to disentangle primal and dual ($\tDelta$) in a compositional way --- the reason why we started this subsection in the first place.)
    We need to add a second form of sharing that mirrors the $\del{Share}$ constructor of $\tDelta$:
    \[ \begin{array}{@{}l@{}}
        \textcolor{gray}{\mathbf{class}\ \cTensor\ t\ \mathbf{where}} \\
        \quad \begin{array}[t]{@{}l@{\;}c@{\;}l@{}}
            \name{tshare} &::& t\ k\ \tau \to t\ k\ \tau \\
            \mathrlap{\ccomment{other methods ...}}
        \end{array}
    \end{array} \]
    The intended meaning of `$\name{tshare}$' is to produce a tensor that can be duplicated without causing any recomputation.
    The required ID is generated inside `$\name{tshare}$'.
    We will look at methods of sharing again after we have defined some basic instances of $\cTensor$.
\end{enumerate}

After these two modifications, the class looks as follows:
\[ \begin{array}{@{}l@{}}
    \mathbf{class}\ \cTensor\ t\ \mathbf{where} \\
    \quad \begin{array}[t]{@{}l@{\;}c@{\;}l@{}}
        \name{tconcrete} &::& \Array\ k\ \tau \to t\ k\ \tau \\
        \name{tlet} &::& t\ k\ \sigma \to (t\ k\ \sigma \to t\ k\ \tau) \to t\ k\ \tau \\
        \name{tshare} &::& t\ k\ \tau \to t\ k\ \tau \\
        \name{tindex} &::& t\ k\ \tau \to \Ix\ t\ k \to t\ k\ \tau \\
        \name{tgather} &::& \Sh\ (m + k) \to t\ (n + k)\ \tau \to (\Ix\ t\ m \to \Ix\ t\ n) \to t\ (m + k)\ \tau \\
        \mathrlap{\ccomment{... other methods ...}}
    \end{array}
\end{array} \]
This class formulation does work.

\subsection{Basic instances}
A natural instance of the $\cTensor$ type class is the standard evaluation semantics of our language.
Its carrier is the $\Array$ data type (in \hordead named \name{Concrete} to underscore these are normal physical arrays, not containing any symbolic components), and the interpretations of the language constructs are the usual call-by-value ones on concrete arrays:
\[ \begin{array}{@{}l@{}}
    \mathbf{instance}\ \cTensor\ \Array\ \mathbf{where} \\
    \quad \begin{array}[t]{@{}l@{\;}c@{\;}l@{}}
        \name{tconcrete}\ a &=& a \\
        \name{tlet}\ a\ f &=& f\ a \qquad\ccomment{the metalanguage \rlap{handles sharing.}} \\
        \name{tshare}\ a &=& \mathrlap{a}\hphantom{f\ a} \qquad\ccomment{ditto.} \\
        \name{tindex}\ a\ i &=& \name{index}\ a\ i \\
        \name{tgather}\ \var{sh}\ a\ f &=& \name{gather}\ \var{sh}\ a\ f \\
        \ccomment{etc.}
    \end{array}
\end{array} \]
assuming suitable methods `$\name{index}$', `$\name{gather}$', etc.\ on arrays.

Now, assume we have an AST representation for our language, for example using the following generalised algebraic data type (GADT):
\[ \begin{array}{@{}l@{}}
    \mathbf{data}\ \tAST\ k\ \tau\ \mathbf{where} \\
    \quad \begin{array}[t]{@{}l@{\;}c@{\;}l@{}}
        \texttt{Concrete} &::& \Array\ k\ \tau \to \tAST\ k\ \tau \\
        \texttt{Var} &::& \tVarName\ k\ \tau \to \tAST\ k\ \tau \\
        \texttt{Let} &::& \tVarName\ k\ \sigma \to \tAST\ k\ \sigma \to \tAST\ k\ \tau \to \tAST\ k\ \tau \\
        \texttt{Share} &::& \tID\ k\ \tau \to \tAST\ k\ \tau \to \tAST\ k\ \tau \\
        \texttt{Index} &::& \tAST\ k\ \tau \to \Ix\ k \to \tAST\ k\ \tau \\
        \texttt{Gather} &::& \Sh\ (m + k) \to \tAST\ (n + k)\ \tau \to (\Ix\ \tAST\ m \to \Ix\ \tAST\ n) \to \tAST\ (m + k)\ \tau \\
        \ccomment{etc.}
    \end{array}
\end{array} \]
For ASTs, we choose normal abstract syntax, as opposed to HOAS, to simplify handling in the below.\footnote{An alternative approach is to use PHOAS~\cite{2009-phoas}. \TODO{would that be nicer?}}
However, we do need a \texttt{Share} constructor to be able to implement the `$\name{tshare}$' method of $\cTensor$.
In contrast to our earlier presentation of $\tDelta$, where $\tDVarName$ and $\tID$ are indexed by just the rank of the array they represent (because, for simplicity, the element type is always $\R$), $\tVarName$ and $\tID$ here are indexed by both the rank and the element type.
\TODO{separate notation of Delta ID and AST ID}

This AST data type can also be the carrier of a semantics:
\[ \begin{array}{@{}l@{}}
    \mathbf{instance}\ \cTensor\ \tAST\ \mathbf{where} \\
    \quad \begin{array}[t]{@{}l@{\;}c@{\;}l@{}}
        \name{tconcrete}\ a &=& \texttt{Concrete}\ a \\
        \name{tlet}\ a\ f &=& \mathbf{let}\ v = \tVarName\ \genid\ \mathbf{in}\ \texttt{Let}\ v\ a\ (f\ (\texttt{Var}\ v)) \\
        \name{tshare}\ a &=& \texttt{Share}\ (\tID\ \genid)\ a \\
        \name{tindex}\ a\ i &=& \texttt{Index}\ a\ i \\
        \name{tgather}\ \var{sh}\ a\ f &=& \texttt{Gather}\ \var{sh}\ a\ f \\
        \ccomment{etc.}
    \end{array}
\end{array} \]
In this semantics, the ``meaning'' of a program is not its evaluation, but instead it is simply its AST.
A program \emph{evaluates to} its AST here.
This instance can be used to implement \emph{staging} in embedded languages: a shallowly embedded language (equivalently, a language in tagless-final style~\cite{2009-tagless-final}) is an embedded language where the programmer interface is essentially a type class like our $\cTensor$.
If such an embedded language implementation wants to do e.g.\ a whole-program transformation on the embedded program, it can use an instance of the type class for an AST data type to get an inspectable representation of the program.
This approach is called \emph{staging} because there are now two stages of evaluation: the user program evaluates to an AST, and this AST is later (presumably) evaluated itself to some final result.

In \cref{sec:staging}, we discuss the main design decisions and limitations of the staging implementation in \hordead.
The general framework of this implementations is as described here --- via Haskell type-classes.
In the next subsection we explain the interplay of sharing, a particularly subtle point of staging, with the algebra interpretation (type-classes) approach.

\subsection{Sharing}\label{sec:algebra-interp-sharing}
For technical reasons as well as to guard the efficiency of the full algorithm that this paper describes, the sharing-related semantics of the $\cTensor$ methods is a bit subtle.
Let us make clear what is going on.
\TODO{Improve the text, even though the situation is complicated and making it simpler is difficult.}
\begin{itemize}
\item
    `$\name{tlet}$' models a let-binding with lexical scoping.
    That is to say: the expression `$\name{tlet}\ a\ f$' is \emph{semantically} equivalent to $f\ a$, but `$\name{tlet}$' ensures that the resulting tensor (i.e.\ the result of `$\name{tlet}\ a\ f$') does not involve, or represent, multiple redundant computations of $a$.
    In the case of the instance for $\Array$, this is moot: assuming that the metalanguage (Haskell for this paper) has reference-passing semantics, `$\name{tlet}\ a\ f = f\ a$' fulfills this goal perfectly.

    However, for the instance for $\tAST$, this is quite important to get right.
    Consider the following (contrived) function written against the $\cTensor$ interface, taking an argument array $a$ of length $n$:
    \begin{center} \( \displaystyle \multilineM{
        \var{foo1} :: \cTensor\ t \Rightarrow t\ 1\ \R \to t\ 0\ \R \\
        \var{foo1}\ a = \name{tlet}\ \multilineT{
            (\name{tgather}\ (n \ShS \ShZ)\ a\ (\lambda(i \IS \IZ).\ (n - 1 - i) \IS \IZ)) \\
            (\lambda a'.\ \name{tindex}\ a'\ (0 \IS \IZ) +_\R \name{tindex}\ a'\ (1 \IS \IZ) +_\R \name{tindex}\ a'\ (2 \IS \IZ))
        }
    } \) \end{center}
    (`$+_\R$' is one of the methods of $\cTensor$ that we elide in the code snippets in this section to save space and to prevent tedious repetition. Its type is $(+_\R) :: t\ k\ \R \to t\ k\ \R \to t\ k\ \R$ and it is one of the binary $\mathit{op}$s in the grammar in \cref{fig:core-grammar}.)
    In $\var{foo1}$, the `$\name{tgather}$' computes the reverse of $a$, after which we take the sum of the first three elements of that computed reverse.
    If we instantiate $t$ to $\tAST$, then it is quite important that `$\name{tlet}\ a\ f$' is not simply `$f\ a$', but actually creates a \texttt{Let} node!
    Otherwise the produced AST will, when evaluated, recompute the reverse of the input array three times.

\item
    `$\name{tshare}$' models \emph{global sharing}.
    We have seen this global sharing before in the $\del{Share}$ constructor of $\tDelta$, where a $\del{Share}$-wrapped term could be used anywhere in the $\tDelta$ term and still be considered shared.
    Similarly, the intended meaning of `$\name{tshare}$' is that if the tensor that it returns is used in multiple places (probably as arguments to other $\cTensor$ methods of the same instance), this does not lead to duplicate computation of the tensor wrapped by `$\name{tshare}$'.
    The same example could be written as follows using $\name{tshare}$ instead:
    \begin{center} \( \displaystyle \multilineM{
        \var{foo2} :: \cTensor\ t \Rightarrow t\ 1\ \R \to t\ 0\ \R \\
        \var{foo2}\ a = \multilineT{
            \mathbf{let}\ a' = \name{tshare}\ (\name{tgather}\ (n \ShS \ShZ)\ a\ (\lambda(i \IS \IZ).\ (n - 1 - i) \IS \IZ)) \\
            \mathbf{in}\ \name{tindex}\ a'\ (0 \IS \IZ) +_\R \name{tindex}\ a'\ (1 \IS \IZ) +_\R \name{tindex}\ a'\ (2 \IS \IZ)
        }
    } \) \end{center}
    When instantiated to the $\tAST$ instance, the subterm corresponding to the meta-variable $a'$ will indeed occur three times in the resulting AST, but because all three occurrences are wrapped in a $\texttt{Share}$ constructor containing the same ID, an evaluator will memoise the computed value (the reversed array) and not recompute it the second and third time it encounters this same $\texttt{Share}$ node.
    This is analogous to how $\del{Share}$ nodes in a $\tDelta$ term were handled in $\reversePass$, except without the requirement that these globally shared subterms are computed in any particular order.

\item
    The other methods of the type class make \emph{no guarantees}, in general, about the sharing that they preserve.
    That is to say: the `$\name{tlet}$' in $\var{foo1}$ and the `$\name{tshare}$' in $\var{foo2}$ are necessary, because $(+_\R)$ and `$\name{tindex}$' may cheerfully assume their arguments are used only once.
    But not all instances will duplicate work: of course, if one somehow knows that $\var{foo1}$ is only going to be instantiated to the $\Array$ instance of $\cTensor$, a meta-language (Haskell) $\mathbf{let}$ expression suffices --- indeed, `$\name{tlet}$' and `$\name{tshare}$' for $\Array$ do not do anything more than that.
\end{itemize}
Later, when we define more instances of $\cTensor$, we will refer back to this and explain how those instances are consistent with these rules.

The $\interpret$ function described in \cref{sec:algebra-interp-interpretation} (that interprets an AST into some other instance of the $\cTensor$ class) can not easily support let-style sharing and global sharing ($\del{Share}$-style sharing, denoted in \hordead by `$\name{tshare}$') in the same term.
The reason is the fact that the $\cTensor$ methods (see below) are written in a higher-order fashion; not only `$\name{tlet}$' is, but also things like `$\name{tgather}$'.
A proper interpreter that handles global sharing correctly has to thread a memoisation map through the program, containing the evaluated result for every $\tID$ it encountered inside a share-node.
But given the type signature of `$\name{tlet}$', the interpreter has no way to export the $\tID$s it memoised inside the \emph{body} of the let, to outside that let!

Potential avenues for fixing this:
\begin{enumerate}
    \item \label{item:no-tshare-in-tlet}
        Observe that in practice, all our programs in the core language, be they expressed as $\cTensor$ combinators or as an AST, either have lets in them or global sharing --- never both at the same time.
        Crucially, this enables the assumption in $\interpret$ that `$\name{tshare}$' never occurs inside the body of a `let', meaning that no knowledge has to be exported out of the HOAS body of `$\name{tlet}$' at all.
        With this assumption, it is possible to write $\interpret$, and in practice that will work in our algorithm.

    \item
        Modify the type of `$\name{tlet}$' so that it explicitly allows a value to be returned from the body in the meta-language:
        \[
            \name{tlet'} :: t\ k\ \sigma \to (t\ k\ \sigma \to (a, t\ k\ \tau)) \to (a, t\ k\ \tau)
        \]
        This would allow $\interpret$ to get the output memoisation map from the body of the created let-binding.

        However, this does not actually work!
        The reason is that `$\name{tlet}$' is not the only higher-order combiantor in $\cTensor$; the $(\Ix\ t\ m \to \Ix\ t\ n)$ argument to `$\name{tgather}$' would need to get the same treatment.
        But this does not work: that argument might be called only once (for the $\tAST$ instance of $\cTensor$) or many times (for the $\Array$ instance of $\cTensor$), so neither returning an extra $a$ from the hypothetical $\name{tgather'}$, nor an extra $\Array\ m\ a$, would always work.
\end{enumerate}

The \hordead library implementation implements \ref{item:no-tshare-in-tlet} and the invariant of no `$\name{tshare}$' inside `$\name{tlet}$' and no `$\name{tlet}$' inside `$\name{tshare}$' is strictly enforced by the typing of the grammar.
Moreover, the typing ensures that $\interpret$ is only ever called on terms with no global sharing in them at all.
%TS: Sure, that works for tlet, but what about tgather?
%MK: indeed, the typing of tgather permits tshare inside, so we have to recourse to $\interpret$ being called on AST without global sharing at all, which type-checks OK in horde-ad, so we do have this guarantee from the way we use $\interpret$ (which is also why we sometimes, in exotic pieplines, have to call the global-to-local sharing transform before calling $\interpret$).

\subsection{Interpretation}\label{sec:algebra-interp-interpretation}
One of the reasons for having the $\cTensor$ type class is that one can write array computations polymorphic in the specific tensor type, and then later instantiate them to multiple backends, or assign them different semantics by instantiating them to non-standard instances.
However, we have already run the user program through the \BOT in \cref{sec:bot}, so we have an AST now, not a polymorphic function.
\newcommand\footnoteLetShareInterp{\footnote{
    It is, in principle, possible to interpret terms containing both local (using \texttt{Let}) and global (\texttt{Share}) sharing, but the interaction becomes very subtle and not all terms are valid: a \texttt{Share} node under a \texttt{Let} binding may not reference the let-bound variable if the \texttt{Share} node also occurs outside that \texttt{Let}.
    (Such terms cannot be constructed via the $\cTensor$ methods without a knot-tying hack.)
    Furthermore, our typing of `$\name{tlet}$' actually makes it impossible for $\interpret$ to correctly thread through the memoisation map for handling \texttt{Share} nodes.
    Fortunately, because we will never need to apply $\interpret$ to terms with global sharing in the full AD algorithm, we elect to leave $\interpret$ of \texttt{Share} unimplemented in \cref{fig:algebra-interpreter}.
}}%
\newcommand\footnoteCommuteInitial{\footnote{
    Assuming we want our interpretation to commute with primitives in our language, which seems quite reasonable.
    This requirement comes from universal algebra: the $\tAST$ algebra is a term algebra and thereby an \emph{initial} algebra in our family.
}}%
Fortunately, this is no obstacle, because the AST instance of $\cTensor$ is somewhat special: its values (ASTs)\footnoteLetShareInterp{} can be uniquely\footnoteCommuteInitial{} interpreted into any other semantics of the language.
The type of this interpretation function is as follows:
\[
    \interpret :: \cTensor\ t \Rightarrow \tDMaptwo\ \tVarName\ t \to \tAST\ k\ \tau \to t\ k\ \tau \\
\]
The first parameter (the $\tDMaptwo$) is the environment giving the interpretation for any free variables ($\texttt{Var}$) that occur in input term.
We define $\tDMaptwo$ in terms of $\tDMap$ by uncurrying $f$ and $g$:
\[
    \tDMaptwo\ f\ g = \tDMap\ (\lambda (k, \tau).\ f\ k\ \tau)\ (\lambda (k, \tau).\ g\ k\ \tau) \quad
\]
and then defining methods on $\tDMaptwo$ analogous to the ones on $\tDMap$.
For example, it is instructive to look at the types of the two versions of $\name{lookup}$:
\[ \begin{array}{@{}l@{\;}c@{\;}l@{}}
    \dmLookup &::& \texttt{GCompare}\ f \Rightarrow f\ a \to \tDMap\ f\ g \to \text{Maybe}\ (g\ a) \quad\ccomment{from \cref{fig:dmap-types}} \\
    \dmtwoLookup &::& \texttt{GCompare}_2\ f \Rightarrow f\ a\ b \to \tDMaptwo\ f\ g \to \text{Maybe}\ (g\ a\ b) \\
\end{array} \]
That is to say: $\tDMaptwo$ is to data types with 2 type parameters what $\tDMap$ is to data types with 1 type parameter.

As an example usage of $\interpret$, specialising the type variable $t$ to `$\Array$' and passing an empty initial enrivonment ($\dmtwoEmpty$), one obtains:
\[
    \interpret' :: \tAST\ k\ \tau \to \Array\ k\ \tau
\]
which evaluates a closed term to its value as an array.

Despite implementing a fairly fundamental function, $\interpret$ is somewhat cumbersome to write.
The definition we use is given in \cref{fig:algebra-interpreter}; let us walk through its major components.

\begin{figure}
\begin{align*}
&\begin{array}{@{}l@{}}
    \interpret :: \cTensor\ t \Rightarrow \tDMaptwo\ \tVarName\ t \to \tAST\ k\ \tau \to t\ k\ \tau \\
    \begin{array}{@{}l@{\;}c@{\;}l@{}}
    \interpret\ \var{env}\ (\texttt{Concrete}\ t)
        &=& \name{tconcrete}\ t \\
    \interpret\ \var{env}\ (\texttt{Var}\ v)
        &=& \mathbf{case}\ \dmtwoLookup\ v\ \var{env}\ \mathbf{of}\ \multilineT{
            \text{Just}\ x \to x \\
            \text{Nothing} \to \mathbf{error}\ \texttt{"Free variable"}
        } \\
    \interpret\ \var{env}\ (\texttt{Let}\ v\ s\ t)
        &=& \name{tlet}\ \multilineT{
            (\interpret\ \var{env}\ s) \\
            (\lambda x.\ \interpret\ (\dmtwoInsert\ v\ x\ \var{env})\ t)
        } \\
    \interpret\ \var{env}\ (\texttt{Share}\ \var{id}\ t)
        &=& \mathbf{error}\ \texttt{"Unimplemented"} \\
    \interpret\ \var{env}\ (\texttt{Index}\ t\ i)
        &=& \name{tindex}\ (\interpret\ \var{env}\ a)\ (\interpretIx\ \var{env}\ i) \\
    \interpret\ \var{env}\ (\texttt{Gather}\ \var{sh}\ a\ f)
        &=& \name{tgather}\ \var{sh}\ (\interpret\ \var{env}\ a)\ (\interpretIxFun\ \var{env}\ f) \\
    \ccomment{etc.}
    \end{array}
\end{array} \\
&\begin{array}{@{}l@{}}
    \interpretIx :: \cTensor\ t \Rightarrow \tDMaptwo\ \tVarName\ t \to \Ix\ \tAST\ k \to \Ix\ t\ k \\
    \begin{array}{@{}l@{\;}c@{\;}l@{}}
    \interpretIx\ \var{env}\ \IZ &=& \IZ \\
    \interpretIx\ \var{env}\ (i \IS \var{ix}) &=& \interpret\ \var{env}\ i \IS \interpretIx\ \var{env}\ \var{ix}
    \end{array}
\end{array} \\
&\begin{array}{@{}l@{}}
    \interpretIxFun :: \cTensor\ t \Rightarrow \tDMaptwo\ \tVarName\ t \to (\Ix\ \tAST\ m \to \Ix\ \tAST\ n) \to (\Ix\ t\ m \to \Ix\ t\ n) \\
    \interpretIxFun\ \var{env}\ f\ \var{ix}
        = \mathbf{let}\ (\var{env}', \var{ix}') = \mathit{extend}\ \var{env}\ \var{ix}
        \ \mathbf{in}\ \interpretIx\ \var{env}'\ (f\ \var{ix}') \\
    \qquad \mathbf{where}\ \multilineT{
        \mathit{extend} :: \tDMaptwo\ \tVarName\ t \to \Ix\ t\ k \to (\tDMaptwo\ \tVarName\ t, \Ix\ \tAST\ k) \\
        \begin{array}{@{}l@{\;}c@{\;}l@{}}
        \mathit{extend}\ \var{env}\ \IZ &=& \var{env} \\
        \mathit{extend}\ \var{env}\ (i \IS \var{ix}) &=& \multilineT{
            \mathbf{let}\ \multilineT{
                (\var{env}', \var{ix}') = \mathit{extend}\ \var{env}\ \var{ix} \\
                v = \tVarName\ \genid
            } \\
            \mathbf{in}\ (\dmtwoInsert\ v\ i\ \var{env}', \texttt{Var}\ v \IS \var{ix}')
        }
        \end{array}
    }
\end{array}
\end{align*}
\caption{\label{fig:algebra-interpreter}
    Interpreter from $\tAST$ to an arbitrary instance of $\cTensor$.
    In other words: this implements the unique homomorphism (function that commutes with all the primitives in our language) from the initial algebra to another algebra on the same language.
    While the function is unique, its \emph{implementation} is not, but the one given here has the advantage of being parametrically polymorphic over all $\cTensor$ instances.
}
\end{figure}

The clauses of $\interpret$ itself map each of the $\tAST$ constructors (i.e.\ primitives in our language) to the corresponding $\cTensor$ method on $t$; sharing operations are mapped to sharing operations, and the environment is extended as needed for let-bindings.
Aside from subterms, ASTs also contain more complicated structures such as index values (in e.g.\ \texttt{Index}) and index mapping functions (in e.g.\ \texttt{Gather}).
An index (interpreted by $\interpretIx$) is just a list of terms.
To interpret an index mapping function ($\interpretIxFun$), we implement this diagram:
\begin{center}
\begin{tikzpicture}[xscale=1.4, yscale=0.7]
\node (ixt1) at (0, 0) {$\Ix\ t\ m$};
\node (extend) at (0, 1) {$\var{extend}$};
\node (ixa1) at (0, 2) {$\Ix\ \tAST\ m$};
\node (f) at (1, 2) {$f$};
\node (ixa2) at (2, 2) {$\Ix\ \tAST\ n$};
\node (inter) at (2, 1) {$\interpretIx$};
\node (ixt2) at (2, 0) {$\Ix\ t\ n$};
\draw[->] (ixt1) -- (extend) -- (ixa1);
\draw[->] (ixa1) -- (f) -- (ixa2);
\draw[->] (ixa2) -- (inter) -- (ixt2);
\draw[->] (extend) to[bend left=25] node[below] {$\var{env}$} (inter);
\end{tikzpicture}
\end{center}
That is to say: we generate variable names for the components of the input index, pass the resulting symbolic index through the symbolic mapping function, and then compute the output index by evaluating the output symbolic index with the generated variable names mapped to the components of the input index.

\section{AD as an Algebra Interpretation}\label{sec:algebra-ad}

\TODO{Need to cite prior work on functorial AD here}

It turns out that because of the nice, compositional nature of the AD code transformation that produces the forward pass (\cref{fig:ad-rules-dual-arrays,fig:naive-dn-ad-terms}), it can be written as an algebra interpretation on our core language.
The carrier data type here is a dual number (on arrays, of course, as described in \cref{sec:ad-dual-arrays}): a pair of an array and a $\tDelta$ term.
\[ \begin{array}{@{}l@{}}
    \mathbf{data}\ \tADVal\ k\ \tau = \tADVal\ (\Array\ k\ \tau)\ (\tDelta\ k) \\
    \mathbf{instance}\ \cTensor\ \tADVal\ \mathbf{where} \\
    \quad \begin{array}{@{}l@{\;}c@{\;}l@{}}
        \name{tconcrete}\ a &=& \tADVal\ a\ \del{Zero} \\
        \name{tlet}\ (\tADVal\ p\ d)\ f &=& f\ (\tADVal\ p\ d) \\
        \name{tshare}\ (\tADVal\ p\ d) &=& \tADVal\ p\ d \\
        \name{tindex}\ (\tADVal\ p\ d)\ \var{ix} &=& \tADVal\ (\name{tindex}\ p\ \var{ix})\ (\del{Share}\ \genid\ (\del{Index}\ d\ \var{ix})) \\
        \name{tgather}\ \var{sh}\ (\tADVal\ p\ d)\ f &=& \tADVal\ (\name{tgather}\ \var{sh}\ p\ f)\ (\del{Share}\ \genid\ (\del{Gather}\ d\ f)) \\
        \ccomment{etc.}
    \end{array}
\end{array} \]
The implementations of the $\cTensor$ methods for $\tADVal$ are adapted directly from the equations of the $D$ code transformation from \cref{sec:ad-dual-arrays} (\cref{fig:ad-rules-dual-arrays}), with the (meta-)pairs used there replaced by uses of the $\tADVal$ constructor.
The implementations of `$\name{tlet}$' and `$\name{tshare}$', however, require some justification.
As before in the $\Array$ instance, the sharing operations ($\name{tlet}$ and $\name{tshare}$) have no effect on the primal (left) half because Haskell does not recompute values when you use them multiple times.
For the dual (right) half, however, we can also ignore the sharing operations: because the $\tDelta$ terms created by the other methods in the instance are always wrapped inside a $\del{Share}$ node and thus freely duplicable in the meta-language, $d$ in the argument to `$\name{tlet}$' or `$\name{tshare}$' will be freely duplicable.
Hence, wrapping it in another $\del{Share}$ node does not achieve anything, and we choose to omit the redundant wrapper for efficiency.

Having the carrier be a \emph{pair} of two tensor-like things (an array and a $\tDelta$ term, in this case) means that any computation that is interpreted into this algebra, gets repeated twice: once on arrays and once on $\tDelta$ terms.
Furthermore, there is no dependency between these two computations in any of the method implementations in the instance.
Thus, when a $\cTensor$-polymorphic function, for example:
\[
    \var{dotprod} :: \cTensor\ t \Rightarrow t\ 1\ \R \to t\ 1\ \R \to t\ 0\ \R
\]
is interpreted in the dual-numbers algebra $\tADVal$, its result \emph{is a pair} of a primal result and a $\tDelta$ term that by construction are fully separated.
This is good, because it means that we no longer have the entangling of the primal and dual halves of the AD output that we started out with at the beginning of \cref{sec:restaging}.

\subsection{Generalisation}
Looking at the $\cTensor$ instance for $\tADVal$, we notice that the primal halves of the returned pairs simply mirror the $\cTensor$ methods they are implementing: `$\name{tindex}$' maps to `$\name{tindex}$', etc., as expected.
After all, the primal computation of a derivative program performs the same computations on scalars and arrays as the original program did.
This means that we can generalise $\tADVal$: its primal component need not be an $\Array$, and could instead be any $\cTensor$ type.
Thus, as a first attempt we can try to simply parametrise $\tADVal$ on the tensor algebra used for the primal operations like this:
\[
    \mathbf{data}\ \tADVal'\ t\ k\ a = \tADVal\ (t\ k\ a)\ (\tDelta\ k)
\]
but this does not quite work.
The reason is that primal tensors end up in the $\tDelta$ term as well.
Consider the rule for multiplication: (compare the original code transformation for scalars in \cref{fig:dn-ad-terms})
\[ \begin{array}{@{}l@{}}
    \textcolor{gray}{\mathbf{instance}\ \cTensor\ \tADVal\ \mathbf{where}} \\
    \quad \tADVal\ p_1\ d_1 \times_\R \tADVal\ p_2\ d_2
        = \tADVal\ \multilineT{
            (p_1 \times_\R p_2) \\
            (\del{Share}\ \genid\ (\del{Add}\ (\del{Scale}\ p_2\ d_1)\ (\del{Scale}\ p_1\ d_2)))
        }
\end{array} \]
The $\del{Scale}$ constructor of $\tDelta$ contains a tensor from the primal half of the dual-numbers pair, so if we generalise the primal tensor type, we must generalise the tensor type in $\del{Scale}$ as well.
Let us do so:
\[ \begin{array}{@{}l@{}}
    \mathbf{data}\ \tDelta\ \textcolor{red}{t}\ k\ \mathbf{where} \\
    \quad \multilineT{
        \ccomment{... other constructors ...} \\
        \del{Scale} :: \textcolor{red}{t}\ k\ \R \to \tDelta\ \textcolor{red}{t}\ k \to \tDelta\ \textcolor{red}{t}\ k \\
        \ccomment{... other constructors ...} \\
    } \vspace{0.5em} \\
    \mathbf{data}\ \tADVal\ t\ k\ a = \tADVal\ (t\ k\ a)\ (\tDelta\ \textcolor{red}{t}\ k)
\end{array} \]
Now it becomes straightforward to lift the previous $\cTensor$ instance for $\tADVal$ to the generalised, parametrised $\tADVal$; we just need to take care to use `$\name{tconcrete}$' explicitly for injecting constants into the primal computation, and to use explicit `$\name{tshare}$' on primal terms when duplicating them:
\[ \begin{array}{@{}l@{}}
    \mathbf{instance}\ \cTensor\ t \Rightarrow \cTensor\ (\tADVal\ t)\ \mathbf{where} \\
    \quad \begin{array}{@{}l@{\;}c@{\;}l@{}}
        \textcolor{gray}{\name{tconcrete}\ a} &\textcolor{gray}{=}& \textcolor{gray}{\tADVal}\ (\name{tconcrete}\ a)\ \textcolor{gray}{\del{Zero}} \\
        \textcolor{gray}{\name{tlet}\ (\tADVal\ p\ d)\ f} &\textcolor{gray}{=}& \textcolor{gray}{f\ (\tADVal}\ (\name{tshare}\ p)\ \textcolor{gray}{d)} \\
        \textcolor{gray}{\name{tshare}\ (\tADVal\ p\ d)} &\textcolor{gray}{=}& \textcolor{gray}{\tADVal}\ (\name{tshare}\ p)\ \textcolor{gray}{d} \\
        \textcolor{gray}{\name{tindex}\ (\tADVal\ p\ d)\ \var{ix}} &\textcolor{gray}{=}& \textcolor{gray}{\tADVal\ (\name{tindex}\ p\ \var{ix})\ (\del{Share}\ \genid\ (\del{Index}\ d\ \var{ix}))} \\
        \textcolor{gray}{\name{tgather}\ \var{sh}\ (\tADVal\ p\ d)\ f} &\textcolor{gray}{=}& \textcolor{gray}{\tADVal\ (\name{tgather}\ \var{sh}\ p\ f)\ (\del{Share}\ \genid\ (\del{Gather}\ d\ f))} \\
        \ccomment{... other methods ...}
    \end{array}
\end{array} \]
Most of the code remains unchanged; the result is a parametrised algebra interpretation\footnote{Formally, this defines a homomorphism between the algebras $t$ and $\tADVal\ t$, and is an example of a derived algebra morphism.} into dual arrays.

Notable in this $\cTensor$ instance is that let-bindings are interpreted into global sharing.
The reason for this choice is quite subtle: because primal tensors end up inside $\tDelta$ terms, and $\tDelta$ terms are not lexically-scoped subterms of our primal computation (after all, we want to \emph{disentangle} the two, not interleave both in the same computation!), we cannot use `$\name{tlet}$' to interpret the sharing in the primal computation.
Now that we use global sharing instead, we can choose to scope the namespace of $\tID$s of globally shared primal values over the whole computation, not just the primal half; that way, the dual computation depends on the primal computation, but not the other way round.
Hence, the two are still disentangled.

In effect, we thus create a single namespace of $\tID$s for primal tensors over all primal and dual values in an $\tADVal$ computation, and separately a namespace of $\tID$s just within the $\tDelta$ terms.
The primal tensor $\tID$s are referenced in the primal computation as well as in the embedded tensor values inside $\del{Scale}$ in $\tDelta$, using whatever method the tensor type $t$ uses to record global sharing; the $\tID$s referring to $\tDelta$ terms are just referenced using $\del{Share}$ constructors.
These namespaces are disjoint, because the former encodes sharing of primal tensors and the latter encodes sharing of $\tDelta$ terms, which are different types.

\subsection{Instantiation}
Now is the time when we can finally solve the problem of repeated re-differentiation that we set out to solve at the beginning of \cref{sec:restaging}.
The trick is that we can instantiate this parametrised algebra interpretation ($\tADVal$) to ASTs, yielding `$\tADVal\ \tAST\ k\ a$': a pair of an AST and a $\tDelta$ term containing ASTs.
\[
    \tADVal\ \tAST\ k\ a \approx (\tAST\ k\ a, \tDelta\ \tAST\ k)
\]
We call these pairs \emph{symbolic dual arrays}.
Because we have a \cTensor instance for $\tAST$ and a parametrised one for $\tADVal$ as shown above, this instantiated type is also an instance of \cTensor.
This means that we can interpret programs into it using our $\interpret$ function!
What does the result look like?
\begin{itemize}
\item
    Where the original program took arrays as input, the reinterpreted program takes symbolic dual arrays as input.
    In particular, the reinterpreted program can be run to completion (symbolically) without supplying concrete input arguments: free-variable $\tAST$ nodes suffice.

\item
    Because of the simple design of our core language, a program returns exactly one tensor as output.
    Thus, the program output will be one symbolic dual array.

\item
    The primal half of this dual array is an $\tAST$ that, when evaluated (i.e.\ interpreted into a concrete array algebra) computes the original value of the program.
    Note that this $\tAST$ uses global sharing, so before it can be interpreted, it needs the global sharing transformed into local sharing (`\name{tshare}' into `\name{tlet}'), analogous to the conversion in \cref{sec:restaging-unglobal}.

\item
    The dual half of the output dual array is a $\tDelta$ term containing $\tAST$s (that reference values computed in the primal half): this describes the (symbolic) forward derivative of the program evaluated at the given (symbolic) inputs.
\end{itemize}

In \cref{sec:ad-dual-arrays}, evaluation of a $\tDelta$ term proceeded by passing it to $\reversePass$, which took an incoming cotangent and a (non-symbolic) $\tDelta$ term and produced a sparse gradient, which could be materialised into a full gradient in the wrapper around the algorithm.
Surprisingly, the operations that $\eval$, and hence $\reversePass$, performs on the cotangents are precisely those that comprise the core language: this works because we have designed our core language to be closed under differentiation (assuming sufficient primitive arithmetic operators).
For example, we have not only $\cc{gather}$ but also $\cc{scatter}$, and not only $\cc{sumOuter}$ but also $\cc{replicate}$.

Hence, we can generalise $\reversePass$ to work on arbitrary tensor algebras.
Doing this, we end up with the following types:
\begin{align*}
    &\begin{array}{@{}l@{}}
        \mathbf{data}\ \EState\ t = \EState \\
        \quad \{ \begin{array}[t]{@{}l@{}}
            \ \name{grad} :: \tDMap\ \tDVarName\ (\lambda k.\ t\ k\ \R) \\
            \leadcomma\ \name{dfrag} :: \tDMap\ \tID\ (\tDelta\ t) \\
            \leadcomma\ \name{accum} :: \tDMap\ \tID\ (\lambda k.\ t\ k\ \R)\ \} \\
        \end{array}
    \end{array} \\
    &\begin{array}{@{}l@{\ }l@{}}
        \reversePass &:: \cTensor\ t \Rightarrow t\ k\ \R \to \tDelta\ t\ k \to \tDMap\ \tDVarName\ (\lambda k.\ t\ k\ \R) \\
        \eval &:: \cTensor\ t \Rightarrow t\ k\ \R \to \tDelta\ t\ k \to \EState\ t \to \EState\ t \\
        \backprop &:: \cTensor\ t \Rightarrow \EState\ t \to \EState\ t
    \end{array}
\end{align*}
with basically identical implementations to those given in \cref{sec:ad-dual-arrays}.

By this point, we have a fairly complete implementation of reverse AD for our language, designed and implemented in a compositional, modular manner.
In particular, as a result of the compositionality of the design:
\begin{itemize}
\item If we remove \BOT, the AD algorithm can be written directly as a \emph{shallow embedding}\footnote{This is also known as a \emph{final} encoding of the algorithm, as opposed to an \emph{initial} encoding which goes via an initial algebra, i.e.\ an algebraic data type (the AST).}, of course losing efficient differentiation of array indexing, but gaining expressiveness of the source language (more expressive dynamic control flow) because there is no staging any more: all control flow is traced away.
\item The AD algorithm is completely decoupled from \BOT, the interface being purely an AST of the core language (\cref{fig:core-grammar}).
\end{itemize}
What is missing for a comprehensive picture is an overview of the full pipeline, and the wrapper around the algorithm that makes it usable.
These are not too hard to derive by generalizing the pipelines and wrappers from the previous sections and their implementation can be inspected in the \hordead source code.

\bibliography{bibliography}

\end{document}